\newcommand{\ie}{i.e.,\xspace}
\newcommand{\problemdef}[4]{
	\begin{center}
		\begin{minipage}{0.95\textwidth}
			\normalsize\textsc{#2} \smallskip \\
			\begin{tabularx}{\textwidth}{@{}l@{\hspace{3pt}}X}
				\normalsize\textbf{Input:} & \normalsize#3 \\
				\normalsize\textbf{#1:}    & \normalsize#4
			\end{tabularx}
		\end{minipage}
	\end{center}
}
\newcommand{\dprob}[4][Question]{\problemdef{#1}{#2}{#3}{#4}}
\newcommand{\optprob}[4][Task]{\problemdef{#1}{#2}{#3}{#4}}
\DeclareMathOperator{\vote}{vote}
\newenvironment{proofsketch}{%
  \proof}{\endproof}
\newcommand{\wone}{{\mathrm{W[1]}}}
\newcommand{\w}{{\mathrm{W}}}
\newcommand{\fpt}{{\mathrm{FPT}}}
\newcommand{\Mp}{M^\prime}
\newcommand{\hopen}{P_{\text{open}}}
\newcommand{\mquota}{m_{\text{quota}}}
\newcommand{\mopen}{m_{\text{open}}}
\newcommand{\mclosed}{m_{\text{closed}}}
\newcommand{\lmax}{l_{\text{max}}}
\newcommand{\umax}{u_{\text{max}}}
\newcommand{\degA}{\Delta_A}
\newcommand{\degH}{\Delta_P}
\newcounter{captionedequationset} 
\newdimen\captionlength
\newcommand{\captionedequationset}[1]{
   ~\refstepcounter{captionedequationset}%
    \setlength{\captionlength}{\widthof{#1}} 
    \addtolength{\captionlength}{\widthof{Equation set~\thecaptionedequationset: }}
    \ifthenelse{\lengthtest{\captionlength < \linewidth }} 
    {\begin{center}
            Preference structure~\thecaptionedequationset: #1
        \end{center}} 
    { \begin{flushleft} 
        Preference structure~\thecaptionedequationset: #1 
        \end{flushleft}}}
\newcommand{\popular}{\textsc{pop-ha$_L^U$\;}}
\newcommand{\popver}{\textsc{popv-ha$_L^U$\;}}
\newcommand{\maxpareto}{\textsc{perpo-ha$_L^U$\;}}
\newcommand{\parver}{\textsc{pov-ha$_L^U$\;}}
\newcommand{\wmatch}{\textsc{w-ha$_L^U$\;}}
\newcommand{\popularnsp}{\textsc{pop-ha$_L^U$}}
\newcommand{\popvernsp}{\textsc{popv-ha$_L^U$}}
\newcommand{\maxparetonsp}{\textsc{perpo-ha$_L^U$}}
\newcommand{\parvernsp}{\textsc{pov-ha$_L^U$}}
\newcommand{\wmatchnsp}{\textsc{w-ha$_L^U$}}
\title{Pareto optimal and popular house allocation\\ with lower and upper quotas} 
\titlerunning{Optimal house allocation with lower and upper quotas}
\begin{document}
\author{
\'Agnes Cseh\inst{1,2}%
\and
Tobias Friedrich \inst{1}%
\and
Jannik Peters\inst{3}%
}
\authorrunning{\'A. Cseh \and T. Friedrich \and J. Peters}
\institute{Hasso-Plattner-Institute, University of Potsdam, Germany \and
Institute of Economics, Centre for Economic and Regional Studies, Hungary\\
\email{agnes.cseh@hpi.de}, \email{tobias.friedrich@hpi.de}\\
\and TU Berlin, Germany,
\email{jannik.peters@tu-berlin.de}
}
\maketitle

\begin{abstract}
In the house allocation problem with lower and upper quotas, we are given a set of applicants and a set of projects. Each applicant has a strictly ordered preference list over the projects, while the projects are equipped with a lower and an upper quota. A feasible matching assigns the applicants to the projects in such a way that a project is either matched to no applicant or to a number of applicants between its lower and upper quota. 

In this model we study two classic optimality concepts: Pareto optimality and popularity. We show that finding a popular matching is hard even if the maximum lower quota is~$2$ and that finding a perfect Pareto optimal matching, verifying Pareto optimality, and verifying popularity are all \NP-complete even if the maximum lower quota is~$3$. We complement the last three negative results by showing that the problems become polynomial-time solvable when the maximum lower quota is~$2$, thereby answering two open questions of Cechl\'arov\'a and Fleiner~\cite{CF17}. Finally, we also study the parameterized complexity of all four mentioned problems.
\keywords{house allocation \and Pareto optimal matchings \and popular matchings \and complexity} 
\end{abstract}%

\section{Introduction}%

Many university courses involve team-based project work. In such courses, a set of projects is offered and each student submits a list of projects she %
finds acceptable. Ideally, the student also ranks these projects in order of her preference. Naturally, the number of students ideally assigned to a specific project strongly depends on the project itself. The lecturer responsible for the project thus might restrict the number of students to an interval. Projects that did not awake sufficient interest in the students are then dropped, while the other projects start with a number of assigned students that falls into the prescribed interval. Such quota constraints also arise in various other contexts involving the centralized formation of groups, including organizing team-based leisure activities, opening facilities to serve a community, and coordinating rides within car-sharing systems. In these and similar applications, the goal is to fulfill some optimality condition under the assumption that the number of participants for each open activity is within the prescribed limits of the activity.

\subsection{Problem formulation and solution concepts}

The mathematical formulation of this problem is known as the house allocation problem with lower and upper quotas. 
In a house allocation instance, we are given a two-sided market, where one side $A$ represents applicants, while the other side $P$ represents projects. 
Each applicant has a strictly ordered preference list of the projects she finds acceptable. Furthermore, each project $p \in P$ has a lower quota $\ell_p$ and an upper quota~$u_p$.

In a feasible matching, a project is either \emph{open} or \emph{closed}. The central feasibility requirement is that the number of applicants assigned to an open project must lie between its lower and upper quota, whilst a closed project has no assigned applicant. Each applicant is assigned to at most one project. We define the optimality of a matching with respect to the satisfaction level of the agents. In this paper, we study two well-known notions from the broad topic of matchings under preferences: Pareto optimality and popularity. 

A matching $M$ is \textit{Pareto optimal} if there is no matching $\Mp$, in which no applicant is matched to a project she considers worse, while at least one applicant is matched to a project she considers better than her project in~$M$.
A matching $M$ is \textit{popular} if there is no matching $\Mp$ that would win a head-to-head election against $M$, where each applicant casts a vote based on her preferences on her assigned project in $M$ and in~$M'$.

\subsection{Related work}

Arulselvan et al.~\cite{ACGMM18} derived several complexity results for the maximum weight many-to-one matching problem with project closures and lower and upper quotas. However, their model 
excludes agent preferences. In Table~\ref{ta:literature}
we display a structured overview of existing work in the field of matchings under preferences with lower and upper quotas.

\subsubsection{Stable matchings}
In the classic hospitals residents problem~\cite{GS62,GI89}, the underlying model is a bipartite many-to-one matching problem involving preferences on both sides, and the goal is to find a stable matching, which is a matching where no hospital-resident pair could improve their situation by being assigned to each other. This model has been combined with lower and upper quotas in several papers. %
Hamada et al.~\cite{HIM16} considered a version where hospitals cannot be closed and presented a polynomial-time algorithm to find a stable solution, while Mnich and Schlotter~\cite{MS20} studied the fixed parameter tractability of finding an approximately stable solution in no-instances identified by Hamada et al.~\cite{HIM16}. The model of Bir\'o et al.~\cite{BFIM10} permitted hospital closures, and was shown to lead to \NP-hardness. Very recently, the setting of Bir\'o et al.~\cite{BFIM10} was further investigated by Boehmer and Heeger~\cite{BH20}, who conducted a parameterized study in the original hospitals residents setting, and also in the house allocation setting, where hospitals only have a preference for filling their quota, but do not mind which applicant is assigned to them. They also answered an open question of Bir\'o et al.~\cite{BFIM10} by showing that a stable matching in the hospitals residents problem with lower quota at most 2 can be found in polynomial time. For a further overview on matchings with quotas and constraints we refer the reader to a recent survey by \citet{ABY21}. %
\begin{table}[tb]
	\begin{center}
		\begin{tabularx}{.85\textwidth}{lccc}
                \toprule			
			&  \textbf{Stability}  & \textbf{Pareto optimality} & \textbf{Popularity}  \\ 
        	\midrule
			1-sided, no project closures & open  & \cite{GHI+14} & open  \\ 
			1-sided, project closures &  \cite{BH20} & \cite{MT13,K13,CF17}, our paper & our paper\\ \hline
			2-sided, no project closures & \cite{HIM16, MS20} & \cite{SCAJ19}& \cite{ANNR18,NN18} \\ 
			2-sided, project closures &\cite{BFIM10,BH20} & open
			& open \\
			\bottomrule	
		\end{tabularx}
		\caption{Overview of the existing literature in the most related settings. The four models differ in how many of the two sides are equipped with preferences, and in the possibility of project closures. Note that stability is defined for one-sided preferences such that hospitals do not differentiate between applicants, but aim to fill their quota.}
		\label{ta:literature}
	\end{center}
\end{table}
\subsubsection{Pareto optimal matchings}
Pareto optimality is one of the most studied concepts in coalition formation and hedonic games~\cite{ABH13,BFO17,B20,EFF20}, and it has also been defined in the context of various matching markets~\cite{CEFM+14,CEFM+16,ACGS18,BG20}. 
As shown by Abraham et al.~\cite{ACMM04}, in the one-to-one house allocation model, a maximum size Pareto optimal matching can be found in polynomial time.
Pareto optimality of matchings with lower and upper quotas on projects was studied in four papers. Motivated by a school choice application with regional constraints, Goto et al.~\cite{GHI+14} analyzed the case of so-called hierarchical lower quotas that must be obeyed. Monte and Tumennasan~\cite{MT13} considered the case of  project closures with complete lists, while the model of Kamiyama~\cite{K13} allowed incomplete lists as well. In all three works, it was shown that
a Pareto optimal matching can always be found using a variant of the famous serial dictatorship algorithm. Cechl\'arov\'a and Fleiner~\cite{CF17} extended this algorithm to the case when an applicant can be assigned to more than one project. They also showed for the many-to-one case with lower and upper quotas that it is \NP-hard to compute a maximum size Pareto optimal matching if the maximum lower quota is at least $4$, furthermore that it is \NP-complete to verify if a matching is Pareto optimal if the maximum lower quota is at least $3$. This lead the authors to ask whether both of these problems stay intractable if no lower quota exceeds $2$. In this paper we show that both problems are indeed polynomial-time solvable if the maximum lower quota is~$2$, while finding a maximum size Pareto optimal matching is \NP-complete if the maximum lower quota is $3$. Regarding two-sided instances, Sanchez-Anguix et al.~\cite{SCAJ19} conducted experiments to derive an approximate Pareto optimal solution with workload balance as an additional requirement. 

\subsubsection{Popular matchings} Popularity as an optimality principle has been on the rise recently~\cite{Cse17,FKPZ19,GMSZ21} in the matchings under preferences literature. On instances with two-sided preferences, Brandl and Kavitha~\cite{BK19} and \citet{GNNR19} studied popularity for many-to-many and many-to-one matching problems with upper quotas only. %
For the model introduced in the latter paper, the complexity of %
deciding whether a popular matching exists is still open. %
Krishnapriya et al.~\cite{ANNR18} and Nasre and Nimbhorkar~\cite{NN18} investigated popular matchings in the hospital residents problem with lower and upper quotas, but without the option to close hospitals. 
They proved that whenever a feasible matching exists, a popular matching has to exist as well. 
In the house allocation setting, the problem of computing a popular matching is tractable, even if both upper quotas and applicant weights are present, as shown by Sng and Manlove~\cite{SM10}.

\subsection{Our contribution and techniques}
We provide an analysis of both Pareto optimal and popular matchings in the setting of the house allocation problem with lower and upper quotas and derive tractability results for both classic and parameterized complexity. Due to space restrictions, we only sketch the main idea of most proofs in the body of the paper, and provide the full proof in the appendix.

Table~\ref{ta:results} displays a comprehensive overview of our results for classic complexity. We answer both open questions of Cechl\'arov\'a and Fleiner~\cite{CF17} and show that a Pareto optimal matching can be verified and a perfect Pareto optimal matching can be found in polynomial time if the maximum lower quota is~$2$. Furthermore, our work initiates the study of the popular house allocation problem with lower quotas by showing that even if the maximum lower quota is~$2$ it is \NP-hard to find a popular matching. However, we also present a polynomial time algorithm to verify if a given matching is popular if no lower quota exceeds~$2$, while the same problem is shown to be \NP-hard for maximum lower quota~$3$. We then reduce all three problems to the maximum weight matching problem of Arulselvan et al.~\cite{ACGMM18}, for which we firstly observe a simple reduction to the general factor problem introduced by Dudycz and Paluch~\cite{DP18}, and secondly design a faster algorithm for our special cases by combining results from \cite{DP18} and gadget techniques established by Cornu\'{e}jols~\cite{Cor88}.%

We then identify tractable sub-cases via the power of parameterized complexity, as demonstrated by Table~\ref{ta:params}. Here we again use the connection to maximum weight matchings and show how to use a treewidth-based algorithm of Arulselvan et al.~\cite{ACGMM18} to get fixed parameter tractability when parameterized by the number of applicants. Further we give a flow-based algorithm to prove fixed parameter tractability when parameterized by $\mquota$, the number of projects with a lower quota greater than~$1$. Since these two algorithms are for the maximum weight matching problem, they also apply to a recently introduced model in the area of multi-robot task allocation by Aziz et al.~\cite{ACLRW21}. Finally, by a reduction to the parametric integer programming problem~\cite{ES08}, we also show that the problem of finding a popular matching is fixed parameter tractable when parameterized by the number of projects.

\begin{table}[tb]

	\begin{center}
			\resizebox{0.7\columnwidth}{!}{ 
		\begin{tabular}{l|c|c|c}
		    \noalign{\hrule height .3mm}
			&  $\lmax \leq 2$&  $\lmax \leq 3$ & $\hopen$  \\ \noalign{\hrule height .3mm}
			\popular & \NP-c. Thm.~\ref{thrm:pop_hard_2}  & \coNP-h. Thm.~\ref{thrm:popv_lq3} & \multirow{2}{*}{\NP-h. Cor.~\ref{corr:pop_hopen}}  \\ \cline{1-3}
			\maxpareto & \multirow{3}{*}{\P\; Cor.~\ref{cor:max_size_max_match}} & \multirow{2}{*}{\NP-c. Thm.~\ref{thrm:popv_lq3}}  &  \\ \cline{1-1}\cline{4-4}
			\popver &    &   & \multirow{2}{*}{\P \; Thm.~\ref{thrm:maxw_hopen}}\\ \cline{1-1}\cline{3-3}
			
			\parver & & \NP-c.~\cite{CF17} 
			&  \\ \noalign{\hrule height .3mm}
			
		\end{tabular}
		}	
		\caption{Overview of our results in classic complexity. The four problems studied are finding a popular matching, finding a perfect Pareto optimal matching, verifying popularity, and verifying Pareto optimality.
		The columns $\lmax \leq 2$ and $\lmax \leq 3$ indicate the cases where the maximum lower quota of any project is~$2$ or $3$, respectively. The column $\hopen$ indicates the complexity of deciding whether there is matching of our desired type that opens exactly the projects in the set $\hopen$.}
		\label{ta:results}
	\end{center}
\end{table}%

\begin{table}[tb]
	\begin{center}
			\resizebox{\columnwidth}{!}{ 
		\begin{tabular}{l|c|c|c|c|c} \noalign{\hrule height .3mm}
			&  $n$ & $m$ &
			$m_{quota}$ &
			$m_{open}$ & $m_{\text{closed}}$  \\ \noalign{\hrule height .3mm}
			\popular & \W[1]-h. Thm.~\ref{thrm:pop_n} & \FPT \; Thm.~\ref{thrm:pop_fptm} & ?  & \multirow{2}{*}{\coNP-h. Thm.~\ref{thrm:pop_mopen}} &  \multirow{4}{*}{\W[1]-h. Thm.~\ref{thrm:mapom_closed}} \\ \cline{1-4}
			
			\maxpareto &  \multirow{3}{*}{\FPT \; Cor.~\ref{corr:maxw_n}}  & \multicolumn{2}{c|}{\multirow{3}{*}{\FPT \; Thm.~\ref{prop:maxwquota}}}    &  & \\ \cline{1-1} \cline{5-5}
			
			\popver &   &   \multicolumn{2}{c|}{}
			       &  \multirow{2}{*}{\W[1]-h. Thm.~\ref{thrm:w1_par_mopen}} & \\ \cline{1-1}
			
			\parver & & \multicolumn{2}{c|}{}   &  &  \\ \noalign{\hrule height .3mm}
		\end{tabular}
		}
		\caption{Overview of our parameterized results. The columns are the parameters we use in the respective cases. The first parameter $n$ is the number of applicants, $m$ is the number of projects, and $\mquota$ is the number of projects with a lower quota greater than $1$. The parameter $\mopen$ asks for a matching that opens exactly $\mopen$ projects, while $\mclosed$ asks for a matching that closes exactly $\mclosed$ projects. }
		\label{ta:params}
	\end{center}
\end{table}%

\section{Preliminaries}
\label{sec:prelims}
In this section we formally introduce our notation and the problems we consider. 
 
\subsubsection{Input graph}
We are given a set $A$ of $n$ applicants, a set $P$ of $m$ projects, and a bipartite graph $G = (A \dot{\cup }P, E)$, with $A$ and $P$ being the two sides of the bipartition. The \emph{neighborhood} $N_v$ of a vertex $v \in A \cup P$ is defined as the set of vertices $v$ is adjacent to in~$G$, and the \emph{degree} $\deg_v$ of a vertex $v$ equals~$|N_v|$. We define $\Delta_A$ to be the maximum degree of any vertex in $A$ and $\Delta_P$ to be the maximum degree of any vertex in $P$. In our model, each project $p \in P$ is equipped with a \emph{lower quota} $\ell_p \in \mathbb{N}$ and an \emph{upper quota} $u_p \in \mathbb{N}$. We refer to the maximum lower quota among all projects as $\ell_{\max}$ and the maximum upper quota as $u_{\max}$.

\subsubsection{Matching}
A \emph{matching} $M \subseteq E$ is a set of edges so that each applicant $a \in A$ is incident to at most one edge in $M$, while each project $p \in P$ is either incident to no edge in $M$ or its degree in the graph $(A \dot{\cup }P, M)$ is at least $\ell_p$ and at most~$u_p$. If applicant $a$ is assigned to project $p$ in $M$, then we write $M(a) = p$ and $a \in M(p)$. 
A matching $M$ assigns each applicant $a$ a project in $N_a$ or $a$ itself. The notation $M(a) = a$ serves convenience and it expresses that the applicant $a$ is unmatched. Conversely, the quota requirement for the projects can be expressed as $\ell_p \le \lvert M(p) \rvert \le u_p$ or $ \lvert M(p) \rvert = 0$ for each project $p \in P$. We call a project $p \in P$ with $\lvert M(p) \rvert = 0$ \emph{closed} and a project $p$ with $\ell_p \le \lvert M(p) \rvert \le u_p$ \emph{open}.
A matching $M$ is \emph{perfect} if $M(a) \neq a$ for all $a \in A$, \ie all applicants are matched to a project in~$M$. %

\subsubsection{Preferences}
Each applicant $a \in A$ has a strict order $\succ_a$ over $N_a \cup \lbrace a \rbrace$, which we call the \emph{preference list} of $a$.  For each $a \in A$ and $p \in  N_a$ we assume that $p \succ_a a$, which translates into applicant $a$ listing only the projects that are more preferable to her than staying unmatched. If the applicant is clear from the context, we simply write $\succ$ for her preference list.

We now introduce our two optimality concepts based on applicants' preferences. Given a matching $M$, we say that matching $\Mp$ \emph{dominates} $M$ if there is no $a \in A$ with $M(a) \succ_a \Mp(a)$ and there is an $a \in A$ with $\Mp(a) \succ_a M(a)$. 
We call matching $M$ \emph{Pareto optimal} if there is no matching that dominates $M$. 
Adding on to this, we call matching $\Mp$ \emph{more popular} than matching $M$ if ${\lvert\lbrace a \in A \mid \Mp(a) \succ_a M(a) \rbrace \rvert > \lvert\lbrace a \in A \mid M(a) \succ_a \Mp(a) \rbrace \rvert}$. 
Matching $M$ is \emph{popular} if there is no other matching that is more popular than~$M$.%

We are now ready to define the four problems we tackle in this paper. The first one of these is the standard popular matching problem. For two examples regarding popularity we refer the reader to Section~\ref{subs:examples} in the appendix.
\dprob{Popular house allocation with lower and upper quotas (\popularnsp)}{Graph $G = (A \dot{\cup} P, E)$, preferences $(\succ_a)_{a \in A}$, and quotas
$\ell, u\colon P \to \mathbb N$.} {Does $G$ have a popular matching?}%
Besides this we also study the complexity of verifying whether a given matching is popular.

\dprob{Popularity verification in house allocation with lower and upper quotas (\popvernsp)}{Graph $G = (A \dot{\cup} P, E)$, preferences $(\succ_a)_{a \in A}$, quotas $\ell, u\colon P \to \mathbb N$, and matching $M$.} {Does $G$ have a matching $\Mp$ that is more popular than $M$?}
Thirdly we study the problem of finding a Pareto optimal matching covering all applicants. We remind the reader that (non-perfect) Pareto optimal matchings can be found in polynomial time using a variant of the serial dictatorship method~\cite{K13, MT13, CF17}. 

\dprob{Perfect Pareto optimal house allocation with lower and upper quotas (\maxparetonsp)}{Graph $G = (A \dot{\cup} P, E)$, preferences $(\succ_a)_{a \in A}$, and quotas $\ell, u\colon P \to \mathbb N$.} {Does $G$ have a Pareto optimal matching that matches all applicants in $A$?}
Finally we also study the verification version of Pareto optimality. 

\dprob{Pareto optimality verification in house allocation with lower and upper quotas (\parvernsp)}{Graph $G = (A \dot{\cup} P, E)$, preferences $(\succ_a)_{a \in A}$, quotas $\ell, u\colon P \to \mathbb N$, and matching $M$.} {Does $G$ have a matching $\Mp$ that dominates $M$?}

\section{Connection to weighted matchings}
\label{subs:weighted_match}
Before diving into the main theorems of our paper, we present three auxiliary lemmas in this section. These allow us to reduce our problems \popvernsp, \maxparetonsp, and \parver to the following weighted many-to-one matching problem, defined by Arulselvan et al.~\cite{ACGMM18}. 
\dprob{Weighted bipartite matching with lower and upper quotas (\wmatchnsp)}{Graph $G = (A \dot{\cup} P, E)$ with quotas $\ell,u \colon P \to \mathbb{N}$, weight function $w\colon E \to \mathbb{R}$, and a bound $W \in \mathbb{R}$.} {Is there a matching $M$ with $ \sum_{e \in M}w(e) \ge W$?} 
First, we present a lemma that allows us to reduce this problem to \wmatchnsp. 
\begin{restatable}{lem}{parweight}
	For each \maxpareto instance $\mathcal{I}$, there is a \wmatch instance $\mathcal{I}'$ on the same graph, such that each maximum weight matching in $\mathcal{I}'$ corresponds to a Pareto optimal matching in~$\mathcal{I}$, with a maximum number of matched applicants. 
	The instance $\mathcal{I}'$ can be computed in polynomial time from~$\mathcal{I}$.
	\label{lem:par_weight}
\end{restatable}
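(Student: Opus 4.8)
The plan is to encode the two-level objective ``match as many applicants as possible, and among those be Pareto optimal'' into a single linear weight function, exploiting the fact that a matching of maximum total rank-score cannot be dominated. Concretely, for every edge $(a,p) \in E$ let $\mathrm{rk}_a(p) \in \{1,\dots,\deg_a\}$ denote the position of $p$ in the list $\succ_a$, with $1$ the most preferred project, and set
\[
 w(a,p) \;=\; C - \mathrm{rk}_a(p), \qquad C := nm + 1 .
\]
Since $\mathrm{rk}_a(p) \le m < C$, every edge weight is strictly positive, so being matched is always strictly better than staying unmatched, and a better-ranked project carries a larger weight. The instance $\mathcal I'$ keeps the same graph and the same quotas as $\mathcal I$; hence the two instances have exactly the same feasible matchings, and $\mathcal I'$ is clearly computable in polynomial time. (The threshold $W$ of the \wmatch formulation is irrelevant for the stated correspondence and is fixed only when the lemma is applied.)

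I would then prove two properties of an arbitrary maximum weight matching $M^\ast$ of $\mathcal I'$. \textbf{First}, $M^\ast$ matches a maximum number of applicants among all feasible matchings. For a matching that matches $k$ applicants, write its weight as $kC - S$, where the rank-sum $S$ lies in $[k, km] \subseteq [0, nm]$. If $M_1$ matches strictly more applicants than $M_2$, then their weight difference equals $(k_1-k_2)C - (S_1 - S_2) \ge C - S_1 \ge C - nm = 1 > 0$, since $k_1 - k_2 \ge 1$, $S_1 \le k_1 m \le nm$ and $S_2 \ge 0$; thus weight strictly increases with the matched count. \textbf{Second}, $M^\ast$ is Pareto optimal: suppose a feasible matching $\Mp$ dominated it. For every applicant $a$ the value $w(a,\Mp(a))$ (taken to be $0$ when $a$ is unmatched in $\Mp$) is at least $w(a,M^\ast(a))$, because $\Mp(a) \succeq_a M^\ast(a)$ and $w$ is monotone in $\succ_a$ across all transitions; and for at least one applicant the value strictly increases. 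Summing over $A$ gives $W(\Mp) > W(M^\ast)$, contradicting the maximality of $M^\ast$. Combining the two facts, $M^\ast$ is Pareto optimal and matches the maximum number of applicants over all feasible matchings, hence in particular the maximum over all Pareto optimal matchings, which is exactly the claim.

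The only delicate point is the monotonicity bookkeeping in the domination argument: I must verify that $w$ never decreases across each of the admissible transitions (unmatched$\,\to\,$matched, worse rank$\,\to\,$better rank, and unmatched$\,\to\,$unmatched) and strictly increases on the one witnessing applicant, so that domination of $M^\ast$ forces a strictly larger total weight; the positivity $w(a,p)>0$ together with the convention that an unmatched applicant contributes $0$ makes this routine. The choice $C = nm+1$ is precisely what guarantees that gaining a single matched applicant outweighs any possible loss in the rank-sum term, faithfully realizing the lexicographic priority of matched count over rank quality; no finer estimate is needed.
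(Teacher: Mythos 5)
Your proposal is correct and takes essentially the same approach as the paper: the paper also assigns each edge a weight of the form (large constant $mn$) plus a term decreasing in rank, then argues that a maximum weight matching must first maximize the number of matched applicants (since one extra matched applicant outweighs any rank-sum loss) and must be Pareto optimal (since a dominating matching would have strictly larger weight). Your version merely spells out the arithmetic bookkeeping in more detail than the paper does.
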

\begin{proof}
	Given any $a \in A$ with preference list $p_1, \succ_a,\dots, \succ_a p_k$, we define the weight of the edge between $a$ and any $p_i$ for $i = 1, \dots, k$ to be $(k-i) + mn$.
	Let $M$ be a maximum weight matching in this new instance. 
	First $M$ has to be a maximum matching, since any larger matching would lead to a vertex being matched that was previously unmatched and thus increasing the weight of the matching by at least $mn - (n-1)m > 0$. Furthermore the matching has to be Pareto optimal, since any matching dominating it would obviously lead to a matching of larger weight. 
\qed\end{proof}

Lemma~\ref{lem:par_weight} shows that in order to check if a perfect Pareto optimal matching exists in $\mathcal{I}$, it is sufficient to find a maximum weight matching in $\mathcal{I}'$ and check if it is perfect. However, we remark that Lemma~\ref{lem:par_weight} does not hold in the reverse direction: not all perfect Pareto optimal matchings of $\mathcal{I}$ translate into a maximum weight matching in~$\mathcal{I}'$.

For \popver and \parvernsp, similar statements as first observed by Bir\'o et al.~\cite{BIM10} for one-to-one popular matchings. %
Our results state that a matching is popular / Pareto optimal if and only if it is a maximum weight matching in a certain weighted graph.
\begin{restatable}{lem}{popvermaxm}
	For each \popver/\parver instance $\mathcal{I}$ with matching $M$ there is a \wmatch instance $\mathcal{I}'$ on the same graph, such that a matching is more popular than $M$ / dominates $M$ in $\mathcal{I}$ if and only if it has a larger weight than $M$ in $\mathcal{I}'$.
	The instance $\mathcal{I}'$ can be computed in polynomial time from~$\mathcal{I}$. \label{lem:pop_ver_maxm}
\end{restatable}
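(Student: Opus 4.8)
The plan is to prove both halves of the statement by a single mechanism: on the very same graph $G$ and with the same quotas $\ell,u$, I will define a weight function $w$ whose induced gap $w(\Mp)-w(M)$ encodes exactly the comparison of interest. Since the characterisation I aim for holds matching-by-matching, the stated equivalence (and therefore the ``$M$ is popular / Pareto optimal $\iff$ $M$ is a maximum weight matching'' consequence used later) falls out immediately. The guiding principle in both cases is to give each edge $(a,p)$ a weight recording how $p$ ranks for $a$ relative to her partner $M(a)$, and then to add a correction on the edges incident to applicants \emph{matched in $M$}, so that the act of leaving such an applicant unmatched is charged correctly even though it contributes no edge to~$\Mp$.

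For the \popvernsp construction I would set the base weight of $(a,p)$ to $+1$, $0$, or $-1$ according to whether $p\succ_a M(a)$, $p = M(a)$, or $M(a)\succ_a p$, and then add $+1$ to every edge incident to an applicant with $M(a)\neq a$. Thus an applicant matched in $M$ values an improving edge at $2$, her own edge at $1$, and a worsening edge at $0$, while every edge of an applicant unmatched in $M$ has weight $1$. Splitting the applicants according to whether they are matched in $M$, in $\Mp$, in both, or in neither, a short calculation shows that each applicant contributes to $w(\Mp)-w(M)$ precisely her vote in the head-to-head comparison: $+1$ when $\Mp(a)\succ_a M(a)$, $-1$ when $M(a)\succ_a \Mp(a)$ (including the case where she is matched in $M$ but unmatched in $\Mp$), and $0$ otherwise. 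Hence $w(\Mp)-w(M)$ equals the vote margin of $\Mp$ over $M$, so $\Mp$ is more popular than $M$ if and only if $w(\Mp)>w(M)$.

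For the \parvernsp construction the comparison is no longer a plain sum, because domination additionally forbids making any applicant worse off; I would enforce this hard constraint with a penalty large enough to swamp every possible gain. Keeping weight $+1$ for improving edges and $0$ for the edge to $M(a)$, I replace the weight of a worsening edge by $-N$ and shift every edge incident to an applicant matched in $M$ by $+N$, where $N\ge n$. An applicant matched in $M$ then values an improving edge at $N+1$, her own edge at $N$, and a worsening edge (or being left unmatched) at a net $-N$, while an applicant unmatched in $M$ contributes $+1$ for any assignment. If some applicant is worse off in $\Mp$, the term $-N$ dominates the at most $n-1$ remaining unit gains, so $w(\Mp)<w(M)$; if $\Mp$ dominates $M$, then nobody is worse off and at least one applicant improves, so $w(\Mp)>w(M)$; and the only way to have nobody worse and nobody better is $\Mp=M$, giving equality. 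This yields $\Mp$ dominates $M$ if and only if $w(\Mp)>w(M)$.

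The delicate points, where I would spend the verification effort, are exactly the two uses of the shift. First, in both constructions an applicant matched in $M$ but unmatched in $\Mp$ must be counted (as a vote against $\Mp$, respectively as being worse off), yet she contributes no edge to $\Mp$; the $+1$ (respectively $+N$) shift on her $M$-edge is precisely what converts her disappearance into the correct negative contribution, and I expect this bookkeeping to be the main obstacle. Second, for domination I must fix $N$ above the largest possible number of simultaneously improving applicants -- any $N\ge n$ works -- and treat the boundary case where $\Mp$ improves nobody, arguing that ``nobody worse and nobody better'' forces $\Mp=M$ and hence equal weight, so that such a non-dominating $\Mp$ is correctly excluded by the strict inequality.
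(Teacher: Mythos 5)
Your proposal is correct and matches the paper's proof essentially exactly: for popularity you arrive at the same $\{0,1,2\}$ weights on edges of $M$-matched applicants and weight $1$ on edges of $M$-unmatched applicants, and for domination your ``base vote plus shift by $N$'' with $N=n$ reproduces the paper's weights $n+1$, $n$, $0$ (and $1$ for unmatched applicants), with the same threshold argument that a single worsened applicant outweighs all possible unit gains. The bookkeeping for an applicant matched in $M$ but unmatched in $\Mp$, which you flag as the delicate point, is handled the same way in the paper.
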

\begin{proofsketch}
    As a sketch, we show the construction for the \popver case. The exact calculations and the \parver case are in the appendix. We start by defining a modified $\vote$ function. 
    For applicant $a$ and projects $p_1, p_2 \in N_a$ let \[\vote_a(p_1, p_2) = 
    \begin{cases*}2, &if $p_1 \succ_a p_2$, \\
    1, &if  $p_1 = p_2$, \\
    0, &if  $p_2 \succ_a p_1$.
    \end{cases*}\] 
    Note that based on the definition of popularity, matching $\Mp$ is more popular than matching $M$ if $\lvert\lbrace a \in A \mid \Mp(a) \succ_a M(a) \rbrace \rvert > \lvert\lbrace a \in A \mid M(a) \succ_a \Mp(a) \rbrace \rvert$, which is equivalent to $\sum_{a~\in~A} \vote_a(\Mp(a), M(a)) > n$. 
    Further, let $U(M) \coloneqq \lbrace a \in A \mid M(a) = a\rbrace$ be the set of applicants left unmatched by~$M$.
    
    For our weighted matching instance $\mathcal{I}'$ we now take the same graph as in the \popver instance $\mathcal{I}$, and introduce the weight function $w\colon E \to \lbrace 0,1,2\rbrace$
    such that for any $a \in A \setminus U(M)$ and project $p \in  N_a$ we set $w(\lbrace a,p \rbrace) = \vote_a(p, M(a))$, further for any $a \in U(M)$ and project $p \in  N_a$ we set $w(\lbrace a,p \rbrace) = 1$.
    We claim that a matching $M'$ is more popular than $M$ in $\mathcal{I}$ if and only if 
    $w(M') > n - \lvert U(M) \rvert$ in~$\mathcal{I}'$. Since the weight of $M$ in $\mathcal{I}'$ is exactly $n-\lvert U(M)\rvert$, this is sufficient to show. The intuition behind this is how agents contribute to $w(M)$ and~$w(M')$.
    \begin{itemize}
        \item each $a \in A$ with $M'(a) \succ_a M(a)$ adds 1 to $w(M')-w(M)$
        \item each $a \in A$ with $M(a) \succ_a M'(a)$ subtracts 1 from $w(M')-w(M)$
        \item each $a \in A$ with $M'(a) = M(a)$ contributes the same to $w(M)$ and $w(M)$
    \end{itemize}
    Thus $w(M') > w(M)$ is achieved if and only if $M'$ is more popular than~$M$.%
\end{proofsketch}%

\section{Constant lower quotas}
\label{sec:constantlq}
In this section, we show that all four of our problems become intractable %
when the maximum lower quota is $3$ and the maximum degree is constant. %
Furthermore, \popular is \NP-complete even if the maximum lower quota is $2$. We contrast this result by showing that the other three problems become polynomial-time solvable if the maximum lower quota is~$2$. 

\subsection{Lower quota $3$}
\label{subs:lq3}
We begin by showing that all our problems are hard for maximum lower quota~$3$ by modifying a reduction from exact cover by 3-sets by Cechl\'arov\'a and Fleiner~\cite[Theorem~6]{CF17} for their \NP-hardness proof of \parvernsp. 
\dprob{\textsc{Exact Cover by $3$-Sets (x3c)}}{Set $X = \lbrace x_1, \dots, x_{3m} \rbrace$ and set-system  $\mathcal T = \lbrace T_1, \dots, T_n \rbrace \subseteq 2^X$ such that $\lvert T_i \rvert = 3$ for all $i \in[n]$.}{Is there a subset $T^\prime\subseteq \mathcal T$ such that $T^\prime$ is a partition of $X$?}
\begin{restatable}{thm}{popvlqthr}
The problems \popver and \maxpareto are $\NP$-complete, while \popular is \coNP-hard when $\lmax =  3 = \umax $.
	\label{thrm:popv_lq3} 
\end{restatable}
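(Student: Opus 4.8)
The plan is to establish membership first and then reduce from \textsc{x3c} using one common gadget, in which each set $T_i$ becomes a project $p_i$ with $\ell_{p_i}=u_{p_i}=3$, each element $x_j$ becomes an applicant $a_j$, and $a_j$ finds $p_i$ acceptable exactly when $x_j \in T_i$. For membership I would note that \popver is in \NP, since a matching $\Mp$ that is more popular than $M$ is a polynomial-size certificate checkable by a single vote count. For \maxpareto I would first prove the structural observation that a perfect Pareto optimal matching exists if and only if some perfect feasible matching exists: starting from any perfect feasible matching and repeatedly passing to a dominating one, no applicant ever becomes worse off, so (as being unmatched is every applicant's unique worst outcome) each matching in the chain stays perfect; since domination is irreflexive and transitive on the finite set of matchings, the chain terminates at a perfect Pareto optimal matching. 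Hence \maxpareto is equivalent to deciding the existence of a perfect feasible matching, which lies in \NP.

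For \NP-hardness of \maxpareto I would use the bare gadget with arbitrary strict preferences. A perfect feasible matching assigns all $3m$ applicants while every open project receives exactly three of them, which is precisely a partition of $X$ into triples of $\mathcal{T}$; thus a perfect matching exists if and only if the \textsc{x3c} instance is a yes-instance, giving \NP-completeness with $\ell_{\max}=u_{\max}=3$. Here perfectness does the work for free, forcing the open projects to cover every applicant \emph{exactly} once.

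The delicate part, and the main obstacle, is to recreate this exactness \emph{without} a perfectness requirement, which is needed for \popver. I would fix a reference matching $M$ that parks the element-applicants on private dummy projects, and use Lemma~\ref{lem:pop_ver_maxm} to reduce \popver to deciding whether $M$ fails to be a maximum-weight matching in the associated weighted instance. The trap is that opening even a single quota-$3$ project already relocates three applicants and could make some deviation more popular even when no exact cover exists. The technical heart is therefore to augment the gadget so that any \emph{proper} subfamily of set-projects is either infeasible or vote-neutral/vote-losing, whereas a full exact cover tips the head-to-head count in favour of the deviating matching. Crucially, this all-or-nothing behaviour has to be engineered with lower quotas at most $3$ --- which is exactly why the threshold is $3$ rather than $2$ --- by chaining the quota-$3$ projects through shared applicants so that releasing any applicant triggers a global reconfiguration. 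Once this vote arithmetic is balanced, $M$ is beaten if and only if the \textsc{x3c} instance is solvable, yielding \NP-completeness of \popver.

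For \coNP-hardness of \popular I would reduce the complement of \textsc{x3c}. The cleanest route is to arrange the instance of the previous paragraph so that $M$ is the only matching that can possibly be popular --- every other matching is already beaten by some matching --- so that a popular matching exists if and only if $M$ is popular, \ie if and only if the \textsc{x3c} instance has \emph{no} exact cover. Should this global property be awkward to guarantee directly, the fallback is to append a small fixed gadget with cyclic preferences and no popular matching, ``activated'' precisely by the exact-cover escape from $M$, so that solvability of \textsc{x3c} destroys every popular matching while unsolvability leaves $M$ popular. Either way, since \textsc{x3c} is \NP-hard, deciding whether a popular matching exists is \coNP-hard; the crux throughout remains the vote-counting construction that rewards complete covers while forbidding partial ones.
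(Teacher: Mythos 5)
Your treatment of \maxpareto is correct and essentially identical to the paper's: the observation that a perfect Pareto optimal matching exists iff a perfect feasible matching exists (via a terminating chain of dominating matchings, none of which can unmatch anyone), followed by the bare \textsc{x3c} gadget with quota-$3$ set-projects, is exactly the paper's proof of that part, and the \NP-membership arguments are fine.

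The gap is in the other two parts, which are the heart of the theorem. For \popver you correctly diagnose the obstacle --- without a perfectness requirement one must engineer the vote count so that only a \emph{complete} exact cover beats the reference matching $M$ --- but you stop at saying this ``has to be engineered'' rather than engineering it. That step is the proof, and it is not routine. The paper's construction adds, for each element $x_i$, three applicants $b_i, y_i, z_i$ and auxiliary projects $d_i$ (quota $3$) and $e_i$ (quota $2$): $b_i$ ranks the three set-projects containing $x_i$ above the fallback $d_i$, $z_i$ ranks $d_i \succ e_{i-1}$, and $y_i$ ranks $e_i \succ d_i$ or $d_i \succ e_i$ depending on the \emph{parity} of $i$, with the \textsc{x3c} instance padded so that $m$ is odd. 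The reference matching puts $\{b_i,y_i,z_i\}$ on $d_i$. When an exact cover exists, the deviation sends every $b_i$ up, every $z_i$ down, and the $y_i$ split by parity, netting exactly $+1$ because $3m$ is odd; conversely a block-by-block tally shows any deviation that does not realize a full cover has non-positive net vote. None of this parity bookkeeping, nor the specific chaining through the $e_i$ projects, appears in your proposal, so the claimed equivalence ``$M$ is beaten iff \textsc{x3c} is solvable'' is asserted, not established. (Your suggestion to route the hardness through Lemma~\ref{lem:pop_ver_maxm} is also a detour the paper does not take for this direction; the lemma is used for the algorithmic results.)

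For \coNP-hardness of \popular your ``fallback'' idea is in fact the paper's approach: each set-project $c_p$ is tripled into $c_p, c_p', c_p''$ and the three element-applicants of the set rank these three copies in cyclically shifted order, so that Observation~\ref{obs:cycl_low3} (the quota-$3$ Condorcet cycle) makes any matching opening a set-project unpopular; combined with the \popver gadget this leaves the all-$d_i$ matching as the unique candidate, which is popular iff no exact cover exists. Again, though, you name the strategy without supplying the gadget or verifying that the added copies do not disturb the vote count from the first reduction, which the paper has to check. As it stands the proposal proves one of the three claims and outlines, without completing, the other two.
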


\subsection{Lower quota $2$}
Next we complement the results of Section~\ref{subs:lq3}  %
and show that all problems except for \popular become polynomial-time solvable, while \popular remains \NP-complete for lower quota~$2$.
For this we use the general factor problem and a recent result from the world of factor theory. 
\optprob{\textsc{general factor problem}}{Graph $G = (V,E)$ with edge weights $w \colon E \to \mathbb{R}$ and demands $B_v \subseteq \lbrace 0, \dots, \lvert V \rvert\rbrace$ for each $v \in V$.}{Find a subgraph $H$ of maximum weight such that $\deg_H(v)\in B_v$, for each $v \in V$, where $\deg_H(v)$ is the degree of the vertex $v$ in $H$.} %
In a recent paper by Dudycz and Paluch~\cite{DP18}, a pseudo-polynomial algorithm was given for the restricted case where the maximum gap, \ie the maximum number of missing adjacent values in any degree list $B_v$, is at most~$1$.

\begin{proposition}[Dudycz and Paluch~\cite{DP18}]
		If there is no $v \in V$ and $p \ge 2$ such that $k \in B_v$, $k+1, \dots, k+p \notin B_v$ and $k+p+1 \in B_v$, then the \textsc{general factor problem} can be solved in $\mathcal O(W\lvert E \rvert \, \lvert V \rvert ^6)$ time, where $W$ is the maximum edge weight.  
	\label{thrm:dp18}  
\end{proposition}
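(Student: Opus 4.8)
The plan is to reduce the weighted \textsc{general factor problem} under the stated gap condition to a maximum weight perfect matching instance by means of vertex gadgets, in the spirit of Cornu\'ejols, and then to solve that instance with a weighted matching subroutine whose complexity accounts for the $W$ and polynomial factors. The structural fact that drives everything is that the gap condition --- no two or more consecutive values missing strictly between two present values of $B_v$ --- is exactly the condition under which a degree constraint is \emph{realizable by matching}: for every vertex $v$ there is a gadget $\Gamma_v$ carrying $\deg(v)$ designated port edges such that $\Gamma_v$ has a perfect matching saturating precisely a subset $S$ of its ports if and only if $|S| \in B_v$. When some $B_v$ contains a gap of length at least $2$, no such gadget exists and the problem is in fact intractable, so the hypothesis is not a convenience but the exact boundary of the method.

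First I would make the gadgets explicit. Writing $B_v$ as a union of maximal runs $[l_1,r_1],\dots,[l_t,r_t]$ with $l_{i+1}=r_i+2$, so that consecutive runs are separated by exactly one forbidden value, I would realize the contiguous behavior inside a run by attaching dummy vertices that absorb the slack between its lower and upper endpoint, and realize each isolated forbidden value between two runs by a small parity (triangle-type) gadget that forces the number of externally used ports to skip over the missing value. Each piece is correct by a finite case check, and their composition yields a $\Gamma_v$ of size polynomial in $\deg(v)$ whose admissible port counts are exactly $B_v$.

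Next I would transfer the objective and establish the correspondence. Every original edge $\{u,v\}$ becomes an edge joining a port of $\Gamma_u$ to a port of $\Gamma_v$ and keeps its weight $w(e)$, while all internal gadget edges receive weight $0$; I would then solve maximum weight perfect matching directly, so that no large offsets are introduced and $W$ remains the true maximum edge weight. Under the port-saturation characterization, perfect matchings of the augmented graph are in weight-preserving bijection with general factors $H$ of $G$: the ports left for external use at $\Gamma_v$ are precisely the edges of $H$ incident to $v$, their count lies in $B_v$, and the matched original edges contribute exactly $w(H)$ while excluded edges (matched internally) contribute $0$. Consequently a maximum weight perfect matching of the augmented graph is, edge for edge, a maximum weight general factor of $G$.

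The final step is the running time, and this is where I expect the main obstacle to lie. The augmented graph has $\mathcal{O}(|V|+|E|)$ gadget vertices and $\mathcal{O}(|E|)$ inter-gadget edges, so the bound $\mathcal{O}(W\,|E|\,|V|^6)$ should follow from running a weighted matching routine on it; the genuinely pseudo-polynomial dependence on $W$ reflects a weight-scaling or augmentation framework rather than a strongly polynomial Edmonds-style solver, and the delicate part is bounding the number of augmentation phases by $W$ and combining this with the $|V|^6$ factor coming from the blown-up gadget graph. The other error-prone point is verifying that the zero-weight internal edges can never produce spurious weight through an internal rearrangement, i.e.\ that the bijection is genuinely weight-preserving for the multi-run gadgets; checking this, rather than the high-level reduction, is the most technical portion of the argument.
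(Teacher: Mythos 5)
The paper itself gives no proof of this proposition: it is imported verbatim from Dudycz and Paluch~\cite{DP18}, whose actual argument is an iterative local-improvement algorithm built on their ``neighboring type'' theorem (quoted in this paper's appendix before Theorem~\ref{thm:fastalg}) --- start from a feasible $B$-matching, repeatedly find a heavier matching of neighboring type, and charge the iteration count to the weight, which is where the factor $W$ comes from. Your proposal instead attempts a one-shot weight-preserving reduction to maximum weight perfect matching, and this route contains a genuine, fatal gap: the cornerstone claim that gap $\le 1$ is \emph{exactly} the condition under which $B_v$ is realizable by a port gadget is false. In your stated model, where each original edge directly joins a port of $\Gamma_u$ to a port of $\Gamma_v$, it fails already for intervals: in any perfect matching of the augmented graph, every port of $\Gamma_v$ not saturated externally must be matched inside $\Gamma_v$, so $\Gamma_v$ minus the externally saturated port set $S$ has a perfect matching, forcing $\lvert I_v\rvert + \deg(v) - \lvert S\rvert$ to be even, where $I_v$ is the internal vertex set. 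Hence all achievable values $\lvert S\rvert$ share a single parity, and any $B_v$ containing two consecutive integers --- e.g.\ $\{0,1\}$ for applicants, or $\{0,\ell_p,\dots,u_p\}$ with $u_p>\ell_p$ --- is unrealizable. Tutte-style edge subdivision repairs parity for intervals and for same-parity sets, but mixed gap-$1$ sets such as $\{0,2,3\}$ (precisely the project constraints with $\ell_p = 2$ in this paper) or $\{1,2,4\}$ admit no such gadget; this is exactly why Cornu\'ejols needed a bespoke algorithm for the feasibility version rather than a reduction, and why the weighted gap-$1$ case remained open until~\cite{DP18}.

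There is also a decisive sanity check internal to this paper: if your reduction existed, running Edmonds' strongly polynomial maximum weight perfect matching algorithm on the augmented graph would eliminate the dependence on $W$ altogether --- yet the paper's ``Open questions'' section states explicitly that removing the linear factor in $W$ is open, so a correct proof along your lines would resolve that open problem. Your final paragraph in fact betrays the tension: a clean reduction to weighted perfect matching needs no ``augmentation phases bounded by $W$''; that bound only arises in the iterative-improvement architecture of~\cite{DP18}. The correct way to combine the two ingredients you invoke --- and the one this paper itself uses for $\lmax = 2$ in Theorem~\ref{thm:fastalg} --- is to deploy Cornu\'ejols-style gadgets only for degree sets that \emph{are} realizable (fixed degrees, or same-parity pairs such as $\{0,2\}$), and to handle the remaining gap-$1$ freedom by invoking the neighboring-type theorem and iterating, accepting the pseudo-polynomial factor $W$ as the price of the improvement loop.
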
%
This theorem leads to the following corollary.
\begin{restatable}{cor}{polytime}
		Given a \wmatch instance with $\lmax = 2$, a maximum weight matching can be computed in polynomial time if the highest edge weight is polynomial in the size of the graph.   
	\label{cor:max_size_max_match}
\end{restatable}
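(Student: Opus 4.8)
The plan is to cast \wmatch directly as an instance of the \textsc{general factor problem} on the same graph and then invoke Proposition~\ref{thrm:dp18}. Given a \wmatch instance $G = (A \dot{\cup} P, E)$ with quotas $\ell, u$ and weights $w$, I keep the graph and the weight function unchanged and define the demand sets as follows: for every applicant $a \in A$ set $B_a = \{0,1\}$, and for every project $p \in P$ set $B_p = \{0\} \cup \{\ell_p, \ell_p + 1, \dots, u_p\}$. Since no project can be assigned more than $n$ applicants, I may cap each $u_p$ at $n$, so that every $B_v$ is a subset of $\{0, \dots, \lvert V \rvert\}$, as the problem definition requires.

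The first step is to observe that this reduction is weight-preserving and that feasible subgraphs correspond exactly to feasible matchings. A subgraph $H \subseteq E$ satisfies $\deg_H(a) \in B_a = \{0,1\}$ for all $a$ precisely when each applicant is incident to at most one selected edge, and it satisfies $\deg_H(p) \in B_p$ precisely when each project is either closed ($\deg_H(p) = 0$) or open with $\ell_p \le \deg_H(p) \le u_p$. Hence the feasible subgraphs of the general factor instance are exactly the feasible matchings of the \wmatch instance, and since the weight function is identical, a maximum weight general factor is the same object as a maximum weight matching.

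The key step, and the place where the hypothesis $\lmax = 2$ is used in an essential way, is checking that every demand set has maximum gap at most $1$, so that Proposition~\ref{thrm:dp18} applies. For applicants the set $B_a = \{0,1\}$ is an interval and hence gap-free. For a project $p$ with $\ell_p \le 1$ the set $B_p = \{0, 1, \dots, u_p\}$ is again an interval. The only remaining case is $\ell_p = 2$, for which $B_p = \{0\} \cup \{2, 3, \dots, u_p\}$: here exactly one value, namely $1$, is missing between the two consecutive present values $0$ and $2$, so the forbidden pattern of Proposition~\ref{thrm:dp18} (a run of at least two missing values flanked by present values) does not occur. This is exactly the threshold that breaks at $\ell_p = 3$, where $B_p = \{0\} \cup \{3, \dots, u_p\}$ would leave the two consecutive values $1$ and $2$ missing, producing a gap of size~$2$; this is the structural reason the tractability boundary sits between lower quota~$2$ and~$3$.

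With the gap condition verified, Proposition~\ref{thrm:dp18} computes a maximum weight general factor---equivalently a maximum weight matching---in time $\mathcal{O}(W \lvert E \rvert \, \lvert V \rvert^6)$, where $W$ is the largest edge weight and $\lvert V \rvert = n + m$. Under the assumption that the highest edge weight is polynomial in the size of the graph, this bound is polynomial, which gives the claim. The only obstacle worth flagging is the pseudo-polynomial dependence on $W$ in Proposition~\ref{thrm:dp18}: the argument yields genuine polynomial time only because the edge weights are assumed polynomially bounded, as they are, for instance, in the instances produced by Lemma~\ref{lem:par_weight} and Lemma~\ref{lem:pop_ver_maxm}.
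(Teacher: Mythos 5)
Your proposal is correct and takes essentially the same route as the paper: the paper's proof is exactly this reduction to the \textsc{general factor problem} with $B_a = \{0,1\}$ and $B_p = \{0, \ell_p, \dots, u_p\}$, followed by the observation that $\ell_p \le 2$ keeps every gap at size at most $1$ so that Proposition~\ref{thrm:dp18} applies. Your write-up merely spells out the correspondence between feasible subgraphs and feasible matchings and the role of the polynomial weight bound in more detail than the paper's two-sentence argument.
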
 
\begin{proof}
    This immediately follows by setting $B_a = \lbrace 0,1\rbrace$ for each applicant $a \in A$ and $B_p = \lbrace 0,\ell_p, \dots, u_p\rbrace$ \footnote{Note that in the paper of Cechl\'arov\'a and Fleiner~\cite{CF17}, the applicants technically had capacities as well, which can be easily implemented by setting $B_a = \lbrace 0, \dots, c(a) \rbrace$ for each applicant $a$ with capacity~$c(a)$.} 
    for each project $p \in P$. Since $\ell_p \leq 2$, the gaps are of size at most~$1$.
\qed\end{proof}

Next we show how to combine the methods of Cornuéjols~\cite{Cor88} and Dudycz and Paluch~\cite{DP18} to design a faster algorithm for our special case, which is also easier to implement by giving a Turing reduction to the maximum weight matching problem in graphs without quotas.
This reduction is achieved by constructing gadgets similar to the one designed by Cornuéjols~\cite{Cor88} and by 
exploiting a key lemma of Dudycz and Paluch~\cite{DP18} on the existence of augmenting paths and the structure of larger weight matchings. 

\begin{restatable}{thm}{fastalg}
	\popver and \parver can be solved in $\mathcal{O}(\lvert V \rvert^3\lvert E\rvert)$ time, while \maxpareto can be solved in $\mathcal{O}(\lvert V \rvert^3\lvert E\rvert^2)$ time if $\lmax = 2$.
	\label{thm:fastalg}
\end{restatable}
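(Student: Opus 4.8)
The plan is to chain together the three reductions already established and then turn the resulting quota-constrained weighted matching problem into an ordinary maximum weight matching problem in a graph without quotas. By Lemma~\ref{lem:pop_ver_maxm} a \popver (resp.\ \parver) instance is equivalent to deciding whether a certain \wmatch instance admits a matching of weight larger than that of $M$, and by Lemma~\ref{lem:par_weight} a \maxpareto instance reduces to computing a maximum weight matching in a \wmatch instance and testing whether it is perfect. In all three cases the produced \wmatch instance has $\lmax = 2$, so it suffices to solve \wmatch fast in this regime.

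The core of the argument is a Turing reduction from \wmatch with $\lmax = 2$ to maximum weight matching in an auxiliary graph $G'$ without quotas. Applicants have admissible degree set $\{0,1\}$ and remain ordinary matching vertices. For every project $p$ with $\ell_p \le 1$ the admissible degree set $\{0,1,\dots,u_p\}$ is an interval, which I handle by the standard reduction modelling a capacitated vertex through vertex splitting. For every project $p$ with $\ell_p = 2$ the admissible set is $\{0\}\cup\{2,\dots,u_p\}$, i.e.\ an interval with a single gap at the value~$1$; here I attach a Cornu\'{e}jols-style gadget whose matchings realize exactly the degrees in this set, the key property being that it forbids $p$ from being matched to precisely one applicant. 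Since $u_p \le \deg_p$ and $\sum_p \deg_p = |E|$, the gadgets enlarge the graph only by $\mathcal{O}(|E|)$ vertices and edges, so $G'$ has polynomial size. I then transport the weights of the original edges to $G'$ and give every auxiliary gadget edge weight $0$, so that the weight of a matching in $G'$ equals the weight of the feasible matching it encodes in the original instance.

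The delicate point, and the main obstacle, is correctness in the \emph{weighted} setting: a plain factor reduction only preserves the existence of feasible subgraphs, whereas here I must guarantee that a maximum weight matching of $G'$ projects onto a maximum weight feasible matching of the original instance, and more finely that the weight order is preserved (this is exactly what \popver and \parver require). To establish this I will invoke the structural lemma of Dudycz and Paluch~\cite{DP18} on weight-augmenting alternating walks: a feasible matching fails to be of maximum weight precisely when such a walk exists, and each gadget is designed so that weight-augmenting walks in the original instance are in bijection with weight-augmenting alternating paths in $G'$. Combining this with the fact that our gadgets have maximum gap $1$, the regime covered by Proposition~\ref{thrm:dp18}, yields the required weight-preserving equivalence.

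Finally, the running times follow by executing an augmenting-path based maximum weight matching routine on $G'$. For \popver and \parver the edge weights lie in $\{0,1,2\}$ by Lemma~\ref{lem:pop_ver_maxm}, so the total weight of any matching, and hence the number of weight-increasing augmentations, is bounded by $\mathcal{O}(|V|)$; tracking the cost of each search in the gadget graph yields the bound $\mathcal{O}(|V|^3|E|)$. For \maxpareto the weights supplied by Lemma~\ref{lem:par_weight} reach $\Theta(mn)$, so this constant-weight argument no longer applies and the larger weight range is absorbed into an additional factor of $|E|$, giving $\mathcal{O}(|V|^3|E|^2)$.
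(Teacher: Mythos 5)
Your high-level plan (reduce via Lemmas~\ref{lem:pop_ver_maxm} and~\ref{lem:par_weight} to \wmatch with $\lmax=2$, then Turing-reduce to ordinary maximum weight matching via Cornu\'{e}jols gadgets) points in the right direction, but the step you yourself flag as ``the delicate point'' is a genuine gap, not a technicality. You posit a single gadget whose matchings ``realize exactly the degrees in'' $\{0\}\cup\{2,\dots,u_p\}$ and then one maximum weight matching computation on $G'$. No such gadget exists: in a reduction where each project is replaced by a fixed internal vertex set and correctness is read off perfect matchings of the gadget graph, the realizable external degrees all have the same parity as $\lvert V(H_p)\rvert$, whereas $\{0\}\cup\{2,3,\dots,u_p\}$ mixes parities across a gap. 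More tellingly, if a weight-preserving one-shot reduction of this kind existed, \wmatch with $\lmax=2$ (equivalently, the weighted general factor problem with gap $1$) would be solvable in strongly polynomial time --- precisely the question the paper lists as open, and the reason Proposition~\ref{thrm:dp18} carries a pseudo-polynomial factor $W$. Invoking the Dudycz--Paluch augmenting-walk machinery to ``establish a bijection'' does not repair this; their structure theorem is not a statement about a single auxiliary matching instance.

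The paper's proof avoids this by using the Dudycz--Paluch result differently: a matching of larger weight exists if and only if one of \emph{neighboring type} to the given matching $M$ exists, and for $\lmax=2$ such matchings are tightly constrained (Observation~\ref{obs:neigh_type}): arbitrarily many quota-$2$ projects may flip between degree $0$ and degree $2$, plus \emph{at most one} further local change (one project changing its degree by $\pm 2$, or two vertices changing degree by $\pm 1$). The algorithm enumerates the $\mathcal{O}(\lvert V\rvert^2)$ choices for that single special change and, for each, builds a gadget graph in which every degree set is either a fixed value or $\{0,2\}$ --- both realizable by matching gadgets --- and solves a maximum weight perfect matching instance in $\mathcal{O}(\lvert V\rvert\lvert E\rvert)$ time. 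This enumeration is exactly where the $\lvert V\rvert^2$ factor in the stated bound comes from, which your single-shot construction cannot account for. (Your running-time accounting is also off target: \popver and \parver need only one such verification, and \maxpareto is handled by $n$ augmentation rounds to reach a perfect matching followed by $\mathcal{O}(\lvert E\rvert)$ domination rounds, each round being one verification call.)
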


Following this, we show that it is $\NP$-complete to determine whether a popular matching exists if $\lmax = 2 = \umax$. For this we use a simple graph transformation with which popular matchings in non-bipartite instances translate to popular matchings with lower quota~$2$.
\begin{restatable}{thm}{popnpc}
	\popular is $\NP$-complete even if $\lmax = 2 = \umax$ and $\degH = 2$. 
	\label{thrm:pop_hard_2}
\end{restatable}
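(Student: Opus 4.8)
The plan is to reduce from the problem of deciding whether a \emph{roommates} instance admits a popular matching, which is \NP-complete even with strict preferences~\cite{FKPZ19}. Such an instance consists of an undirected graph $H = (V, F)$ together with a strict preference order for each vertex over its neighbours, and a popular matching is sought among the ordinary (vertex-disjoint edge) matchings of $H$, where each vertex votes according to its assigned partner. The key observation driving the reduction is that a project $p$ with $\ell_p = u_p = 2$ behaves exactly like a potential edge: it is either closed or filled by precisely two applicants, and those two applicants must both lie in $N_p$.

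First I would build the \popular instance as follows. The applicants are the vertices of $V$, and for every edge $f = \{x, y\} \in F$ I introduce one project $p_f$ with $\ell_{p_f} = u_{p_f} = 2$ and $N_{p_f} = \{x, y\}$. Each applicant $x$ is thus adjacent only to the projects corresponding to edges incident to $x$ in $H$, which immediately gives $\degH = 2$ and $\lmax = \umax = 2$. The preference list of applicant $x$ over these projects is inherited from the roommates preferences: $x$ ranks $p_{\{x,y\}}$ above $p_{\{x,z\}}$ exactly when $x$ prefers partner $y$ to partner $z$ in $H$, and staying unmatched is least preferred in both instances.

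Next I would establish a bijection between the feasible matchings of the two instances. Because $\ell_{p_f} = u_{p_f} = 2$ and $\lvert N_{p_f}\rvert = 2$, an open project $p_f$ must be matched to both endpoints of $f$, and since each applicant is matched to at most one project, the set of open projects forms a vertex-disjoint set of edges, \ie a matching of $H$; conversely every matching of $H$ yields a feasible \popular matching. Under this correspondence applicant $x$ is matched to $p_{\{x,y\}}$ precisely when $x$ is matched to $y$ in $H$, so for any two corresponding pairs of matchings the set of applicants who strictly prefer one assignment to the other is identical on both sides. Hence one matching is more popular than another in the roommates instance if and only if the corresponding statement holds in the \popular instance, and therefore a popular matching exists in $H$ if and only if one exists in the constructed instance.

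Finally I would argue membership in \NP. A priori this is not obvious, since verifying popularity is \coNP-hard once $\lmax = 3$ (Theorem~\ref{thrm:popv_lq3}); however, for $\lmax = 2$ the verification problem \popver is solvable in polynomial time by Theorem~\ref{thm:fastalg}, so a nondeterministic algorithm may guess a matching and verify its popularity efficiently, placing \popular in \NP in this regime. The main obstacle I anticipate is not the reduction itself, which is essentially a direct encoding, but checking that the vote-by-vote correspondence is exact: one must confirm that no feasible matching of the house allocation instance can partially fill a project or assign a non-endpoint applicant, so that the only feasible matchings are the images of roommates matchings. The constraint $\ell_{p_f} = u_{p_f} = 2$ together with $\lvert N_{p_f}\rvert = 2$ is precisely what forecloses these degenerate possibilities and makes the equivalence tight.
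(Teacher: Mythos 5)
Your reduction from the popular roommates problem — one applicant per vertex, one project with $\ell = u = 2$ per edge, inherited preferences, and the bijection between matchings — is exactly the construction the paper uses, and your \NP-membership argument via polynomial-time verification for $\lmax = 2$ matches the paper's appeal to Corollary~\ref{cor:max_size_max_match}. The proposal is correct and takes essentially the same approach as the paper.
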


\section{Parameterized complexity}
\label{sec:parcomp}
In this section we study the parameterized complexity of our four problems with regard to five different parameters. These parameters are identical to the ones used by Boehmer and Heeger~\cite{BH20}: $n$, the number of applicants, $m$, the number of projects, $\mquota$, the number of projects with a lower quota greater than $1$, $\mclosed$, the number of closed projects in the matching, and finally, $m_{\text{open}}$, the number of open projects in the matching. In real life instances, one would expect the number of projects being allocated to be small in comparison to the number of students. Depending on the instance, either $m_{\text{open}}$ or $\mclosed$ would seem like a very suitable candidate for a parameterized algorithm. On the other hand, while $n$ is often large, our $\wone$-hardness result for this parameter for \popular eliminates the possibility of any fixed parameter algorithm for any smaller, maybe more realistic, sub-parameter of~$n$. %
For a brief introduction to parameterized complexity we refer to Section~\ref{subs:parameterized_comp} in the appendix. 
\subsection{Parameterization by $n$}
First we observe the following theorem from Arulselvan et al.~\cite[Theorem 4]{ACGMM18}, which was later improved by \citet{MSS21}. %
\begin{proposition}[Arulselvan et al.~\cite{ACGMM18}]
	In an instance $\mathcal{I}$ of \wmatch such that the underlying graph has a treewidth $\mathbf{tw}$, a maximum weight matching can be found in $\FPT$ time in $\mathbf{tw} + \umax$. 
\end{proposition}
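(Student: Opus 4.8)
The plan is to solve \wmatch by dynamic programming over a tree decomposition of the input graph $G$. First I would obtain a nice tree decomposition of width $\mathbf{tw}$; since an optimal (or constant-factor approximate) decomposition can be computed in $\FPT$ time in $\mathbf{tw}$, this preprocessing does not affect the claimed bound. I would then refine it so that, besides the usual leaf, introduce-vertex, forget-vertex, and join nodes, every edge of $G$ is handled by a dedicated \emph{introduce-edge} node placed strictly below the forget node of both of its endpoints. By the subtree-connectivity property this guarantees that, by the time a vertex is forgotten, every edge incident to it has already been offered to the DP.

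The DP state at a node $t$ records, for each vertex $v$ in the bag $X_t$, the number $f(v)$ of selected edges currently incident to $v$ among the edges introduced in the subtree rooted at $t$; here $f(a) \in \{0,1\}$ for an applicant $a$ and $f(p) \in \{0,1,\dots,u_p\}$ for a project $p$. The value $\mathrm{dp}[t][f]$ is the maximum total weight of such a partial selection that is already feasible for every vertex lying strictly below $t$, meaning that each forgotten applicant is incident to at most one edge and each forgotten project $p$ has final degree in $\{0\} \cup \{\ell_p,\dots,u_p\}$. The transitions are routine: an introduce-vertex node sets the new coordinate to $0$; at an introduce-edge node for $\{a,p\}$ I take the better of not using the edge and of using it (the latter decrementing $f(a)$ and $f(p)$ and adding $w(\{a,p\})$); a forget-vertex node for $v$ maximises over the forgotten coordinate, but only over values that form a feasible final degree of $v$; and a join node adds the two child degree vectors coordinatewise, discarding any combination that exceeds some $u_p$. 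The answer is read off at the empty root bag.

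Correctness follows by the standard induction over the decomposition: each edge is eligible exactly once (at its introduce-edge node), and all edges at $v$ are seen before $v$ is forgotten, so the final-degree test applied at the forget node is applied to the true degree of $v$ in the full selection. For the running time, a bag holds at most $(\umax+2)^{\mathbf{tw}+1}$ states, and the dominant cost is the join node, where pairing compatible child states costs at most $(\umax+2)^{2(\mathbf{tw}+1)}$; multiplying by the polynomial number of nodes yields a total running time of $(\umax+2)^{\mathcal{O}(\mathbf{tw})} \cdot \mathrm{poly}(|V|)$, which is $\FPT$ in $\mathbf{tw}+\umax$.

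The only genuinely non-standard point, and the step I expect to require the most care, is the lower quota, which makes the per-project constraint non-convex: the value $0$ is allowed, then a forbidden band $1,\dots,\ell_p-1$, and then the admissible band $\ell_p,\dots,u_p$. Because this is a constraint on the \emph{final} degree of a project rather than on its partial degree, it cannot be validated incrementally. The clean resolution is to track the running degree up to $u_p$ throughout and to test the gap only once, at the forget node, where by construction the running degree already equals the final degree. Everything else is the textbook matching DP over tree decompositions.
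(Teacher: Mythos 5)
The paper does not prove this statement at all: it is imported verbatim as a citation of Arulselvan et al.~\cite{ACGMM18} (their Theorem~4), so there is no in-paper argument to compare against. Your proposal is a correct, self-contained proof by the standard edge-introducing dynamic program over a nice tree decomposition, and it is essentially the argument one expects behind the cited result. You correctly isolate the only non-routine point, namely that the admissible degree set $\{0\}\cup\{\ell_p,\dots,u_p\}$ of a project is non-convex and therefore cannot be enforced incrementally; deferring the membership test to the forget node, where the recorded partial degree coincides with the final degree because all incident edges have been introduced below, resolves this cleanly and is exactly why the state space needs only $u_{\max}+1$ values per project rather than anything depending on $\ell_p$. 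Two cosmetic remarks: at the introduce-edge node you mean that the table entry for the profile $f$ using the edge is read from the child entry with $f(a)$ and $f(p)$ decremented (i.e., the degrees are incremented going up), and at the join node the coordinatewise sum must also be discarded when it exceeds $1$ for an applicant, though this is already implicit in your choice of state space. With those readings, the induction and the $(u_{\max}+2)^{\mathcal{O}(\mathbf{tw})}\cdot\mathrm{poly}(|V|)$ bound are sound, which is $\FPT$ in $\mathbf{tw}+u_{\max}$ as claimed.
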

From this proposition, the inequalities $\mathbf{tw} \le \min(n,m)$ and $\umax \le n$, and our results in Section~\ref{subs:weighted_match} follows the fixed parameter tractability with regard to the parameter~$n$.
\begin{restatable}{cor}{fptn}
\wmatchnsp, \popvernsp, \parvernsp, and \maxpareto are in $\FPT$ when parameterized by $n$. 
	\label{corr:maxw_n}
\label{prop:maxw_quota}
\end{restatable}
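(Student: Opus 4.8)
The plan is to prove the claim for \wmatchnsp directly from the treewidth-based algorithm of Arulselvan et al., and then transfer the tractability to the other three problems through the weight-preserving reductions of Section~\ref{subs:weighted_match}.

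First I would establish that \wmatchnsp itself lies in $\FPT$ when parameterized by $n$. The key observation is that both quantities appearing in the parameter $\mathbf{tw} + \umax$ of the above proposition are bounded by a function of $n$. On one hand, since the instance has only $n$ applicants, no open project can receive more than $n$ assignments; hence any project with $u_p > n$ may have its upper quota truncated to $n$ without changing the set of feasible matchings, so we may assume $\umax \le n$. On the other hand, the input graph $G$ is bipartite with $\lvert A \rvert = n$, and placing all of $A$ together with a single vertex of $P$ in each bag yields a path decomposition of width at most $n$, so $\mathbf{tw} \le \min(n,m) \le n$. Consequently $\mathbf{tw} + \umax \le 2n$, and the cited algorithm runs in time $f(n)\cdot\mathrm{poly}(\lvert \mathcal I \rvert)$, that is, in $\FPT$ time parameterized by $n$.

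Next I would lift this to the remaining problems. By Lemma~\ref{lem:par_weight}, a \maxpareto instance $\mathcal I$ can be turned in polynomial time into a \wmatch instance $\mathcal I'$ on the same graph whose maximum weight matchings are exactly the Pareto optimal matchings maximizing the number of matched applicants; to decide \maxpareto it therefore suffices to compute one maximum weight matching in $\mathcal I'$ and test whether it is perfect. Similarly, by Lemma~\ref{lem:pop_ver_maxm}, each \popver/\parver instance yields in polynomial time a \wmatch instance on the same graph in which a matching is more popular than / dominates $M$ if and only if its weight exceeds $w(M)$; so it is enough to compute a maximum weight matching and compare its weight with $w(M)$. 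In all three reductions the number of applicants $n$ is preserved, as the graph is left unchanged, so the $\FPT$ algorithm for \wmatchnsp parameterized by $n$ applies verbatim.

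The only point requiring care is that the edge weights produced by the reductions stay polynomially bounded, so that invoking the treewidth algorithm gives an honest $\FPT$ running time rather than a merely $\mathrm{XP}$ or pseudo-polynomial one. This is immediate: Lemma~\ref{lem:pop_ver_maxm} uses weights in $\lbrace 0,1,2\rbrace$, and the weights $(k-i)+mn$ of Lemma~\ref{lem:par_weight} are all $\mathcal O(mn)$, hence polynomial in the input size. Since the reductions run in polynomial time and preserve $n$, composing them with the $\FPT$ algorithm for \wmatchnsp establishes the corollary for all four problems.
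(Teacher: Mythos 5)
Your proposal is correct and follows essentially the same route as the paper: it invokes the treewidth-based algorithm of Arulselvan et al.\ together with the bounds $\mathbf{tw}\le\min(n,m)$ and $\umax\le n$, and then transfers tractability to the other three problems via Lemmas~\ref{lem:par_weight} and~\ref{lem:pop_ver_maxm}. The added remarks on truncating upper quotas and on the polynomial boundedness of the constructed weights are valid and only make the argument more explicit (noting that the \parvernsp{} weights lie in $\lbrace 0,1,n,n+1\rbrace$ rather than $\lbrace 0,1,2\rbrace$, which is still polynomial).
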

While this result relies on the machinery of tree decompositions and treewidth, we further complement this result by showing that the maximum weight matching admits a kernelization if the weights are encoded in unary. For this we exploit that if a subset of applicants could be matched to more than $n$ different projects achieving the same weight, then we can delete at least one of these projects and not change the maximum weight we can reach in the instance. 

\begin{restatable}{obs}{kerneln}
An instance of \wmatch with maximum weight $W$ admits a kernelization with  $\mathcal{O}(2^n n^2 W)$ applicants and projects.
\end{restatable}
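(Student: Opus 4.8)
The plan is to leave the $n$ applicants untouched and to shrink only the set of projects, since $|A|=n$ already respects the claimed bound. I would start from the structural fact that any feasible matching opens at most $n$ projects: a project matched to the empty set contributes nothing and may be treated as closed, every genuinely open project absorbs at least one applicant, and the applicant sets of distinct open projects are disjoint. Consequently each open project plays a well-defined \emph{role} in the matching, namely it hosts some nonempty subset $S\subseteq N_p$ with $\ell_p\le |S|\le u_p$ and thereby contributes weight $\omega=\sum_{a\in S}w(\{a,p\})$; I would record this role as the pair $(S,\omega)$, and call a project $p$ a \emph{realizer} of $(S,\omega)$ whenever $S\subseteq N_p$, $\ell_p\le|S|\le u_p$ and $\sum_{a\in S}w(\{a,p\})=\omega$.

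Next I would bound the number of distinct roles. There are fewer than $2^n$ nonempty subsets of $A$, and for a subset of size at most $n$ the contributed weight lies in $\{0,1,\dots,nW\}$, where $W$ is the maximum edge weight and we assume, as above, that the weights are nonnegative integers given in unary. Hence there are at most $2^n(nW+1)$ roles in total. The reduction rule is then a marking step: for every role $(S,\omega)$ mark $\min\{n,\,r_{S,\omega}\}$ of its realizers, where $r_{S,\omega}$ is their number, and afterwards delete every project that remains unmarked. Since each of the at most $2^n(nW+1)$ roles marks at most $n$ projects, the number of surviving projects is $\mathcal O(2^n n^2 W)$, which together with the $n$ applicants yields the stated kernel size.

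It remains to show that the rule preserves the maximum attainable weight, and this is the step I expect to be the main obstacle, because a single project is typically a realizer of many roles and the roles therefore cannot be reduced independently. Deleting projects can only lower the optimum, so I only need to reconstruct an equally heavy matching in the reduced instance out of an optimal matching $M$ of the original one. Let $M$ open projects $p_1,\dots,p_k$ with $k\le n$, where $p_i$ realizes the role $(S_i,\omega_i)$. For every role whose project $p_i$ survived I keep $p_i$; call the remaining roles \emph{orphaned}, and let $d$ be their number. Whenever a role is orphaned, its realizer $p_i$ was left unmarked, which by the rule can only happen once $n$ other realizers of that role have already been marked, so each orphaned role has at least $n$ surviving realizers.

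Finally I would assign distinct surviving realizers to the orphaned roles while avoiding the kept projects. After discarding the at most $k-d$ kept projects from the candidate lists, each orphaned role still has at least $n-(k-d)\ge d$ admissible realizers, because $k\le n$. Since every one of the $d$ candidate lists has size at least $d$, Hall's condition holds for this family, so a system of distinct representatives exists; reassigning the applicant set $S_i$ to the representative chosen for each orphaned role, and keeping the original assignment on the surviving roles, produces a feasible matching of weight exactly $w(M)$ in the reduced instance. Keeping $n$ realizers per role, rather than a single one, is precisely what drives the Hall argument, and it reflects the fact that a matching never uses more than $n$ projects simultaneously.
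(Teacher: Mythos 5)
Your proposal is correct and follows essentially the same route as the paper: group projects by the pair (subset of applicants they could host, resulting weight), mark $n$ realizers per group, delete the rest, and count $2^n\cdot O(nW)\cdot n$ survivors. The only difference is presentational — where the paper iteratively swaps each unmarked open project for a closed marked realizer of the same role (possible since at most $n$ projects are open at any time), you certify all the replacements at once via Hall's condition; both arguments are sound.
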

Even though it follows from the above observation that the other three considered problems become fixed parameter tractable when parameterized by $n$, \popular remains $\wone$-hard. We show this by modifying of the proof of Boehmer and Heeger~\cite[Theorem~2]{BH20}, who reduce from the classic \textsc{multicolored independent set} problem to prove that finding a stable matching is W[1]-hard when parameterized by~$n$.
\begin{restatable}{thm}{popnwone}
	\popular parameterized by $n$ is $\wone$-hard.
    \label{thrm:pop_n}
\end{restatable}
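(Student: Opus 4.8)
The plan is to prove \W[1]-hardness of \popular parameterized by $n$ by reducing from \textsc{Multicolored Independent Set}, following the template of Boehmer and Heeger~\cite{BH20} but adapting their stability gadgets to the popularity setting. Recall that an instance of \textsc{Multicolored Independent Set} consists of a graph $H$ whose vertex set is partitioned into $k$ color classes $V_1, \dots, V_k$, and we ask whether $H$ contains an independent set with exactly one vertex from each class. This problem is \W[1]-hard parameterized by $k$, so to obtain \W[1]-hardness in $n$ it is essential that the number of applicants $n$ in the constructed \popular instance is bounded by a function of $k$ alone (and does \emph{not} grow with $|V(H)|$); the large combinatorial object $H$ must instead be encoded into the preference lists, quotas, and the project side $P$, which may be arbitrarily large.

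First I would introduce a small number of \emph{selector} applicants---roughly $O(k)$ of them, say one or a constant number per color class---whose role is to ``choose'' a vertex of the corresponding class. The choice is encoded by which project a selector applicant is matched to, with one project per vertex $v \in V_i$ (so $|P|$ is large but $n$ is small). I would set lower and upper quotas so that the chosen projects must be opened to exactly the right size, using the flexibility of the quota mechanism to force a consistent selection. Next I would design an \emph{edge-checking} mechanism: for each edge of $H$, I need the instance to admit a more popular matching whenever the selection picks both endpoints of that edge, and to admit no more popular matching when the selection is a genuine independent set. Because popularity is a global head-to-head voting condition (Lemma~\ref{lem:pop_ver_maxm} lets me think of this in terms of weighted matchings), the natural approach is to build a penalty/blocking structure so that a conflicting pair of selections creates an alternating improvement that a rival matching can exploit to gain net votes, whereas an independent selection leaves the candidate matching unbeatable.

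The key technical step is arguing the two directions of the reduction. For the forward direction, given a multicolored independent set, I would exhibit a concrete matching $M$ and argue it is popular, i.e.\ that no rival $\Mp$ can win a head-to-head vote; here I would partition the applicants by how they vote and show the net vote against $M$ is never positive, exploiting that no edge gadget is ``activated'' by an independent selection. For the reverse direction, I would show that any popular matching must encode a valid one-per-class selection (consistency gadgets rule out partial or double selections by making such matchings beatable) and that this selection must be independent (an activated edge gadget always yields a more popular matching).

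The hard part will be the quantitative vote-counting in the edge gadgets: I must ensure that an activated conflict produces a \emph{strict} majority improvement for some rival matching, while simultaneously guaranteeing that in the non-conflicting case every possible rival loses or ties, all while keeping $n$ bounded by a function of $k$ and keeping every project's lower quota feasible. Balancing these competing vote budgets---so that the gadgets neither accidentally make the intended popular matching beatable nor leave a conflicting configuration unbeatable---is the delicate part, and is exactly where the adaptation from the stability argument of Boehmer and Heeger to the popularity setting requires the most care.
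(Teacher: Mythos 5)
Your overall plan coincides with the paper's: both reduce from \textsc{multicolored independent set}, keep $n$ bounded by a function of $k$ (the paper uses $\mathcal{O}(k^2)$ applicants) while pushing the size of the graph into the project side, encode the per-color vertex choice by which project a small group of applicants is matched to, and use edge gadgets that are ``activated'' exactly when two adjacent vertices are selected. The difficulty is that your write-up stops precisely where the proof begins: you correctly identify the vote-budget balancing as the delicate part, but you do not supply any mechanism that resolves it, and those mechanisms are the entire content of the argument. As it stands, the claims ``consistency gadgets rule out partial or double selections'' and ``an activated edge gadget always yields a more popular matching'' are assertions about gadgets that have not been constructed.

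Concretely, three devices are missing. First, \emph{vote cancellation by reversal}: the paper introduces, for each color $c$, two selection applicants $a_c^1,a_c^2$ and two enforcing applicants $\hat a_c^1,\hat a_c^2$ whose preference lists enumerate the vertex (and edge) projects of $V_c$ in mutually opposite orders, so that in the intended matching any rival that improves one member of a pair necessarily worsens the other; this pairing, not a global count, is what certifies popularity in the forward direction. Second, a \emph{Condorcet-triple gadget}: three unit-quota dummy projects $p_1\succ p_2\succ p_3$ listed identically by two dummy applicants and all $2k$ enforcing applicants, so that by Observation~\ref{obs:condorcet} no popular matching can absorb an enforcing applicant there, which forces exactly one open vertex project per color and rules out partial or double selections. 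Third, the \emph{quota arithmetic of the edge gadget}: a two-color project $p_{c,d}$ with lower and upper quota exactly $5$ and a private applicant $a_{c,d}$, so that when the selection contains an edge between colors $c$ and $d$ the rival matching moves the four selection applicants to the edge project and the four enforcing applicants plus $a_{c,d}$ to $p_{c,d}$, giving $5$ improvers against $4$ worseners --- a strict win --- while an independent selection leaves every such deviation at net vote at most $-1$. Without fixing these quotas and preference-list structures explicitly, neither direction of your reduction can be verified, so the proposal is a correct strategy but not yet a proof.
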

\subsection{Parameterization by $m$}
Next we turn to \popular and show that the problem becomes fixed parameter tractable when parameterized by the number of projects $m$. For this we show how to reduce \popular to the \textsc{parametric integer program} problem. 
\dprob{\textsc{parametric integer program}}{Matrices $B \in \mathbb{Q}^{n \times m}, C \in \mathbb{R}^{m \times k}$, and a vector $d \in \mathbb{R}^k$.}{For any $b \in \mathbb{Z}^m$, such that $Cb \le d$, does there exist an $x \in \mathbb{Z}^n$ such that $Bx \le b$?} 
The feasibility 
of a \textsc{parametric integer program} can be decided in $\mathcal O(f(n,m) poly(\lvert\lvert B,C,d\rvert \rvert_\infty, k))$ time, as shown by Eisenbrand and Shmonin~\cite{ES08}. 
Using this we can now show that \popular is indeed in FPT when parameterized by $m$. We first construct a matrix $C$ and vector $d$ such that all feasible matchings are represented by solutions to $Cb \le d$. Then we construct the matrix $B$ such that any solution to $Bx \le b$ represents a feasible matching that is more popular than the matching represented by $b$, thus ensuring that the parametric integer program is feasible if and only if no popular matching exists.  %
To bound the dimensions of the matrices we use that we can assign each agent a type based on their preferences, of which there can be at most $\mathcal{O}((m+1)!)$. 

\begin{restatable}{thm}{popfptm}
\popular is in $\fpt$ when parameterized by $m$.
    \label{thrm:pop_fptm}
\end{restatable}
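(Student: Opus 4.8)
The plan is to reduce \popular to the \textsc{parametric integer program} problem, exploiting the fact that when the number of projects $m$ is a parameter, the number of distinct applicant \emph{types} is bounded. First I would observe that each applicant is fully characterized by her preference list over a subset of the $m$ projects, so there are at most $\sum_{i=0}^m \binom{m}{i} i! = \mathcal{O}((m+1)!)$ distinct types. Since applicants of the same type are interchangeable, a matching can be described compactly by specifying, for each type $t$ and each project $p$, how many applicants of type $t$ are assigned to $p$. This collapses the description of a matching into an integer vector $b$ whose dimension depends only on $m$ (roughly $\mathcal{O}(m \cdot (m+1)!)$ many coordinates), which is the key to keeping the matrix dimensions bounded by a function of $m$ alone.

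Next I would encode feasibility. The vector $b$ must satisfy: (i) nonnegativity of all type-project assignment counts; (ii) for each type $t$, the total number of applicants assigned across all projects does not exceed the supply of type-$t$ applicants; and (iii) for each project $p$, the total number of applicants assigned is either $0$ or lies in $[\ell_p, u_p]$. Constraint (iii) is disjunctive, so I would introduce a binary open/closed indicator for each project and linearize the ``$0$ or in $[\ell_p,u_p]$'' condition using the indicator together with a big-$M$ style bound; all coefficients here are bounded by $\mathcal{O}(n)$ but the \emph{number} of rows and columns is a function of $m$. I would collect these constraints into the matrices $C$ and vector $d$ so that $Cb \le d$ holds exactly when $b$ encodes a feasible matching $M$.

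Then, for the matrix $B$, I want the inner system $Bx \le b$ to be feasible precisely when there exists a feasible matching $M'$ that is \emph{more popular} than the matching $M$ encoded by $b$. Here $x$ encodes the type-project counts of the challenger $M'$, subject to the same feasibility constraints as above, plus the single additional linear popularity inequality $\sum_{a \in A} \vote_a(M'(a), M(a)) > n$. The crucial point is that the popularity comparison depends on $M$ only through the \emph{per-type} information of where each type is matched, and since two applicants of the same type who are matched to the same project contribute identically to the vote count, this entire comparison can be written as a linear function of the counts in $x$ with coefficients that are themselves linear functions of the counts in $b$. This is what lets me fold the dependence on $b$ into the right-hand side of the inner program, as required by the \textsc{parametric integer program} format. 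By construction, the parametric integer program is feasible (\ie some beatable feasible $b$ exists) if and only if \emph{no} popular matching exists, so I would negate the answer to decide \popular.

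Finally I would apply the Eisenbrand–Shmonin result: feasibility is decidable in $\mathcal{O}(f(n',m') \cdot \mathrm{poly}(\lVert B,C,d\rVert_\infty, k))$ time, where $n'$ and $m'$ are the dimensions of the inner and outer variable blocks, both bounded by functions of $m$, and the entries of $B$, $C$, $d$ are polynomially bounded in the input size. Thus the whole procedure runs in $\fpt$ time in $m$. I expect the main obstacle to be step three: correctly expressing the popularity comparison as a single linear inequality relating the challenger variables $x$ to the parameter $b$, while respecting the rigid ``$Bx \le b$ with $Cb \le d$'' shape of the parametric integer program. In particular, the $\vote_a$ function is a three-valued step function of the relative ranks of $M(a)$ and $M'(a)$, and translating it faithfully into linear coefficients indexed by types (rather than individual applicants) requires care to ensure that the coefficient on each challenger count depends correctly on the \emph{rank position} of the challenger's assigned project relative to where that type is matched under $M$. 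Handling the unmatched applicants and the tie case $\vote_a = 1$ cleanly within this linear encoding is the most delicate part, and I would verify it against the identity $w(M') > n - \lvert U(M)\rvert \Leftrightarrow M' \text{ more popular than } M$ established in Lemma~\ref{lem:pop_ver_maxm}.
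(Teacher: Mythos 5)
Your overall architecture (applicant types, an outer system $Cb \le d$ whose solutions encode the feasible matchings, an inner system $Bx \le b$ encoding a more popular challenger, then Eisenbrand--Shmonin and negating the answer) is the same as the paper's, but there is a genuine gap exactly where you flag the difficulty. You propose to let $x$ carry only the \emph{marginal} type--project counts of the challenger $\Mp$ and to write $\sum_{a}\vote_a(\Mp(a),M(a))$ as a linear form in $x$ whose coefficients are linear in $b$. This fails for two reasons. First, the vote total is not a function of the two marginals: take a type with preference $p_1 \succ p_2 \succ p_3$ and two applicants of that type, with $M$ placing one at $p_1$ and one at $p_3$, and $\Mp$ having marginals one at $p_2$ and one at $p_3$. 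If the applicant at $p_3$ stays put and the one at $p_1$ moves to $p_2$, the net vote is $-1$; if instead the one at $p_1$ moves to $p_3$ and the one at $p_3$ moves to $p_2$, the net vote is $0$. The culprit is that $\vote_a$ is a sign of a rank difference and cannot be written as $f(p')-f(p)$ for any potential $f$, so the sum depends on the joint coupling between $M$ and $\Mp$, not on the marginals. Second, even granting a well-defined value, a constraint whose coefficients depend linearly on $b$ is a bilinear form in $x$ and $b$ and cannot be ``folded into the right-hand side'': in a \textsc{parametric integer program} the matrix $B$ is fixed, and only the right-hand-side vector $b$ ranges over $Cb \le d$.

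The missing idea, which is how the paper resolves this, is to make the coupling itself the decision variable: introduce transition counts $x_i^{p \rightarrow p'}$ for each type $i$ and each ordered pair of options $p,p' \in N_i \cup \lbrace i \rbrace$ (including ``unmatched''), recording how many type-$i$ applicants go from $p$ under $M$ to $p'$ under $\Mp$. The popularity constraint then has the \emph{fixed} integer coefficients $-\vote_i(p',p)$, the dependence on $b$ enters only through the legal marginal constraints $\sum_{p'} x_i^{p\rightarrow p'} = x_i^p$ with the coordinate $x_i^p$ of $b$ on the right-hand side, and the number of inner variables is still bounded by a function of $m$ only. A smaller slip in your write-up: the parametric program asks whether \emph{for every} $b$ with $Cb\le d$ there is an $x$ with $Bx \le b$, so its feasibility means every feasible matching is beaten (no popular matching exists), not that ``some beatable feasible $b$ exists.''
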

\begin{proofsketch}
Our goal is to encode the matching instance in such a way that the resulting \textsc{parametric integer program} is feasible if and only if no popular matching exists. For this we choose the matrix $B$ and vector $d$ so that all possible matchings can be represented by a vector $b$ satisfying $Cb \le d$. Further the matrix $B$ should be chosen in such a way that the vector $x$ represents a matching that is more popular than the matching represented by $b$. 
    
    \subsubsection{Notation}
    Since we parameterize by $m$, each applicant $a \in A$ can be uniquely identified by her preference structure over $P$. There are at most $\mathcal{O}((m+1)!)$ different preference structures, hence 
    we can partition $A$ into $t \in \mathcal{O}((m+1)!)$ types. Let $A_1, \dots, A_t$ be this partition such that any two applicants in the same set have identical preference lists. We refer to the projects that appear in the preference lists of applicants in $A_i$ as~$N_i$.
    Furthermore we slightly alter the notation of the previous sections to follow~\cite{BIM10} and define the vote of a vertex as 
    \[\vote_a(p_1, p_2) = 
    \begin{cases*}1, &if $p_1 \succ_a p_2$, \\
    0, &if  $p_1 = p_2$, \\
    -1, &if  $p_2 \succ_a p_1$.
    \end{cases*} \] By the definition of popularity, matching $\Mp$ is more popular than $M$ if and only if $\sum_{a \in A} \vote_a(\Mp(a), M(a)) \ge 1$ holds. For easier notation for any type $i \in  [t]$ we define $\vote_i(p_1, p_2) =\vote_a(p_1, p_2)$ where $a \in A_i$ is an applicant of type $i$. 
    
    \subsubsection{Construction of $C$}
    As noted earlier, our goal is to construct the linear program represented by $B$ and $d$ in such a way that every feasible matching is a solution to this linear program. 
    To reach this, for each $i \in [t]$ and $p \in N_i$ we create a variable $x_i^p$ that should indicate how many applicants of type $A_i$ are matched to project $p$. Moreover we add one variable $x_i^i$ indicating the number of unmatched applicants of type $i$. Furthermore for each project $p \in P$ we create a variable $o_p$ that should indicate whether project $p$ is open or closed. 
    This now leads us to the following linear program
\begin{alignat}{4}
        \displaystyle\sum\limits_{p\in N_i \cup \lbrace i \rbrace}   x_i^p &=& &\lvert A_i \rvert, \quad &\mbox{\ for each $i \in [t]$}\label{eq1}\\
        \displaystyle\sum\limits_{i \in [t] \colon p\in N_i} x_i^p - o_p u_p &\le& &0, \quad &\mbox{\ for each $p\in P$} \label{eq2}\\
        \displaystyle\sum\limits_{i \in [t] \colon p\in N_i} x_i^p - o_p \ell_p &\ge& &0,  \quad &\mbox{\ for each $p\in P$} \label{eq3}\\
        x_i^p &\ge& &0, \quad &\mbox{\ for each $ i \in [t]$ and $p\in N_i$}  \label{eq4}\\
        o_p &\in& &[0,1], \quad &\mbox{\ for each $p\in P$} \label{eq5}
\end{alignat}%
Here Constraint~(\ref{eq1}) enforces that all applicants are either matched or unmatched. 
With Constraints~(\ref{eq2}) and (\ref{eq3}) we ensure that the number of applicants matched to an open project is between its lower and upper quota and the number of applicants matched to a closed project is $0$.
   Note that each feasible matching is a solution to this ILP. Currently, any solution to this ILP is of the form $(x_1^{p_1}, \dots, x_t^{p_m}, o_{p_1}, \dots,  o_{p_m})$. This, however, is not enough to fully model the ILP we need for~$B$. As a first step, for each variable of the form $x_i^p$ we add a second copy. Furthermore we add $2t$ variables $b_1, \dots, b_{2t}$ with $b_{2i} = \lvert A_i\rvert = b_{2i+1}$, then we add $2m$ variables that are forced to be $0$, and one variable that is forced to be $-1$. After this, each solution $b$ is of the form $(\lvert A_1\rvert, \lvert A_1 \rvert, \dots, \lvert A_t \rvert, \lvert A_t \rvert, x_1^{p_1},x_1^{p_1}, \dots, x_t^{p_m},x_t^{p_m}, o_{p_1}, \dots,  o_{p_m}, \underbrace{0, \dots, 0}_{2m \text{ times}}, -1)$.
   
   \subsubsection{Construction of $B$}
   We design $B$ to ensure that there is a matching $M'$ that is more popular than the matching $M$ induced by~$b$. For any type $i \in [t]$ and projects $p,p' \in N_i\cup\lbrace i \rbrace$, we create a variable $x_i^{p \rightarrow p'}$ that should indicate the number of applicants of type $i$ who were matched to project $p$ in $M$ and are matched to project $p'$ in $M'$. Furthermore, for each project $p \in P$, we again create a variable $o'_p$ indicating whether project $p$ is open or closed in~$M'$.
   This now allows us to construct the final ILP.
   \begin{alignat}{4}
        \displaystyle\sum\limits_{p, p'\in N_i \cup \lbrace i \rbrace}   x_i^{p \rightarrow p'} &=& &\lvert A_i \rvert,  \quad &\mbox{ for each $i \in [t]$} \label{eq6}\\
         \displaystyle\sum\limits_{p' \in P\cup \lbrace i \rbrace}   x_i^{p \rightarrow p'} &=& &x_i^p , \quad &\mbox{ for each $i \in [t]$ and $p \in N_i$} \label{eq7}\\
        \displaystyle\sum\limits_{i \in [t], p \in P} x_i^{p \rightarrow p'} - o'_{p'} u_{p'} &\le& &0, \quad &\mbox{ for each $p' \in P$} \label{eq8} \\
       \displaystyle\sum\limits_{i \in [t], p \in P} x_i^{p \rightarrow p'} - o'_{p'} \ell_{p'} &\ge& &0, \quad &\mbox{ for each $p' \in P$} \label{eq9}\\
       \displaystyle\sum\limits_{i \in [t] p, p'\in N_i \cup \lbrace i \rbrace}  -\vote_i(p',p)x_i^{p \rightarrow p'} &\le& &-1 ,& \quad \label{eq10}\\
       x_i^{p \rightarrow p'} &\ge& &0, \quad &\mbox{ for each $i \in [t]$ and $p,p' \in N_i$} \label{eq11}\\
        o'_p &\in& &[0,1], \quad &\mbox{ for each $p \in P$} 
   \end{alignat}
   Constraint~(\ref{eq6}) ensures that each applicant has a new partner in $M'$, and Constraint~(\ref{eq7}) guarantees that each applicant matched to some $p$ in $M$ now has a new partner. Constraints~(\ref{eq8}) and (\ref{eq9}) enforce the lower and upper quota constraints, and finally Constraint~(\ref{eq10}) ensures that $M'$ is more popular than~$M$.%
\end{proofsketch}
\subsection{Parameterization by $\mquota$}%
While for $\popular$ we were able to show fixed parameter tractability with regard to $m$, we now improve this parameter for the other three problems by considering $\mquota$, \ie the number of projects with a lower quota greater than $1$.
In order to solve this problem we turn to a special subcase of our matching problems defined by Boehmer and Heeger~\cite{BH20}, namely the task of deciding whether, given a certain set of projects $\hopen$, there is a matching with our desired property that opens exactly the projects in $\hopen$.
 We show that finding a maximum weight matching 
 that opens exactly the projects in $\hopen$ is polynomial-time solvable. First we give an algorithm for simply finding a matching that dominates a given matching and afterwards we modify this algorithm to find a maximum weight matching. For this algorithm we utilize the \textsc{feasible flow with demands} problem, which can easily be reduced to the well-known maximum flow problem, as argued for instance in the textbook of Erickson~\cite{Erick19}.
\dprob{\textsc{feasible flow with demands}}{Directed graph $G = (V,E)$, capacities $C\colon E \to \mathbb N$, and demands $D \colon E \to \mathbb N$.}{Is there a flow $f \colon E \to \mathbb{N}$ such that $D(e) \le f(e) \le C(e)$ for all $e \in E$?} 
\begin{restatable}{thm}{povhopen}
	Given an instance $\mathcal{I}$ of \parver with input matching $M$ and a set $\hopen \subseteq P$, we can decide in polynomial time whether a matching $\Mp$ exists that dominates $M$ and opens exactly the projects in the set $\hopen$.
	\label{thrm:pov_hopen}
\end{restatable}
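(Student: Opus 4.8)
The plan is to reduce the problem to a polynomial family of \textsc{feasible flow with demands} instances. The key observation is that, once $\hopen$ is fixed, the set of projects that may receive applicants is completely determined, and the requirement ``no applicant is worse off than in $M$'' is a purely local restriction on which edges $\Mp$ is allowed to use. The only genuinely global side condition left is that \emph{at least one} applicant is strictly better off, and I handle this by guessing the witnessing applicant.

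First I would build a base network with a source $s$ and a sink $t$. For every applicant $a \in A$ I add an arc $s \to a$ of capacity $1$; its demand is $1$ if $a$ is matched in $M$ (she may not become unmatched, as that would make her worse off) and $0$ otherwise. For every applicant $a$ and every project $p \in \hopen \cap N_a$ with $p = M(a)$ or $p \succ_a M(a)$, I add an arc $a \to p$ of capacity $1$ and demand $0$. Projects outside $\hopen$, and projects ranked below $M(a)$, receive no arc from $a$; this single edge set simultaneously encodes ``$\Mp$ opens no project outside $\hopen$'' and ``$a$ is not worse off in $\Mp$''. Finally, for each $p \in \hopen$ I add an arc $p \to t$ with demand $\ell_p$ and capacity $u_p$, where the demand $\ell_p \ge 1$ forces $p$ to be genuinely open. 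A routine correspondence then shows that the integral feasible flows of this network are exactly the feasible matchings $\Mp$ that open precisely the projects in $\hopen$ and leave no applicant worse off than in~$M$.

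To enforce strict improvement, I iterate over all pairs $(a,p)$ with $p \in \hopen \cap N_a$ and $p \succ_a M(a)$; there are at most $nm$ of them. For each pair I take the base network and additionally raise the demand of the arc $a \to p$ to $1$, which (together with the capacity-$1$ arc $s \to a$) forces $a$ to be assigned to the strictly better project $p$. I then solve the resulting \textsc{feasible flow with demands} instance, which reduces to a single maximum-flow computation, as argued by Erickson~\cite{Erick19}. I answer ``yes'' as soon as one of these instances admits a feasible flow, and ``no'' otherwise. Correctness holds in both directions: a feasible flow of a modified network yields an $\Mp$ in which $a$ strictly improves while nobody is worse off, so $\Mp$ dominates $M$ and opens exactly $\hopen$; conversely, if some $\Mp$ dominates $M$ and opens exactly $\hopen$, I pick any applicant $a$ with $\Mp(a) \succ_a M(a)$, set $p = \Mp(a)$, and observe that $\Mp$ is a feasible flow in the network modified for $(a,p)$. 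Since every such pair is tried, the search succeeds whenever a dominating matching exists. As each of the $\mathcal{O}(nm)$ flow computations runs in polynomial time, the overall algorithm is polynomial.

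The main obstacle is precisely the strict-domination clause. A single feasibility check only produces a matching in which nobody is worse off, and this trivially includes matchings that are equivalent to $M$ in every applicant's eyes (in particular $M$ itself, whenever $M$ already opens exactly $\hopen$); hence feasibility alone cannot certify domination. Converting the existential statement ``some applicant is strictly better off'' into a polynomial family of pure feasibility questions, by guessing that applicant and hard-wiring her improvement into the demand of a single arc, is the crux of the argument. Everything else is standard: verifying the flow/matching correspondence and invoking the reduction of \textsc{feasible flow with demands} to maximum flow.
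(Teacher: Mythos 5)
Your proposal is correct and follows essentially the same route as the paper: both reduce the question to a polynomial family of \textsc{feasible flow with demands} instances in which applicants are only given arcs to weakly preferred projects in $\hopen$, the lower quotas become arc demands into the sink, and the strict-improvement requirement is handled by guessing a witness. The only (harmless) differences are that you guess the improving applicant--project \emph{pair} and hard-wire it via an arc demand, where the paper guesses just the applicant and restricts her arcs to strictly better projects (so $n$ rather than $\mathcal{O}(nm)$ flow computations), and that you encode ``may stay unmatched'' by a zero demand on the source arc instead of the paper's auxiliary vertex $p_\perp$.
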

\begin{proofsketch}
We construct an instance of \textsc{feasible flow with demands} to solve the  problem. The proof of correctness for this construction is in the appendix. 
	First we assume that some $a \in A$ is given who 
	will be the designated applicant to receive a better partner in $\Mp$. 
	Now we create a directed bipartite graph 
	$G_a$ with vertices $A \cup \hopen \cup \{p_\perp,s,t\}$, where $p_\perp$ will represent the unmatched applicants.  We connect $s$ to all vertices in $A$ with demand and capacity of $1$ on the edge. 
	Next we connect $a$ to all projects in $\hopen$ that $a$ prefers to $M(a)$ and all applicants in $A \setminus \lbrace a \rbrace$ get connected to all projects in $\hopen$ which they prefer to $M(a)$ as well as to $M(a)$ if $M(a)$ is in $\hopen$, further if they are unmatched in $M$, we also connect them to $p_\perp$. 
	All of these edges have no demand and a capacity of~$1$. Finally we connect every project $p \in  \hopen$ to $t$, each with a demand of $\ell_p$ and with a capacity of $u_p$ and we connect $p_\perp$ to $t$ with no demand and infinite capacity. 
	We solve the \textsc{feasible flow with demands} problem for all $a \in A$ in $G_a$ and if some $a \in A$ exists for which the demands are satisfiable, we return the respective induced matching as $\Mp$. That is, if a flow of $1$ goes from $a$ to $p$, we match $a$ to $p$ in $\Mp$ and if a flow of $1$ goes from $a$ to $p_\perp$, we leave $a$ unmatched in $\Mp$.
	Otherwise we return that our matching is Pareto optimal.%
\end{proofsketch}
\begin{figure}[htb]
\begin{minipage}{0.6 \textwidth}    \includegraphics[scale = 0.85]{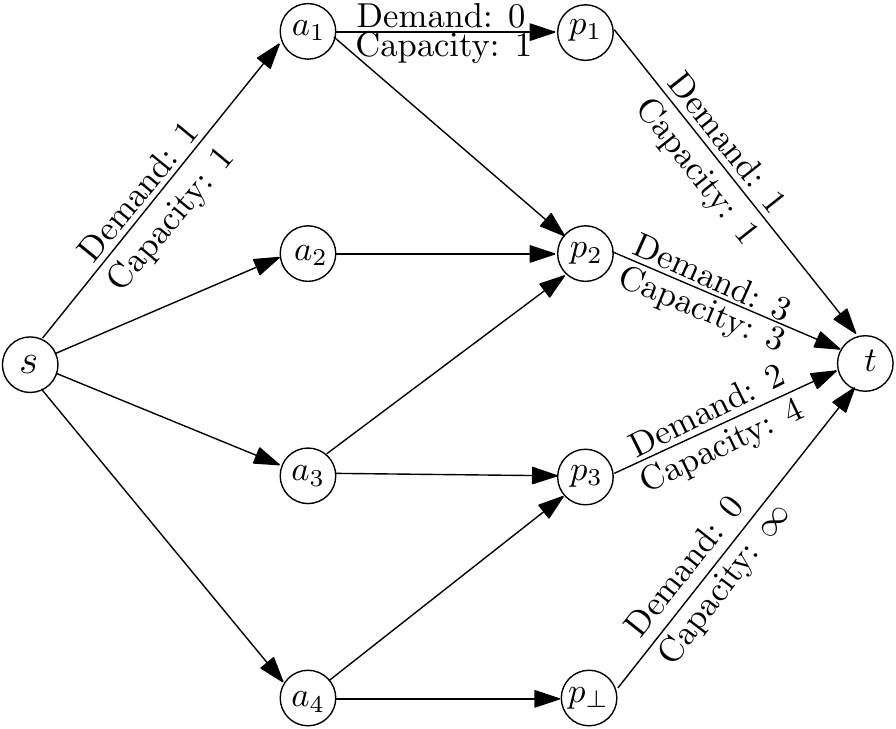}
\end{minipage}\hfill
\begin{minipage}{0.3 \textwidth}
    \begin{tabular}{rll}
       $\phantom{1111}\ell_{p_1} =$& 1 = &$u_{p_1}$\\
       $\ell_{p_2} =$& 3 = &$u_{p_2}$\\
       $\ell_{p_3} =$& 2, &$u_{p_3} = 4$\\
       $\ell_{p_4} =$& 2 = &$u_{p_4}$
    \end{tabular}
    \begin{align*}
       a_1 \colon& p_2 \succ p_1 \succ p_3 \\
       a_2 \colon& p_2 \succ p_4\\
       a_3 \colon& p_3 \succ p_2 \succ p_4 \\
       a_4 \colon& p_3
    \end{align*}
\end{minipage}
    \caption{Consider an instance of \parver with four applicants $a_1, \dots, a_4$ and four projects $p_1, \dots, p_4$ 
    with quotas and preference lists as shown on the right side of the figure.
    If we consider the matching $M(a_1) = p_1, M(a_2) = p_4, M(a_3) = p_4$, and $M(a_4) = a_4$ with $\hopen = \lbrace p_1, p_2, p_3 \rbrace$, the above instance is the \textsc{feasible flow with demands} instance created in Theorem~\ref{thrm:pov_hopen}.} 
\end{figure} %
Next we show how to generalize this idea to prove that finding a maximum weight matching that opens exactly the projects in $\hopen$ can be done in polynomial time. For this we need the slightly more complicated but still polynomial-time solvable \textsc{max-cost circulation} problem, see Tardos~\cite{T85}. 
\optprob{\textsc{max-cost circulation}}{Directed graph $G = (V,E)$, costs $p \colon E \to \mathbb R$, capacities $C\colon E \to \mathbb N$, and demands $D \colon E \to \mathbb N$.}{Find a flow $f \colon E \to \mathbb N$ of maximum cost such that for any $v \in V$ it holds that $\sum_{(v,w) \in E} f((v,w)) - \sum_{(w,v) \in E} f((w,v)) = 0$, \ie all flow gets conserved and such that for any edge $e \in E$ it holds that $D(e) \le f(e) \le C(e)$.}
\begin{restatable}{thm}{maxwhopen}
	Given an instance $\mathcal{I}$ of \wmatch and a set $\hopen \subseteq P$, we can find in polynomial time a maximum weight matching that opens exactly the projects in $\hopen$. 
	\label{thrm:maxw_hopen}
\end{restatable}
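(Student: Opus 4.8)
The plan is to reimplement the flow construction of Theorem~\ref{thrm:pov_hopen}, but to turn the feasibility question into an optimization one by solving a \textsc{max-cost circulation} instance instead of a \textsc{feasible flow with demands} instance. Since we now want to maximize weight rather than merely detect a dominating matching, we no longer need to guess a designated applicant who improves: a single network suffices, and the load constraints that guarantee ``opens exactly $\hopen$'' carry over verbatim.

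Concretely, I would build a directed graph $G'$ on the vertex set $A \cup \hopen \cup \{p_\perp, s, t\}$, where $p_\perp$ again collects the unmatched applicants. For each applicant $a \in A$ add an arc $s \to a$ with demand and capacity $1$ and cost~$0$, so that every applicant routes exactly one unit of flow. For each acceptable pair with $p \in \hopen$ add an arc $a \to p$ with capacity $1$ and cost $w(\{a,p\})$, and add an arc $a \to p_\perp$ with capacity $1$ and cost~$0$ modelling $a$ being left unmatched. Projects outside $\hopen$ receive no incoming arcs and hence stay closed. For each $p \in \hopen$ add an arc $p \to t$ with demand $\ell_p$ and capacity $u_p$ (cost~$0$), which simultaneously forces $p$ open and keeps its load within quota; in the degenerate case $\ell_p = 0$ one uses demand $1$ instead, so that ``open'' genuinely means nonempty. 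Finally add $p_\perp \to t$ with no demand, capacity $n$, and cost~$0$, and close the circulation with $t \to s$ of capacity $n$ and cost~$0$.

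By integrality of optimal circulations (all demands and capacities are integral), \textsc{max-cost circulation} returns an integral flow $f$, from which I read off $\Mp$ by setting $\Mp(a) = p$ whenever $f(a \to p) = 1$ and leaving $a$ unmatched whenever $f(a \to p_\perp) = 1$. Correctness rests on a cost-preserving bijection between integral feasible circulations and matchings opening exactly $\hopen$: the unit flow on $s \to a$ forces each applicant onto exactly one outgoing arc, the demand and capacity on $p \to t$ enforce $\ell_p \le \lvert \Mp(p) \rvert \le u_p$ for every $p \in \hopen$, and the absence of arcs into $P \setminus \hopen$ keeps those projects closed, while conversely any such matching induces a feasible circulation of the same cost. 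As the only nonzero costs sit on the applicant--project arcs, the cost of $f$ equals $w(\Mp)$, so a maximum-cost circulation yields a maximum-weight matching opening exactly $\hopen$, and infeasibility of the circulation certifies that no such matching exists. Since \textsc{max-cost circulation} is polynomial-time solvable by Tardos~\cite{T85}, the whole procedure runs in polynomial time. The main obstacle I expect is bookkeeping rather than conceptual: verifying that this correspondence is an exact, cost-preserving bijection in both directions, in particular that infeasible instances are precisely those admitting no matching opening $\hopen$, and handling the $\ell_p = 0$ case so that ``opens exactly $\hopen$'' is enforced on the nose rather than only up to a $0$-demand project.
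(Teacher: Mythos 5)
Your construction is essentially identical to the paper's: the paper also reduces to \textsc{max-cost circulation} on the vertex set $A \cup \hopen \cup \{p_\perp, s\}$ (merging your $s$ and $t$ into a single hub), with unit demand-and-capacity arcs forcing each applicant to route one unit, applicant--project arcs carrying the original weights, and arcs out of each $p \in \hopen$ with demand $\ell_p$ and capacity $u_p$ enforcing that exactly the projects in $\hopen$ are open. Your additional care about the degenerate $\ell_p = 0$ case is a sensible refinement the paper does not spell out, but the approach and correctness argument are the same.
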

Using this theorem, we can calculate maximum weight matchings in FPT time in $\mquota$. To do so, we use the fact that projects with a lower quota of at most $1$ are always open and then iterate through all projects with a lower quota of at least~$2$.
\begin{restatable}{thm}{fptmquota}
\label{prop:maxwquota}
Given an instance $\mathcal{I}$ of \wmatchnsp, 
a maximum weight matching can be computed in $\mathcal O(2^{m_{\text{quota}}} poly( \lvert \mathcal{I} \rvert ))$, where $ \lvert \mathcal{I} \rvert = nm+$ the input size of the weights. 
\end{restatable}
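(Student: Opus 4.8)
The plan is to reduce the problem to $2^{\mquota}$ applications of Theorem~\ref{thrm:maxw_hopen}, each of which runs in polynomial time. The key structural observation is that a project $p$ with lower quota $\ell_p \le 1$ imposes no meaningful constraint on whether it may be opened: if we assign it at least one applicant it is automatically feasible, and since opening such a project can only help (or not hurt) the attainable weight when weights are, say, nonnegative---and more robustly, since we will simply let the maximum-weight subroutine decide how many applicants to route to it---we may safely declare every such project open. Thus the only genuine binary choices are which of the $\mquota$ projects with lower quota at least $2$ to open.

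First I would partition $P$ into $P_{\le 1} := \{p \in P : \ell_p \le 1\}$ and $P_{\ge 2} := \{p \in P : \ell_p \ge 2\}$, where $\lvert P_{\ge 2}\rvert = \mquota$. Next, I would iterate over all $2^{\mquota}$ subsets $S \subseteq P_{\ge 2}$. For each such $S$, I would set $\hopen := S \cup P_{\le 1}$ and invoke Theorem~\ref{thrm:maxw_hopen} to compute, in polynomial time, a maximum weight matching that opens exactly the projects in $\hopen$. I would keep track of the best matching found over all $2^{\mquota}$ iterations and return it at the end. Since every feasible matching opens some subset $S$ of $P_{\ge 2}$ together with some subset of $P_{\le 1}$, and since for the latter we can always (without loss of generality) regard the non-quota projects as permitted to open, each feasible matching is considered in at least one of the iterations whose $\hopen$ it is compatible with; conversely every matching returned by the subroutine is feasible. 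Hence the overall maximum weight matching is found. The running time is $2^{\mquota}$ times the polynomial running time of Theorem~\ref{thrm:maxw_hopen}, giving the claimed $\mathcal{O}(2^{\mquota}\,\mathrm{poly}(\lvert \mathcal{I}\rvert))$ bound.

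\textbf{Main obstacle.} The subtle point is the claim that we may freely fix all projects in $P_{\le 1}$ to be open. Theorem~\ref{thrm:maxw_hopen} demands that the returned matching open \emph{exactly} the projects in $\hopen$, so a project in $P_{\le 1}$ that receives zero applicants would violate the ``exactly'' requirement. The clean way to handle this is to enumerate, for the non-quota projects as well, whether each is opened---but that would reintroduce an exponential dependence on $m$. I would instead argue that a project $p$ with $\ell_p \le 1$ can always be opened at no weight loss: if an optimal matching leaves $p$ closed, one checks that opening it (when some applicant is available to fill it, or else leaving it closed yields the same weight as an iteration with $p \notin \hopen$) does not decrease the total weight, so the best value over all iterations equals the true optimum even though individual iterations fix $P_{\le 1}$ open. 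Formalizing this---precisely that restricting attention to $\hopen \supseteq P_{\le 1}$ loses no optimal solutions, handling the corner case where a lower-quota-$1$ project simply cannot be filled---is the one place requiring care; everything else is a direct invocation of Theorem~\ref{thrm:maxw_hopen} and a loop.
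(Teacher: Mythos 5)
Your high-level plan (iterate over the $2^{\mquota}$ subsets of the quota projects and solve each restricted instance in polynomial time via the circulation machinery) is the same as the paper's, and you correctly isolate the one non-trivial point --- but your resolution of that point does not work. Theorem~\ref{thrm:maxw_hopen} requires the returned matching to open \emph{exactly} the projects in $\hopen$; its construction places a demand of $\ell_p$ on the edge from $p$ to the sink for every $p \in \hopen$. Fixing $\hopen \supseteq P_{\le 1}$ therefore forces every project with $\ell_p = 1$ to absorb at least one applicant. This is not a harmless normalization: with a single applicant $a$ adjacent to two projects $p,q$ with $\ell_p = \ell_q = 1$, the optimum assigns $a$ along the heavier edge and closes the other project, but your only iteration ($S = \emptyset$, $\hopen = \{p,q\}$) is infeasible, and there is no other iteration. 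More generally, a feasible matching that closes some project of $P_{\le 1}$ lies in the feasible region of none of your $2^{\mquota}$ subroutine calls, so your claim that ``each feasible matching is considered in at least one of the iterations'' is false; likewise, the claim that opening an $\ell_p \le 1$ project never loses weight fails, since filling it may require diverting an applicant from a heavier edge (or may be impossible), and the theorem allows arbitrary real weights besides.

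The missing idea is that Theorem~\ref{thrm:maxw_hopen} cannot be used as a black box here: you must open up its circulation construction. The paper performs exactly your enumeration over $\hat P \subseteq P_{\ge 2}$, but in the network it connects each project of $P \setminus P_{\text{quota}}$ with demand $0$ (rather than demand $\ell_p$) and capacity $u_p$. Zero flow through such a project then models it being closed, while any positive flow models it being open with a feasible load (feasible precisely because $\ell_p \le 1$), so the circulation itself decides which non-quota projects to open. With that modification each iteration returns the true maximum over all matchings whose set of open quota projects equals $\hat P$, and correctness of the enumeration follows.
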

Using the lemmas in Section~\ref{subs:weighted_match} this now implies that \popvernsp, \maxparetonsp, and \parver are all fixed parameter tractable when parameterized bz $\mquota$.
\subsection{Parameterization by $m_{\text{open}}$}%
The next parameter we investigate is $\mopen$, the number of projects that are open in the output matching. We show hardness for all our problems, by reducing from either \textsc{x3c} or the \textsc{exact unique hitting set} problem. %
\begin{restatable}{thm}{woneparmopen}
	Given an instance $\mathcal{I}$ of \parvernsp/\popver with a matching $M$ and a parameter $\mopen$, it is both $\w[1]$-hard to decide whether there is a matching that dominates / is more popular than $M$ and opens exactly $\mopen$ projects. This remains true even if $M$ opens exactly $1$ project. 
	\label{thrm:w1_par_mopen}
\end{restatable}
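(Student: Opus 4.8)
The plan is to give a parameterized reduction from \textsc{exact unique hitting set}, parameterized by the size $k'$ of the sought hitting set. This problem is $\wone$-hard: encoding \textsc{Multicolored Clique} with color classes $V_1,\dots,V_k$, take the universe $V \cup E$, one set $V_i$ per color and one set $E_{ij}$ per color pair, together with incidence sets $\{v\}\cup\{e\in E_{ij}\colon v\notin e\}$; a hitting set of size $k'=k+\binom{k}{2}$ that meets every set at most once corresponds exactly to a multicolored clique. I set $\mopen = k'+1$, so the reduction is parameter-preserving.

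For the construction, for every universe element $x$ I create a project $p_x$ with tight quota $\ell_{p_x}=u_{p_x}=d_x$, where $d_x$ is the number of sets containing $x$, and for every set $S$ an applicant $a_S$ whose acceptable projects are exactly $\{p_x\colon x\in S\}$, each preferred to being unmatched. The tight quota forces an open $p_x$ to be filled by precisely the applicants $\{a_S\colon S\ni x\}$, so a feasible matching opens $p_x$ exactly when it places $x$ into a candidate set $H$, and since each $a_S$ serves at most one project, feasibility is equivalent to $|H\cap S|\le 1$ for every $S$. As the color and pair sets partition $V$ and $E$, this cap yields $|H|\le k'$, with equality forcing exactly one hit per color and per pair set; the incidence sets, being met at most once, then enforce the clique condition. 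Hence any feasible matching opening $k'$ element-projects encodes a multicolored clique, and no feasible matching can open more than $k'$ element-projects.

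Next I fix the base matching. I add a dummy project $p_0$ with $\ell_{p_0}=u_{p_0}=1$ and a dedicated applicant $a_0$ accepting only $p_0$, and let $M$ open $p_0$ (with $a_0$) while leaving every $a_S$ unmatched; thus $M$ opens exactly one project, giving the strengthened statement. Because a feasible matching can open at most $k'$ element-projects, the only way to open exactly $\mopen=k'+1$ projects is to keep $p_0$ open and open $k'$ element-projects, i.e.\ a multicolored clique. Such a matching keeps $a_0$ satisfied and matches the set-applicants that are hit, so it both dominates $M$ (at least one previously unmatched $a_S$ improves, none worsens) and is more popular than $M$ (the matched $a_S$ contribute positive net votes, as made precise by Lemma~\ref{lem:pop_ver_maxm}). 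Conversely, any $\Mp$ opening exactly $\mopen$ projects that dominates $M$ or is more popular than $M$ must, by the counting bound, be of this form and therefore yields a clique, so the reduction is correct for both \parver and \popver.

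The main obstacle is the tight coupling between the lower quotas and the exact open-count. I must verify that opening a project can only be done by recruiting exactly its prescribed set-applicants, so that feasibility genuinely collapses to $|H\cap S|\le 1$ and no spurious matching opens $\mopen$ projects without corresponding to a hitting set; in particular the counting bound $|H|\le k'$ is what simultaneously forces the clique structure and pins the dummy project open, making the same instance serve both the domination and the popularity versions. Checking these quota interactions, and confirming that the incidence-set cap really delivers adjacency of the chosen clique vertices, is where the care is needed.
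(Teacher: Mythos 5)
Your reduction is correct, and it rests on the same core gadget as the paper's proof: one project per universe element with tight quota equal to the number of sets containing it, and one applicant per set, so that opening a project forces it to absorb exactly its incident set-applicants and feasibility collapses to the condition $\lvert H\cap S\rvert\le 1$. The differences are in the periphery, and they are worth noting. The paper reduces directly from \textsc{exact unique hitting set} (taking its $\wone$-completeness as a black box), sets $\mopen$ equal to the hitting-set size, and puts a last-resort project $p^-$ at the bottom of every applicant's list; the base matching assigns everyone to $p^-$, so a dominating matching must close $p^-$ and strictly improve every applicant, and the popularity variant then requires extra dummy applicants on $p^-$ to balance the vote count. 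You instead inline a multicolored-clique encoding --- exploiting only the ``at most once'' half of the hitting condition together with the cardinality bound $\lvert H\rvert\le k'$ forced by the partition structure --- and you isolate the base matching in a dedicated pair $(p_0,a_0)$ while leaving every set-applicant unmatched, with $\mopen=k'+1$. This buys a uniform and essentially trivial treatment of both \parver and \popvernsp: no applicant can worsen, so any feasible matching that newly matches a set-applicant simultaneously dominates and defeats $M$, and all the combinatorial work is carried by the requirement that exactly $\mopen$ projects be open. Both are valid parameterized reductions; yours is more self-contained and dispenses with the vote-balancing dummies, while the paper's is shorter given the cited hardness of \textsc{exact unique hitting set}.
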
 
This approach also generalizes to \popular and \maxpareto with the difference that here it is even hard to decide whether there is a desired matching with only one open project.
\begin{restatable}{thm}{wonepopmopen}
	Given an instance $\mathcal{I}$ of \popularnsp/\maxparetonsp, it is \coNP-hard to decide whether there is a popular / perfect Pareto optimal matching that opens exactly $1$ project. 
	\label{thrm:pop_mopen}
\end{restatable}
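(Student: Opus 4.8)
The plan is to establish \coNP-hardness by a reduction from \textsc{x3c}, engineered so that the constructed instance admits the desired optimal matching opening exactly one project \emph{if and only if} the \textsc{x3c} instance has \emph{no} exact cover. Since \textsc{x3c} is \NP-complete, the non-existence of an exact cover is \coNP-complete, and reducing this complementary problem to our instances yields the claimed \coNP-hardness for both \popularnsp{} and \maxparetonsp. I would reuse the gadget underlying Theorem~\ref{thrm:w1_par_mopen}, but add one \emph{dump} project $p^\ast$ whose lower quota is tuned to force all-or-nothing behaviour. Concretely, create one applicant $a_i$ for each element $x_i \in X$ and one project $P_j$ for each set $T_j \in \mathcal{T}$, with $\ell_{P_j} = u_{P_j} = 3$ and $P_j$ acceptable exactly to the three applicants whose elements lie in $T_j$. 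Each $a_i$ ranks the set-projects containing $x_i$ above $p^\ast$, and $p^\ast$ above being unmatched; the project $p^\ast$ is acceptable to every applicant, with $\ell_{p^\ast} = u_{p^\ast}$ set equal to the total number of applicants, so that $p^\ast$ is either closed or completely full in any feasible matching.

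For \maxparetonsp{} this already suffices. With $3m$ applicants, the only perfect matching opening exactly one project is the one, call it $M$, placing all applicants on $p^\ast$, since every set-project has upper quota $3 < 3m$. The matching $M$ is dominated exactly when some applicant can move to a strictly better project with nobody made worse off; since moving even a single applicant off $p^\ast$ drops it below its lower quota and forces it to close, \emph{every} applicant must leave $p^\ast$, and none may become unmatched (that would be a strict loss). Hence a Pareto improvement exists iff all $3m$ applicants can be simultaneously assigned to set-projects respecting $\ell = u = 3$, i.e.\ iff $\mathcal{T}$ contains an exact cover. Thus $M$ is Pareto optimal iff there is no exact cover, and $M$ is the unique candidate opening one project.

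For \popularnsp{} the reasoning is analogous, but the accounting is by votes rather than by domination, and there are more candidates opening one project: besides $M$, one may instead open a single set-project and leave everyone else unmatched. Such a matching loses head-to-head against $M$ (three applicants improve while the rest are left unmatched and thus worse), so it is never popular, and $M$ remains the only possible witness. Any competitor $M'$ either keeps $p^\ast$ full (a tie with $M$) or closes it, in which case opening $s$ disjoint set-projects makes $3s$ applicants better and the remaining applicants worse. I would calibrate the net vote so that $M'$ beats $M$ \emph{precisely} when $s = m$, i.e.\ when the opened set-projects form an exact cover; the cleanest way is to pad the instance with $3m-3$ dummy applicants that accept only $p^\ast$, which makes the net vote $6s - 6m + 3$, positive iff $s = m$. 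Then $M$ is popular iff no exact cover exists.

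The main obstacle is the two structural claims that make the equivalences exact: first, that $M$ is the \emph{only} matching opening exactly one project that can be optimal, which rules out the single-set-project matchings and any partial occupation of $p^\ast$ and is precisely where the tuned lower quota of $p^\ast$ does the work; and second, that optimality of $M$ must be tested against \emph{all} feasible matchings, not only those opening one project, so the vote/domination bookkeeping must also cover competitors that open many set-projects. Getting the popular case to read out an exact cover rather than a mere fractional set packing is exactly what the dummy-applicant calibration is for, and checking that no unintended competitor slips through the quota constraints is the part I expect to require the most care.
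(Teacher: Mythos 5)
Your proposal is correct and follows essentially the same route as the paper: a reduction from \textsc{x3c} with one quota-$3$ project per set, a last-resort project whose tight lower quota forces all-or-nothing behaviour, and (for popularity) a calibrated number of dummy applicants so that a deviation wins the vote exactly when the opened set-projects form an exact cover. The only cosmetic differences are your dummy count ($3m-3$ versus the paper's $3m-1$, both of which calibrate correctly) and that you exclude single-set-project matchings by comparing them directly against the all-on-$p^\ast$ matching, whereas the paper preprocesses the instance so that every set has a disjoint partner.
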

As a corollary of our previous proof, we obtain that given a set $\hopen \subseteq P$, it is \coNP-hard to determine whether a perfect Pareto optimal matching or a popular matching opening exactly the projects in $\hopen$ exists. 
\begin{restatable}{cor}{coNPHopen}
			Deciding whether there exists a popular matching / perfect Pareto optimal matching that opens exactly the projects in a given subset $\hopen \subseteq P$ is \coNP-hard, even if $\lvert \hopen\rvert = 1$.  \footnote{Note that this is not a contradiction to Theorem~\ref{thrm:maxw_hopen}, since given an instance of \maxparetonsp, a maximum weight matching opening only projects in $\hopen$ does not need to correspond to a perfect Pareto optimal matching and vice versa.}
	\label{corr:pop_hopen}
\end{restatable}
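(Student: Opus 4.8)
The plan is to read the corollary off the very instance already built in the proof of Theorem~\ref{thrm:pop_mopen}. The only gap between the two statements is that Theorem~\ref{thrm:pop_mopen} asks whether some matching opens \emph{exactly one} (arbitrary) project, whereas here we fix in advance a singleton set $\hopen$ and ask for a matching opening \emph{exactly the projects in} $\hopen$. Hence it suffices to show that, in the instance $\mathcal{I}$ produced by that reduction, every matching opening exactly one project is forced to open one and the same distinguished project $p^{*}$; once this is established we simply take $\hopen=\{p^{*}\}$ and inherit the \coNP-hardness verbatim, for both the popular and the perfect Pareto optimal variant.

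First I would isolate the designated project $p^{*}$ in the reduction of Theorem~\ref{thrm:pop_mopen}. That reduction (from the complement of \textsc{x3c}, resp.\ \textsc{exact unique hitting set}) builds one ``global'' project that can accommodate all applicants on its own, together with a collection of set-projects encoding the set system. I would then argue that $p^{*}$ is the \emph{unique} project that can be opened in isolation: each set-project carries a lower quota and/or an applicant neighbourhood that makes it infeasible to serve as the sole open project while still matching the applicants involved, so any matching opening exactly one project must open $p^{*}$. This is precisely the structural feature the gadgets of Theorem~\ref{thrm:pop_mopen} were designed to enforce, so the claim follows from inspecting that construction rather than from any new argument.

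With this in hand, the equivalence ``$\mathcal{I}$ admits a popular (resp.\ perfect Pareto optimal) matching opening exactly one project'' $\Longleftrightarrow$ ``$\mathcal{I}$ admits such a matching opening exactly $\hopen=\{p^{*}\}$'' is immediate: the direction $\Longleftarrow$ is trivial, and $\Longrightarrow$ is exactly the structural claim that the single open project must be $p^{*}$. Since the left-hand side is \coNP-hard by Theorem~\ref{thrm:pop_mopen}, so is the right-hand side, which is the $\lvert\hopen\rvert=1$ case of the corollary.

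The main obstacle is the structural verification in the second step: confirming that no set-project can be the lone open project. This hinges entirely on the quota and preference gadgets chosen in Theorem~\ref{thrm:pop_mopen}. If that construction were to admit an alternative single-project optimum, one would first augment it (\eg by attaching a private applicant to $p^{*}$ whose only acceptable project is $p^{*}$, or by raising the lower quotas of all set-projects above the number of applicants any one of them could receive) to pin the lone open project to $p^{*}$. Modulo this check no further work is needed, and the footnote's remark confirms consistency with the polynomial-time result of Theorem~\ref{thrm:maxw_hopen}, since a maximum weight matching opening exactly $\hopen$ need not correspond to a perfect Pareto optimal one.
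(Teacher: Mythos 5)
Your proposal matches the paper's intended argument: the corollary is read off the reduction of Theorem~\ref{thrm:pop_mopen} by taking $\hopen=\{p^-\}$, the last-resort project, which is indeed the only candidate for the lone open project of a popular (resp.\ perfect Pareto optimal) matching in that instance --- any matching opening a single set-project is beaten by additionally opening a disjoint set-project (resp.\ cannot match all applicants). One small imprecision: a lone open set-project is perfectly \emph{feasible} there; it is popularity/perfection, not feasibility, that forces the unique open project to be $p^-$.
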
 
\subsection{Parameterization by $\mclosed$}
As our final parameter we turn to $\mclosed$, the number of closed projects in our desired matching and show that all four problems become W[1]-hard when parameterized by $\mclosed$. For these results we reduce from the classic problems \textsc{multicolored independent set} and \textsc{multicolored clique}.
\begin{restatable}{thm}{mapomclosed}
	 Given an instance $\mathcal{I}$ of \maxparetonsp, \parvernsp, \popularnsp, or \popver and parameter $\mclosed$, it is $\wone$-hard to decide whether a matching of our desired type exists that closes exactly $\mclosed$ projects.
    \label{thrm:mapom_closed}
\end{restatable}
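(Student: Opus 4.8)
The plan is to give four parameterized reductions sharing a common skeleton, taking \textsc{multicolored independent set} (for the domination-based problems \maxpareto and \parver) and \textsc{multicolored clique} (for the popularity-based problems \popular and \popver) as sources, both $\wone$-hard in the number of colors~$k$. In every reduction the parameter is set to $\mclosed = k$ (or $\mclosed = \Theta(k)$), and the whole construction is arranged so that a matching of the desired type closing exactly $\mclosed$ projects must close precisely one project per color; the closed project of color~$i$ is then read off as the vertex selected from the $i$-th color class.

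First I would build the color-selection gadget. For each color class $V_i$ and each vertex $v \in V_i$ I introduce a vertex-project $P_v$ with tight quotas (equal lower and upper quota), together with a fixed pool of dedicated applicants who accept only the color-$i$ projects and whose number is chosen so that, once every dedicated applicant is matched, exactly $|V_i|-1$ of the projects are open and one is closed. The equality of lower and upper quota is what prevents applicants from ``piling up'' and thereby closing a second project in the same color, so the per-color closure count is pinned to one and the global count to~$k$. Next I would add, for each conflicting pair of vertices, an edge/clique gadget: a small set of applicants (and, in the verification variants, their images under the input matching $M$) that can be feasibly and optimally matched exactly when the two endpoints are \emph{not} both selected in the independent-set case, respectively exactly when the two chosen vertices \emph{are} adjacent in the clique case. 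Thus a feasible closure pattern survives if and only if the encoded selection is a valid solution of the source instance.

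The optimality layer is then grafted on separately for each of the four problems. For \maxpareto I would require a perfect matching, which forces all dedicated and edge applicants to be matched and so locks in the counting argument, and I would choose preferences so that the resulting matching is Pareto optimal exactly when the closure pattern is a valid independent set. For \parver I would fix an input matching $M$ realising a ``neutral'' assignment and arrange that a dominating matching closing exactly $k$ projects exists if and only if a valid selection exists. For \popver and \popular I would use the vote/weight correspondence of Lemma~\ref{lem:pop_ver_maxm} to translate the existence of an improving matching into a more-popular one, wrapping the selection gadget in standard popularity-forcing gadgets so that a popular matching, when one exists, must respect the selection structure and close exactly $\mclosed$ projects.

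The hardest part will be decoupling the two jobs the quotas have to do simultaneously: controlling the \emph{exact} closed count and enforcing the conflict constraints. I must ensure that whatever way an edge/clique gadget is satisfied, it never secretly opens or closes an additional project and thereby lets the closure count drift away from $\mclosed = k$; this is why the conflict gadgets are kept on projects and applicants disjoint from the tightly-quota'd vertex-projects. The second difficulty is that the four target notions compare a candidate matching against a single reference matching (the verification problems) versus against all matchings (finding a popular or perfect Pareto optimal one), so the same skeleton has to be completed with genuinely different optimality gadgets while preserving the delicate closure-count bookkeeping.
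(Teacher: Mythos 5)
Your skeleton (multicolored selection gadgets with tight quotas, one closed project per color, conflict gadgets on edges, parameter tied to $k$) is essentially the paper's approach for \maxparetonsp, but the proposal has a genuine gap for the verification problems. You propose to reduce \textsc{multicolored independent set} to \parvernsp, using the \emph{existence} of a dominating matching with the prescribed closed count as the certificate of independence. This inverts the quantifier structure: independence is a universal condition (\emph{no} edge among the selected vertices), whereas a dominating matching is an existential witness, and adding edges among the selected vertices only creates \emph{more} opportunities to rehouse applicants and dominate $M$, never fewer. Your edge gadgets would therefore have to \emph{block} domination exactly when an edge is present, and nothing in the proposal explains how to do this; the closed-count constraint alone does not obviously repair it. The paper avoids the issue by aligning quantifiers: it uses \textsc{multicolored independent set} only for \maxparetonsp, where the certificate is Pareto optimality (i.e., \emph{non-existence} of a dominating matching, matching the universal no-edge condition), and switches to \textsc{multicolored clique} for \parvernsp, \popvernsp, and \popularnsp, where the dominating (or more popular) matching exists precisely when all $\binom{k}{2}$ edges are present. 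Relatedly, in those clique reductions the closed projects are not one vertex project per color but the $\binom{k}{2}$ \emph{edge} projects of the clique (with $\mclosed = \binom{k}{2}$, resp.\ $\binom{k}{2}+3$), and the vertex selection is read off from which applicants are promoted to the color projects; the closure count itself is what forces the selected vertices to span all $\binom{k}{2}$ edges.

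A smaller point: you identify as ``the hardest part'' making the conflict gadgets closure-neutral so that $\mclosed$ can be kept at exactly $k$. This difficulty is self-imposed and unnecessary. Since $\wone$-hardness only requires the new parameter to be bounded by a function of $k$, the paper simply lets the detector projects be closed in the reference configuration and sets $\mclosed = k + \binom{k}{2}$ for \maxparetonsp; you should do the same rather than fight for closure-neutrality. Finally, invoking Lemma~\ref{lem:pop_ver_maxm} for \popularnsp does not help as stated, since that lemma compares candidate matchings against one fixed $M$, while \popularnsp quantifies over all matchings; the paper instead forces structure on every popular matching directly via Condorcet-cycle dummy gadgets (Observation~\ref{obs:condorcet}).
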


\section{Open questions}
There are two major open questions and future research directions that could be derived from our paper. Firstly, the question whether \popular is in FPT when parameterized by $\mquota$ is still open. Secondly, even after the papers of Arulselvan et al.~\cite{ACGMM18}, Dudycz and Paluch~\cite{DP18}, and now our paper it is still open whether \wmatch with maximum lower quota $2$ or the general factor problem with gap at most $1$ can be solved in polynomial time, \ie by eliminating the linear factor on $W$, the largest weight.

\bibliography{quotas}

\newpage
\section{Appendix}

\subsection{Parameterized complexity}
\label{subs:parameterized_comp}
Here we provide a short overview of parameterized complexity and the problems we use in our reductions. For further background on parameterized complexity and algorithms we refer the reader to the textbook by Cygan et al.~\cite{cygan2015parameterized}.

A parameterized problem consists of two parts, the instance $\mathcal{I}$ and a parameter~$k$. The problem is in FPT (and called fixed parameter tractable) if there is some algorithm solving it in time $\mathcal{O}(f(k) \lvert \mathcal{I} \rvert ^{\mathcal{O}(1)})$ for some computable function~$f$. A parameterized problem $P$ is reducible to another parameterized problem $\hat P$ if there is some function $f$ that can be computed in time $\mathcal{O}(g(k) \lvert \mathcal{I} \rvert ^{\mathcal{O}(1)})$ for some computable function $g$, such that $(\mathcal{I}, k) \in P$ if and only if $f(\mathcal{I},k)  = ( \hat{\mathcal{I}}, \hat k)\in \hat P$ and $\hat k \le h(k)$ for some computable function $h$. It is easy to see that if $\hat P$ is in FPT, then so is $P$ as well. Related to this is the concept of a kernelization. Given an instance $(\mathcal{I},k)$ of a parameterized problem $P$, a kernelization is an algorithm that computes another instance $(\mathcal{I}',k')$ of $P$ in polynomial time, such that $(\mathcal{I},k) \in P$ if and only if $(\mathcal{I}',k') \in P$ and $\mathcal{I}' + k' \le f(k)$ for some computable function $k$. This can be seen as applying provably correct data reductions. It is commonly known that a problem $P$ is in FPT if and only if it has a kernelization, see~\cite{cygan2015parameterized}.   The class W[1] is now the class of all parameterized problems that have a parameterized reduction to the clique problem, parameterized by the solution size. We call a problem W[1]-hard if any problem in W[1] can be reduced to it in FPT time, further we call a problem W[1]-complete if it is W[1]-hard and also in W[1]. Since it is widely assumed that clique is not in FPT, this assumption extends to all W[1]-hard problems. 

The W[1]-complete problems we reduce from in this paper are as follows. 
Firstly we use a highly restricted version of the hitting set problem.
\dprob{\textsc{Exact unique hitting set~\cite{cygan2015parameterized}}}{Set of elements $X$, set system $\mathcal{T} \subseteq 2^X$, parameter $k$.} {Is there a set $H \subseteq X$ with $\lvert H \rvert = k$, such that $\lvert T \cap H \rvert = 1$ for all $T \in \mathcal{T}$?}
Furthermore due to their nice structural properties we use the multicolored versions of independent set and clique, shown to be W[1]-complete by Pietrzak~\cite{P03}.

\dprob{\textsc{Multicolored clique}}{Graph $G= (V,E)$ and partition $V_1, \dots, V_k$ of $V$.} {Is there a clique $C \subseteq V$, \ie a vertex set inducing a complete graph, such that $\lvert V_c \cap C\rvert = 1$ for all $c \in [k]$?}

\dprob{\textsc{Multicolored independent set}}{Graph $G= (V,E)$ and partition $V_1, \dots, V_k$ of $V$.} {Is there an independent set $I \subseteq V$, \ie a vertex set inducing a graph without any edges, such that $\lvert V_c \cap I\rvert = 1$ for all $c \in [k]$?}

\subsection{Examples}
\label{subs:examples}
We now present two characteristic instances we will later repeatedly use in our proofs.

\begin{restatable}{obs}{condorcetone}
Consider the instance $\mathcal{I}$ of \popular with three applicants $a_1, a_2, a_3$ and three projects $p_1, p_2, p_3$, each with a lower and upper quota of $1$, such that the preference list of each applicants is $p_1 \succ p_2 \succ p_3$. This instance does not admit a popular matching. 
\label{obs:condorcet}
\end{restatable}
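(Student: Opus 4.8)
The plan is to show directly that \emph{every} feasible matching of $\mathcal{I}$ is beaten by some strictly more popular matching, which immediately rules out a popular matching. The key structural fact I would exploit is that every project has $\ell_p = u_p = 1$, so a feasible matching opens exactly as many projects as it covers applicants; a matching covering $k \le 3$ applicants therefore leaves exactly $3-k$ projects closed and $3-k$ applicants unmatched. I would then split the argument according to whether the matching is perfect.

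First I would dispose of the non-perfect case. If $M$ covers $k < 3$ applicants, there is at least one closed project $p$ and at least one unmatched applicant $a$; since $a$ lists all three projects, $p$ is acceptable to $a$. Moving $a$ onto $p$ yields a feasible matching $M'$ (the project $p$ now has exactly one applicant, meeting its quota $\ell_p = u_p = 1$), in which $a$ strictly improves from being unmatched to being matched while every other applicant keeps her partner. Hence $M'$ is more popular than $M$, so no non-perfect matching can be popular.

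For the perfect case I would invoke the cyclic ``promotion'' argument that gives this instance its Condorcet-paradox flavour. Write $a_{(i)}$ for the applicant matched to $p_i$ in $M$, and define $M'$ by sending the occupant of $p_2$ to $p_1$, the occupant of $p_3$ to $p_2$, and the occupant of $p_1$ down to $p_3$. This $M'$ is again a permutation of the applicants onto the projects, hence feasible. Because all applicants share the order $p_1 \succ p_2 \succ p_3$, the applicants $a_{(2)}$ and $a_{(3)}$ both strictly improve while only $a_{(1)}$ is worse off, so two applicants favour $M'$ against the single one favouring $M$, making $M'$ more popular. Combining the two cases shows that no feasible matching survives, which proves the claim.

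The argument presents no genuine obstacle; the only points demanding care are verifying feasibility of each rival matching (every touched project still has exactly one assigned applicant, respecting $\ell_p = u_p = 1$) and picking the correct direction of the cyclic shift — shifting occupants the other way ($p_1 \to p_2 \to p_3 \to p_1$) would make two applicants worse off and would \emph{fail} to yield a more popular matching.
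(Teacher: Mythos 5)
Your proof is correct and follows essentially the same route as the paper's: first rule out non-perfect matchings by matching an unmatched applicant to a closed project, then beat any perfect matching with the cyclic shift that promotes the occupants of $p_2$ and $p_3$ while demoting the occupant of $p_1$ (the paper fixes the perfect matching by symmetry instead of indexing by occupant, but the rival matching is the same). No gaps.
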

\begin{proof}
    A matching that is not perfect cannot be popular, since the matching obtained by additionally pairing an unmatched applicant to an unmatched project is preferred by this one applicant, while the other applicants are indifferent. 
    Since the instance is symmetric, we can assume without loss of generality that if there is a popular matching then it is ${M = \lbrace (a_1, p_1),(a_2,p_2),(a_3,p_3) \rbrace}$. 
    Let us consider the matching $\Mp = \lbrace (a_1, p_3),(a_2,p_1),(a_3,p_2) \rbrace$. 
    Since $\Mp$ is preferred to $M$ by $a_2$ and $a_3$, while only $a_1$ prefers $M$ to $\Mp$, $\Mp$ is more popular than~$M$.
\qed\end{proof}
This also implies that for any instance containing three projects with lower and upper quota $1$ and three applicants with the same relative preference order over these three projects, in any popular matching, at least one applicant needs to be matched to a project different from those three. Note that if we would delete one of the three applicants, a popular matching would indeed be possible, since we could match the remaining two applicants to $p_1$ and $p_2$, thus ensuring that none of them could improve without the other applicant getting worse. 
As the second example we introduce a small gadget we also use in a later proof, which will also be an instance of the famous Condorcet cycle.
\begin{restatable}{obs}{condorcettwo}
Consider the instance $\mathcal{I}$ of \popular with three applicants $a_1, a_2, a_3$ and three projects $p_1, p_2, p_3$, each with a lower and upper quota of $3$, such that the preference list of $a_1$ is $p_1 \succ_{a_1} p_2 \succ_{a_1} p_3$, the preference list of $a_2$ is $p_2 \succ_{a_2} p_3 \succ_{a_2} p_1$ and the preference list of $a_3$ is $p_3 \succ_{a_3} p_1 \succ_{a_3} p_2$, \ie the preference lists are just cyclically shifted between the applicants. This instance does not admit a popular matching.
\label{obs:cycl_low3}
\end{restatable}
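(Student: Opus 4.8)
The plan is to exploit the extremely restrictive quota constraints to enumerate all feasible matchings, and then to show that each of them is beaten by another feasible matching, following the structure of the underlying Condorcet cycle.

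The key structural step is the observation that, because every project has lower quota and upper quota both equal to $3$ while there are only three applicants in total, any \emph{open} project must receive \emph{exactly} the three applicants $a_1, a_2, a_3$. Consequently at most one project can be open in any feasible matching, and the only feasible matchings are the empty matching together with the three matchings $M_1, M_2, M_3$, where $M_i$ assigns all three applicants to project $p_i$. This enumeration is essentially the only place where the quota value $3$ is used; once it is in hand, the remainder is routine vote-counting.

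Next I would dispose of the empty matching. Since every applicant lists all three projects and prefers any listed project to remaining unmatched (recall $p \succ_a a$ for all $p \in N_a$), all three applicants strictly prefer each $M_i$ to the empty matching. Hence any $M_i$ is more popular than the empty matching by three votes to zero, so the empty matching is not popular.

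It then remains to compare the three single-project matchings pairwise. Here the cyclic shift of the preference lists reproduces exactly a Condorcet cycle on $\lbrace p_1, p_2, p_3 \rbrace$: a strict majority (two of the three applicants) prefers $p_1$ to $p_2$, $p_2$ to $p_3$, and $p_3$ to $p_1$. Translating this into matchings, $M_2$ is beaten by $M_1$, $M_3$ is beaten by $M_2$, and $M_1$ is beaten by $M_3$, each time by two votes to one. Thus every single-project matching is more popular-dominated by another one, and combined with the previous paragraph, every feasible matching is beaten by some other feasible matching, so no popular matching exists. I expect no real obstacle here: the only mild work is the bookkeeping of the three pairwise comparisons, and these follow immediately from the cyclic symmetry of the instance.
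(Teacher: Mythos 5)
Your proposal is correct and follows essentially the same route as the paper's proof: the quota constraints reduce the feasible matchings to the empty matching and the three single-project matchings $M_1, M_2, M_3$, the empty matching is unanimously beaten, and the cyclically shifted preferences yield a Condorcet cycle among $M_1, M_2, M_3$ with all pairwise comparisons in the correct direction.
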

\begin{proof}
    Any non-perfect matching has to be empty in this instance due to the lower quota of $3$ on each project. Thus matching all three applicants to any of the projects would lead to a matching that is preferred by all three applicants. Therefore we only need to consider the three perfect matchings
    \begin{itemize}
        \item $M_1$ with  $M_1(p_1) = \lbrace a_1, a_2, a_3\rbrace$,
        \item $M_2$ with  $M_2(p_2) = \lbrace a_1, a_2, a_3\rbrace$,
        \item $M_3$ with  $M_3(p_3) = \lbrace a_1, a_2, a_3\rbrace$.
    \end{itemize}  
    Then the following statements hold.
    \begin{itemize}
        \item The applicants $a_2$ and $a_3$ prefer $M_3$ to $M_1$, while $a_1$ prefers $M_1$ to $M_3$, making $M_3$ more popular than $M_1$.
        \item  The applicants $a_1$ and $a_2$ prefer $M_2$ to $M_3$, while $a_3$ prefers $M_3$ to $M_2$, making $M_2$ more popular than $M_3$.
        \item The applicants $a_1$ and $a_3$ prefer $M_1$ to $M_2$, while $a_2$ prefers $M_2$ to $M_1$, making $M_1$ more popular than $M_2$.
        \end{itemize}
     Thus this instance admits no popular matching.
\qed\end{proof}

\subsection{Missing proof %
from Section~\ref{subs:weighted_match}}

\popvermaxm*
\begin{proof}
    We start with the \popver case. 
    For applicant $a$ and projects $p_1, p_2 \in N_a$ let \[\vote_a(p_1, p_2) = 
    \begin{cases*}2, &if $p_1 \succ_a p_2$, \\
    1, &if  $p_1 = p_2$, \\
    0, &if  $p_2 \succ_a p_1$.
    \end{cases*}\] 
    Note that based on the definition of popularity, matching $\Mp$ is more popular than matching $M$ if $\lvert\lbrace a \in A \mid \Mp(a) \succ_a M(a) \rbrace \rvert > \lvert\lbrace a \in A \mid M(a) \succ_a \Mp(a) \rbrace \rvert$ 
    , which is equivalent to $\sum_{a \in A} \vote_a(\Mp(a), M(a)) > n$. 
    Further let $U(M) \coloneqq \lbrace a \in A \mid M(a) = a\rbrace$ be the set of applicants left unmatched by~$M$.
    
    \subsubsection{Construction}
    For our weighted matching instance $\mathcal{I}'$ we now take the same graph as in the original instance $\mathcal{I}$, and introduce the weight function $w\colon E \to \lbrace 0,1,2\rbrace$
    such that for any $a \in A \setminus U(M)$ and project $p \in  N_a$ we set $w(\lbrace a,p \rbrace) = \vote_a(p, M(a))$, further for any $a \in U(M)$ and project $p \in  N_a$ we set $w(\lbrace a,p \rbrace) = 1$.
    We claim that a matching $M'$ is more popular than $M$ in $\mathcal{I}$ if and only if 
    $w(M') > n - \lvert U(M) \rvert$ in~$\mathcal{I}'$. Since the weight of $M$ in $\mathcal{I}'$ is exactly $n-\lvert U(M)\rvert$ this is sufficient to show. 
    
    \subsubsection{Correctness}
    Let $\Mp$ be a matching that is more popular than $M$ in~$\mathcal{I}$. By the definition of popularity it holds that $\lvert\lbrace a \in A \mid \Mp(a) \succ_a M(a) \rbrace \rvert > \lvert\lbrace a \in A \mid M(a) \succ_a \Mp(a) \rbrace \rvert$. We now calculate the weight of $M'$ in~$\mathcal{I}'$. 
    \begin{align*} w(\Mp) &= 
    2 \lvert\lbrace a \in A \setminus U(M) \mid \Mp(a) \succ_a M(a) \rbrace \rvert  + \lvert\lbrace a \in U(M) \mid \Mp(a) \succ_a a \rbrace \rvert \\ 
    &+ \lvert\lbrace a \in A \setminus U(M) \mid \Mp(a) = M(a) \rbrace \rvert  \\ 
    & \stackrel{(1)}{>}\lvert\lbrace a \in A \setminus U(M) \mid \Mp(a) \succ_a M(a) \rbrace \rvert  +  \lvert\lbrace a \in A \mid M(a) \succ_a \Mp(a) \rbrace \rvert \\ 
    &+ \lvert\lbrace a \in A \setminus U(M) \mid \Mp(a) = M(a) \rbrace \rvert \\
    &\ge  \lvert\lbrace a \in A \setminus U(M) \mid \Mp(a) \succ_a M(a) \rbrace \rvert  +  \lvert\lbrace a \in A \setminus U(M) \mid M(a) \succ_a \Mp(a) \rbrace \rvert \\ 
    &+ \lvert\lbrace a \in A \setminus U(M) \mid \Mp(a) = M(a) \rbrace \rvert \\
    & = \lvert A \setminus U(M) \rvert \\
    &= n - \lvert U(M) \rvert \end{align*}
    Inequality~(1) follows from the fact that $\lvert\lbrace a \in A \setminus U(M) \mid \Mp(a) \succ_a M(a) \rbrace \rvert  + \lvert\lbrace a \in U(M) \mid \Mp(a) \succ_a a \rbrace \rvert = \lvert\lbrace a \in A \mid \Mp(a) \succ_a M(a) \rbrace \rvert > \lvert\lbrace a \in A \mid M(a) \succ_a \Mp(a) \rbrace \rvert$. 
    
    Now assume that $\Mp$ is not more popular than $M$ in~$\mathcal{I}$. Then by definition we know that $\lvert\lbrace a \in A \mid \Mp(a) \succ_a M(a) \rbrace \rvert \le  \lvert\lbrace a \in A \mid M(a) \succ_a \Mp(a) \rbrace \rvert$. By analogous calculations as in the previous case we get the following. 
    \begin{align*}
        w(\Mp) &= 2 \lvert\lbrace a \in A \setminus U(M) \mid \Mp(a) \succ_a M(a) \rbrace \rvert  + \lvert\lbrace a \in U(M) \mid \Mp(a) \succ_a a \rbrace \rvert  \\
        &+ \lvert\lbrace a \in A \setminus U(M) \mid \Mp(a)= M(a) \rbrace \rvert \\
        &\stackrel{(1)}{=} n + \lvert\lbrace a \in A \setminus U(M) \mid \Mp(a) \succ_a M(a) \rbrace \rvert -  \lvert\lbrace a \in U(M) \mid \Mp(a) = a \rbrace \rvert \\
        & -  \lvert\lbrace a \in A \mid M(a) \succ_a \Mp(a) \rbrace \rvert  \\
        &\stackrel{(2)}{=} n + \lvert\lbrace a \in A \setminus U(M) \mid \Mp(a) \succ_a M(a) \rbrace \rvert + \lvert\lbrace a \in U(M) \mid \Mp(a) \succ_a a \rbrace \rvert\\
        & - \lvert\lbrace a \in U(M) \mid \Mp(a) \succ_a a \rbrace \rvert -  \lvert\lbrace a \in U(M) \mid \Mp(a) = a \rbrace \rvert \\
        & -  \lvert\lbrace a \in A \mid M(a) \succ_a \Mp(a) \rbrace \rvert  \\
        & = n + \lvert\lbrace a \in A \mid \Mp(a) \succ_a M(a) \rbrace \rvert \\
        & - \lvert U(M) \rvert  - \lvert\lbrace a \in A \mid M(a) \succ_a \Mp(a) \rbrace \rvert\\ 
        &\stackrel{(3)}{\le} n + \lvert\lbrace a \in A \mid M(a) \succ_a \Mp(a) \rbrace \rvert -  \lvert U(M) \rvert \\
        & -  \lvert\lbrace a \in A \mid M(a) \succ_a \Mp(a) \rbrace \rvert  \\
        & = n - \lvert U(M) \rvert
    \end{align*} 
    Equality~$(1)$ follows from the fact that 
    \[
    n = \lvert\lbrace a \in A \mid \Mp(a) \succ_a M(a) \rbrace \rvert + \lvert\lbrace a \in A \mid M(a) \succ_a \Mp(a) \rbrace \rvert + \lvert\lbrace a \in A \mid M(a) = \Mp(a) \rbrace \rvert.
    \] 
    In Equality~(2) we added $0 =  \lvert\lbrace a \in U(M) \mid \Mp(a) \succ_a a \rbrace \rvert - \lvert\lbrace a \in U(M) \mid \Mp(a) \succ_a a \rbrace \rvert $ and in Inequality~(3) we used $\lvert\lbrace a \in A \mid \Mp(a) \succ_a M(a) \rbrace \rvert \le  \lvert\lbrace a \in A \mid M(a) \succ_a \Mp(a) \rbrace \rvert$, which follows from our assumption that $M'$ is not more popular than~$M$.
    
    With this we have shown that a matching $M$ is popular in $\mathcal{I}$ if and only if there is no matching with weight larger than $n-\lvert U(M) \rvert$ in~$\mathcal{I}'$.
    
    With regard to the \parver case in our instance $\mathcal{I}'$ we again have the same graph as in $\mathcal{I}$ and add the weight function $w\colon E\to\lbrace 0,1,n,n+1\rbrace$, such that for any $a \in A \setminus U(M)$ and $p \in  N_a$ 
    \begin{equation*}
    w(\lbrace a,p \rbrace) = 
    \begin{cases*}
    n+1, &if  $h \succ_a M(a)$ \\
    n, &if  $h = M(a)$\\
    0, &if $M(a) \succ_a h$
    \end{cases*}
    \end{equation*}
    and if $a \in U(M)$ and $p \in  N_a$ we set $w(\lbrace a,p \rbrace) = 1$. We claim that a matching $\Mp$ dominates $M$ in $\mathcal{I}$ if and only if $w(\Mp) > n(n-\lvert U(M)\rvert) = w(M).$ 
    
    \subsubsection{Correctness}
    First assume that there is a matching $\Mp$ that dominates $M$ in $\mathcal{I}$. Then there is at least one applicant who prefers $\Mp$ to $M$ while no applicant prefers $M$ to $\Mp$ and thus based on the definition, the weight of the matching $\Mp$ in $\mathcal{I}'$ is larger than $ n(n-\lvert U(M)\rvert)$.
    Similarly, if $\Mp$ does not dominate $M$ in $\mathcal{I}$, then either there is an applicant in $A \setminus U(M)$ preferring $M$ to $\Mp$, thus contributing no weight to the matching and therefore the weight of the matching $\Mp$ in $\mathcal{I}'$ cannot possibly reach $n(n-\lvert U(M)\rvert)$ or each applicant in $\mathcal{I}$ is indifferent between $M$ and $\Mp$, which implies $M = \Mp$ and thus $w(\Mp) = n(n-\lvert U(M)\rvert)$ in~$\mathcal{I}'$.
\qed\end{proof}

\subsection{Missing proofs from Section~\ref{sec:constantlq}}

\popvlqthr*

For better readability we split this theorem into three separate statements. 

\begin{restatable}{thm}{popvlqthrapp}
\popver is \NP-complete, even with $\lmax = 3 = \umax$, $\degA = 4$, and $\degH = 3$.
	\label{thrm:pop_lq3}
\end{restatable}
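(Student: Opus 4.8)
The plan is to establish membership in \NP\ first and then \NP-hardness by a reduction from the bounded-occurrence variant of \textsc{x3c}. Membership is immediate: a witness is a single matching $\Mp$, and checking that $\Mp$ is feasible (each open project meets its quota) and that it beats $M$ in a head-to-head vote can be done in $\mathcal{O}(\lvert E\rvert)$ time, so \popver is in \NP. For hardness I would reduce from \textsc{x3c} restricted to instances in which every element lies in at most three sets; this restriction keeps \textsc{x3c} \NP-complete and is exactly what I need to meet the degree bound $\degA = 4$.

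Given an \textsc{x3c} instance with elements $X = \{x_1,\dots,x_{3m}\}$ and sets $\mathcal{T} = \{T_1,\dots,T_n\}$, the core of the construction is one applicant $a_j$ per element and one \emph{set-project} $s_i$ per set $T_i$, where $s_i$ is adjacent exactly to the three element-applicants of $T_i$ and carries $\ell_{s_i} = u_{s_i} = 3$. The identity $\ell_{s_i} = u_{s_i} = 3$ makes every set-project all-or-nothing: an open $s_i$ must hold precisely the three elements of $T_i$, so choosing which set-projects a competing matching opens is literally choosing a subfamily of $\mathcal{T}$, and the requirement that each applicant be assigned to at most one project already forbids an element from sitting in two opened sets. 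Thus a subfamily that leaves every element assigned to exactly one opened set-project is precisely an exact cover. The degree bounds then fall out of this core: each $s_i$ sees only its three elements, giving $\degH = 3$, while each $a_j$ is adjacent to at most three set-projects plus one auxiliary project, giving $\degA = 4$. The remaining design work is to fix a reference matching $M$ together with a layer of auxiliary applicants and projects whose only purpose is to calibrate the vote balance, so that a competing matching $\Mp$ earns a strict majority over $M$ if and only if the set-projects it opens form an exact cover; since a yes-instance of \popver\ means $M$ is beaten, this gives the equivalence ``\textsc{x3c} solvable $\iff$ \popver\ answers yes''.

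The crux — and the step I expect to be the main obstacle — is exactly this calibration, and it is where the popularity version departs from the Pareto-optimality reduction of Cechl\'arov\'a and Fleiner~\cite{CF17}. For domination it suffices to make one applicant strictly better while nobody is worse, but for popularity I must guarantee that a proper subfamily, a partial packing, or any reshuffling of the auxiliary gadgets never produces more up-votes than down-votes against $M$, whereas a full exact cover does. Naively matching each element to a private fallback project would make every unused element vote-neutral, so even a single disjoint set would yield a cheap net gain and destroy the equivalence; the reference matching and the auxiliary gadgets must therefore be arranged so that any element \emph{not} lifted into an opened set-project is actively pushed to a worse project and casts a down-vote, forcing $\Mp$ to cover \emph{all} elements before it can overtake $M$. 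I would implement this penalty structure with small cyclic gadgets in the spirit of Observation~\ref{obs:cycl_low3}, which inject controlled, constant vote imbalances at quota $3$, and then verify two things: that an exact cover yields a matching beating $M$ by a strict majority, and — the delicate ``only-if'' direction — that no matching mixing partial set openings with moves inside the auxiliary gadgets can reach a majority unless it encodes an exact cover. Establishing this second direction within the tight degree and quota constraints is the technical heart of the argument.
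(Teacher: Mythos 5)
Your skeleton matches the paper's: membership in \NP{} is immediate, and the hardness reduction is from \textsc{x3c} restricted to elements occurring in at most three sets, with one set-project of lower and upper quota $3$ per set so that opening set-projects is exactly choosing a packing subfamily, and with each element-applicant seeing her at most three set-projects plus one fallback, giving $\degA = 4$ and $\degH = 3$. However, there is a genuine gap: the entire vote-calibration layer, which you yourself identify as ``the technical heart,'' is left unconstructed, and it is precisely this layer that carries the proof. The paper resolves it as follows. Each element $x_i$ gets \emph{three} applicants $b_i, y_i, z_i$ and two auxiliary projects: $d_i$ with $\ell_{d_i} = u_{d_i} = 3$ (adjacent exactly to $b_i, y_i, z_i$) and $e_i$ with $\ell_{e_i} = u_{e_i} = 2$ (adjacent to $y_i$ and $z_{i+1}$). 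The reference matching is $M(d_i) = \lbrace b_i, y_i, z_i\rbrace$. Because $d_i$ has quota exactly $3$ and only three neighbors, freeing $b_i$ to join a set-project forces $d_i$ to close, which pushes $y_i$ and $z_i$ onto the $e$-projects; the quota-$2$ projects $e_i$ chain consecutive elements together, the preferences of the $y_i$ alternate with the parity of $i$, and the number of elements is made odd, so that a full exact cover yields a net vote of exactly $+1$ (all $b_i$ gain, all $z_i$ lose, and the odd-index $y_i$ outnumber the even-index ones by one). The only-if direction is then a nontrivial block-decomposition argument: the applicants are partitioned into consecutive blocks between uncovered elements, and each block is shown to contribute a non-positive net vote, so no partial cover can beat $M$.

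Two further points. First, your one concrete design suggestion---implementing the penalty structure with cyclic gadgets ``in the spirit of Observation~\ref{obs:cycl_low3}''---points in the wrong direction for the \emph{verification} problem: that observation describes a sub-instance admitting no popular matching at all, and embedding such a gadget would make the reference matching $M$ beatable locally regardless of the \textsc{x3c} instance, turning every instance into a yes-instance. (The paper does use that observation, but in the separate \coNP-hardness proof for \popularnsp, where copies of each set-project are added precisely to \emph{force} set-projects closed.) Second, your worry that ``a private fallback project would make every unused element vote-neutral'' is well placed, and the paper's fix is exactly the non-neutrality you call for---but achieving it while keeping $\lmax = \umax = 3$ and the stated degree bounds requires the specific $d_i$/$e_i$ chain and the parity/oddness bookkeeping above, none of which can be treated as routine ``remaining design work.'' As written, the proposal is a correct plan for the easy half of the reduction and an accurate diagnosis of where the difficulty lies, but not a proof.
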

\begin{proof}
    The membership in \NP \; immediately follows from the fact it can be checked in polynomial time whether a matching is feasible and if it is more popular than another matching. 
	To show the hardness we reduce from exact cover by 3-sets. 
	 We note that \textsc{x3c} is \NP-complete, even if each element of $X$ appears in at most three sets~\cite{GJ79}. Furthermore, we can assume that $m$ is odd, because for an even $m$ we can simply add three elements that appear in no other set, and a set covering exactly them.
	
	\subsubsection{Construction}
	We create three sets of applicants $B = \lbrace b_1, \dots, b_{3m} \rbrace$, $Y = \lbrace y_1, \dots, y_{3m} \rbrace$, $Z = \lbrace z_1, \dots, z_{3m} \rbrace$ and we set $A = B \cup Y \cup Z$. Furthermore we create three sets of projects $C = \lbrace c_1, \dots, c_n\rbrace$, $D = \lbrace d_1, \dots, d_{3m} \rbrace$, $E = \lbrace e_1, \dots, e_{3m} \rbrace$ and set $P = C \cup D \cup E$. For these projects, we set $\ell_{c_i} = u_{c_i} = 3$, $\ell_{d_i} = u_{d_i} = 3$, and $\ell_{e_i} = u_{e_i} = 2$. 
	
	Next we define the preferences of the applicants, here we take the indices modulo $3m$, \ie $1-1 = 3m$ in our notation. Let $i \in [3m]$ and $T_{j}, T_{k}, T_{l}$ be the three sets that contain~$x_i$. For each $i$, we give applicant $b_i$ the preference list $c_{j} \succ c_{k} \succ c_{l} \succ d_i$, and $z_i$ gets the preference list $d_i \succ e_{i-1}$. For $Y$, we differentiate on the parity of $i$: if $i$ is odd, then we give $y_i$ the preference list $e_i \succ d_i$ and otherwise, if $i$ is even, then $y_i$ gets the preference list $d_i \succ e_i$---these applicants are the ones where we deviate from the proof of Cechl\'arov\'a and Fleiner~\cite{CF17}.
	
	We next show that an exact cover exists if and only if the matching $M$ defined by $M(d_i) = \lbrace b_i, y_i, z_i\rbrace$ for all $i \in [3m]$ is unpopular. 
	
	\subsubsection{$\Rightarrow$}
	Firstly assume that we have an exact cover $T_{i_1}, \dots, T_{i_m}$.
	Now we create a matching $\Mp$ with $\Mp(b_i) = c_{i_j}$ if $x_i \in T_{i_j}$, $\Mp(z_i) = e_{i-1}$ and $\Mp(y_i) = e_i$. This matching is feasible, since we have an exact cover and thus every $c_{i_j}$ is matched to exactly the three applicants from $B$ that could be matched to it.  
	Every $b_i$ and all $y_i$ with an odd index prefer $\Mp$ to $M$ while all $z_i$ and $y_i$ with an even index prefer $M$ to $\Mp$. Tallying the votes, it turns out that $\Mp$ is more popular than $M$, because $|B| = |Z|$, and since $m$ was odd, there are more $y_i$ applicants with an odd index than with an even index. 
	
	\noindent\subsubsection{ $\Leftarrow$} Next assume that there is a more popular feasible matching $\Mp$. If all $b_i$ are matched to some $c_{j}$ then we trivially have an exact cover. 
	Otherwise assume we have $b_{i_1}, \dots, b_{i_k}$,  who are not matched to a $c_j$ and assume that $i_1 < \dots < i_k.$
	We let $A_{i_j} \coloneqq \lbrace b_{i_\alpha}, y_{i_{\alpha}+1}, z_{i_\alpha} \mid i_j \le i_\alpha < i_{j+1}\rbrace $. The vote for $\Mp$ is simply the sum over all the votes in all the $A_{i_j}$. 
	If $\lvert A_{i_j} \rvert = 3$, then $b_{i_{j+1}}$ has to be unmatched as well. Thus either $b_{i_j}, z_{i_j}$, and $y_{i_{j+1}}$ all stay the same or $b_{i_j}, z_{i_j}$ both have to get worse. Therefore the vote of $A_{i_j}$ in this case is non-positive. 
	If $\lvert A_{i_j} \rvert > 3$, it is easy to see that 
	the sum over all votes except for $b_{i_j}, y_{i_j+1}$, and $z_{i_j}$ can be at most one, since the $y_{i_\alpha}$ are alternating and the $b_{i_j}$ and $z_{i_j}$ cancel out. 
	Furthermore it has to hold that either $b_{i_j}$ is matched to $d_{i_j}$ and thus $y_{i_{j}+1}$ is unmatched and votes negatively or $b_{i_j}$ is unmatched and $z_i$ also has to vote negatively. Therefore the vote in $A_{i_j}$ can be at most $0$ and thus $M$ could not have been popular. 	
\qed\end{proof}
\begin{restatable}{thm}{popconp}
\popular is \coNP-hard, even with $\lmax = 3 = \umax$, $\degA = 10$, and $\degH = 3$.
\end{restatable}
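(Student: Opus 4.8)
The plan is to reduce from \textsc{x3c} to the \emph{complement} of \popular; that is, to build in polynomial time an instance $\mathcal I$ that admits a popular matching if and only if the given \textsc{x3c} instance has \emph{no} exact cover. Since \textsc{x3c} is \NP-complete, this reduction of an \NP-hard problem to the set of no-instances of \popular establishes the \coNP-hardness. I would start from the applicants $B \cup Y \cup Z$, the projects $C \cup D \cup E$, the quotas, and the preference lists of the construction behind Theorem~\ref{thrm:pop_lq3}, together with the distinguished matching $M$ with $M(d_i) = \{b_i, y_i, z_i\}$. That proof already shows that $M$ is popular precisely when no exact cover exists, so $M$ serves as the candidate popular matching in the ``no cover'' case.

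The gap between Theorem~\ref{thrm:pop_lq3} and the present statement is that unpopularity of the \emph{single} matching $M$ must be upgraded to nonexistence of \emph{any} popular matching. The heart of the new construction is therefore an auxiliary gadget, appended to the agents' preference lists and raising $\degA$ to $10$ while keeping $\degH = 3$ and $\lmax = 3 = \umax$, whose role is to guarantee the implication ``$M$ unpopular $\Rightarrow$ no matching is popular''. I would build this gadget around the Condorcet-cycle instance of Observation~\ref{obs:cycl_low3}: cyclically shifted copies of the parity structure sharing projects of lower and upper quota $3$, arranged so that whenever an exact cover exists, the natural ``cover'' matchings beat one another in a cycle, reproducing the no-Condorcet-winner behaviour of Observation~\ref{obs:cycl_low3}, whereas when no exact cover exists the cyclic competition collapses and $M$ remains the unique undominated matching.

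Concretely, I would proceed as follows. First, describe $\mathcal I$ in full and verify the parameter bounds ($\lmax = 3 = \umax$, $\degA = 10$, $\degH = 3$) by inspecting the longest preference list and the number of candidate applicants per project. Second, prove the easy direction: if there is no exact cover, then $M$ is popular, extending the vote-counting of Theorem~\ref{thrm:pop_lq3} to the enlarged instance and checking that no feasible matching collects a positive net vote against $M$. Third, prove the hard direction: if an exact cover exists, then every feasible matching $N$ is beaten by a strictly more popular one. For this I would classify $N$ by the set of projects it opens and exploit the cyclic dependency among the $d_i$ and $e_i$ projects (each $d_i$ can be open only if all of $b_i, y_i, z_i$ are assigned to it, and each $e_i$ only if $y_i$ and $z_{i+1}$ are), exhibiting in every case a re-routing along the cycle with positive net vote, mirroring the three mutually dominating matchings of Observation~\ref{obs:cycl_low3}.

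I expect this third step to be the main obstacle. Unlike the verification result, where only the fixed matching $M$ must be defeated, here one must rule out \emph{all} exponentially many feasible matchings, so the argument cannot simply exhibit one better matching but must produce, for an arbitrary $N$, a tailored improvement. The delicate part is handling matchings that open $c$-projects only partially, corresponding to a non-exact selection of sets, since these mix the $d/e$ parity votes with the cover votes; the parity trick and the oddness of $m$ must be combined with the cyclic gadget so that the net vote is forced to be positive in every residual configuration.
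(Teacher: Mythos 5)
Your route is the same as the paper's: reduce from \textsc{x3c} so that the constructed instance admits a popular matching if and only if there is no exact cover, reuse the $B\cup Y\cup Z$ / $C\cup D\cup E$ construction and the matching $M$ with $M(d_i)=\lbrace b_i,y_i,z_i\rbrace$ from Theorem~\ref{thrm:pop_lq3}, and graft on a Condorcet gadget in the spirit of Observation~\ref{obs:cycl_low3} to upgrade ``$M$ is unpopular'' to ``no matching is popular''. What you leave unspecified, however, is the one design choice that does all the work, and your sketch of the hard direction suggests you would head into a case analysis that the gadget is meant to make unnecessary. In the paper the gadget is not appended to the lists: each set-project $c_j$ is \emph{replaced} by three copies $c_j, c_j', c_j''$, each with lower and upper quota $3$ and adjacent only to the three applicants $b_i$ with $x_i\in T_j$, and those three applicants rank the triple in cyclically shifted orders in the position where $c_j$ stood (hence $\degA = 3\cdot 3+1=10$ and $\degH=3$). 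With this in place the ``cover exists'' direction splits into just two cases, neither of which needs the parity trick or the oddness of $m$: (i) any matching that opens a copy must assign exactly the three $b$-applicants of that set to it, and by Observation~\ref{obs:cycl_low3} shifting all three to the next copy wins two votes to one, so every such matching --- including all of your worrisome ``partial cover'' matchings --- is unpopular outright; (ii) any matching that opens no copy is either $M$ itself or less popular than $M$, and $M$ in turn loses to the cover matching exactly as in Theorem~\ref{thrm:pop_lq3}. The ``no cover'' direction is unaffected, since closing all copies reduces the instance to the verification one. So the approach is correct, but you must pin down the triplication; the $y_i$ parity structure and odd $m$ are only needed to compare $M$ against the cover matching, not to defeat arbitrary matchings.
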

\begin{proof}
	We reduce from \textsc{x3c} again. 
	
	\subsubsection{Construction} We follow the notation from the last reduction and create three sets of applicants $B = \lbrace b_1, \dots, b_{3m} \rbrace$, $Y = \lbrace y_1, \dots, y_{3m} \rbrace$, $Z = \lbrace z_1, \dots, z_{3m} \rbrace$ and we set $A = B \cup Y \cup Z$. 
	Furthermore we create five sets of projects $C = \lbrace c_1, \dots, c_n\rbrace$, $C^\prime = \lbrace c^\prime_1, \dots, c^\prime_n\rbrace$, $C^{\prime\prime} = \lbrace c^{\prime\prime}_1, \dots, c^{\prime\prime}_n\rbrace$,  $D = \lbrace d_1, \dots, d_{3m} \rbrace$, $E = \lbrace e_1, \dots, e_{3m} \rbrace$ and set $P = C \cup C^\prime \cup C^{\prime\prime} \cup D \cup E$. 
	The projects in $D, C, C^\prime$, and $C^{\prime\prime}$ all have upper and lower quota $3$, while the projects in $E$ have upper and lower quota~$2$. 
	Next let $i \in [3m]$ and let $T_{j}, T_{k}, T_{l}$ be the three sets that contain~$x_i$. 
	Now for any $T_p \in \lbrace T_{j}, T_{k}, T_{l} \rbrace$ 
	let\[
	\mathcal C(T_p, x_i) = \begin{cases}
		c_{p}, c_{p}^\prime, c_{p}^{\prime \prime}, \text{if } x_i \text{ is the element with the smallest index in } T_p; \\
		c_{p}^\prime, c_{p}^{\prime \prime},c_{p},  \text{if } x_i \text{ is the element with the second smallest index in } T_p; \\
		 c_{p}^{\prime \prime},c_{p},c_{p}^\prime,  \text{if } x_i \text{ is the element with the third smallest index in } T_p; \\
	\end{cases} \]
and we give $b_i$ the preference list  $\mathcal{C}(T_{j}, x_i)\succ \mathcal{C}(T_{k}, x_i) \succ \mathcal{C}(T_{l}, x_i) \succ d_i$. Note that this preference structure falls under Observation~\ref{obs:cycl_low3}, enforcing each of the $T_p$ projects to be closed. The applicant $z_i$ gets the preference list $d_i \succ e_{i-1}$. The preference list of $y_i$ is set to $e_i \succ d_i$ for each odd $i$, while its is set to $d_i \succ e_i$ for each even~$i$. 

~\subsubsection{$\Leftarrow$} Firstly if no exact cover exists, then it follows by the calculations in Theorem~\ref{thrm:popv_lq3} 
that the matching $M$ with $M(d_i) = \lbrace b_i, y_i, z_i\rbrace$ is popular, since all we added were additional copies of projects in $C$.

~\subsubsection{$\Rightarrow$}
If however an exact cover exists, then the matching $M$ is again unpopular.
Now assume that we have a matching $\Mp \neq M$. 
If there is no $b_i$ that is matched to any $c_{j}$, $c^\prime_{j}$ or $c^{\prime\prime}_{j}$ in $M'$, then $M$ is more popular than $M^\prime$, since this would improve the matching for any $b_i$ and $z_i$ not matched to $d_i$. 
If there is however a $b_i$ that is matched to some (without loss of generality) $c_{j}$, then there need to be $b_k, b_l$ that are matched to $c_{j}$ as well. 
Now this instance falls under Observation~\ref{obs:cycl_low3} and thus $\Mp$ was not popular. 
\qed\end{proof}
\begin{restatable}{thm}{polqthr}
	\maxpareto is $\NP$-complete even when $\lmax = 3 = \umax$ and $\degA = 3 = \degH$.
	\label{thrm:po_lq3}
\end{restatable}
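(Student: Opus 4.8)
The plan is to prove both directions of \NP-completeness by reducing from \textsc{x3c}, exploiting the fact that a lower and upper quota of exactly~$3$ together with a degree bound of~$3$ on the project side turns every project into an ``all-or-nothing'' triple. First I would settle membership in \NP, which is not immediate, since verifying Pareto optimality is itself \NP-hard (it is exactly \parver). The key observation is that for \emph{any} \maxpareto instance, a perfect Pareto optimal matching exists if and only if some perfect feasible matching exists. One direction is trivial; for the other, start from an arbitrary perfect feasible matching $M$ and, as long as $M$ is dominated, replace it by a dominating matching $\Mp$. Since no applicant is ever matched to a worse project under domination and every applicant strictly prefers each listed project to remaining unmatched, the dominating matching $\Mp$ cannot leave any applicant unmatched and is therefore again perfect. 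As each step strictly improves at least one applicant while worsening none, the sum of the assigned preference ranks strictly decreases, so the process terminates at a perfect Pareto optimal matching. Consequently a perfect feasible matching is a polynomially verifiable \NP-certificate, which yields membership.

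For hardness I would reduce from \textsc{x3c}, which remains \NP-complete when every element of $X$ lies in at most three sets~\cite{GJ79}. Given such an instance $(X,\mathcal T)$ with $|X| = 3m$, I create one applicant $a_i$ for each element $x_i$ and one project $c_j$ for each set $T_j$, set $\ell_{c_j} = u_{c_j} = 3$, and make $a_i$ and $c_j$ adjacent precisely when $x_i \in T_j$; the preferences of each applicant may be any fixed strict order over her (at most three) incident projects. Then $\degH = 3$, $\degA \le 3$, and $\lmax = 3 = \umax$, as required. Because each $c_j$ is listed by exactly the three applicants of $T_j$ and has quota exactly~$3$, an open project is matched to exactly those three applicants; hence a perfect feasible matching corresponds exactly to a family of pairwise disjoint triples covering all applicants, that is, to an exact cover of~$X$. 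Combining this with the equivalence established above, the constructed instance admits a perfect Pareto optimal matching if and only if $(X,\mathcal T)$ admits an exact cover.

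The main obstacle is the membership argument rather than the reduction: one must argue carefully that the improvement chain of dominating matchings preserves perfection and is well founded, since otherwise a perfect Pareto optimal matching could conceivably exist without any polynomially checkable perfect feasible matching witnessing it. Once this monotonicity is in place, both the reduction and the correspondence between open projects and chosen sets are routine, and the degree and quota bounds follow directly from $|T_j| = 3$ and from each element occurring in at most three sets.
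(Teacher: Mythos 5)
Your proposal is correct and follows essentially the same route as the paper: membership via the observation that a perfect feasible matching can always be improved along the dominance relation to a perfect Pareto optimal one, and hardness via the direct reduction from \textsc{x3c} with one quota-$3$ project per set and one degree-$3$ applicant per element, so that perfect matchings correspond exactly to exact covers. Your more detailed justification of the improvement-chain argument (perfection is preserved and the sum of ranks strictly decreases) simply spells out what the paper compresses into ``monotonicity of the dominance relation.''
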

\begin{proof}
    First we observe that if a perfect matching exists, then a perfect Pareto optimal matching also must exist, because of the monotonicity of the dominance relation. 
    This implies that the problem is in \NP.
	With regard to the hardness we give a simple reduction from \textsc{x3c}.
	For each set $T \in \mathcal{T}$ we create a project $p_T$ with $\ell_{p_T} = u_{p_T} = 3$, and for each element $x \in X$ that appears in the sets $T_i, T_j, T_k$ we create an applicant $a_x$ with edges to projects $p_{T_i}, p_{T_j}, p_{T_k}$, where these three projects appear in an arbitrary order in the preference list of~$a_x$. 
	
	It is obvious that an exact cover exists if and only if a perfect matching exists, which is equivalent to the existence of a perfect Pareto optimal matching.
\qed\end{proof}

In order to prove Theorem~\ref{thm:fastalg}, we first describe a faster algorithm for weighted matching with lower quota~2.

While the theorem of Dudycz and Paluch~\cite{DP18} gives us an easy classification result, we show how to combine the methods of Cornuéjols~\cite{Cor88} and Dudycz and Paluch~\cite{DP18} to get a faster algorithm for our special case, which is also easier to implement. For this we first need to review some definitions of Dudycz and Paluch~\cite{DP18}. We start with the definition of \emph{neighboring type}, a measure to define similarity between graph factors (or matchings in our case). For this we are given an instance of the \textsc{general factor problem}. For any $v \in V$ and $k \in B_v$, let $u_k(v) \in B_v$ be the maximum element such that either $B_v \cap [k, u_k(v)] \subseteq 2 \mathbb{Z}$ or $B_v \cap [k, u_k(v)] \subseteq 2 \mathbb{Z} +1$, \ie $B_v \cap [k, u_k(v)]$ only contains elements of the same parity as~$k$. Similarly we let $\ell_k(v)$ to be the minimum element such that  $B_v \cap [\ell_k(v), k]$ has the same parity of~$k$. In the following we also call any solution to the \textsc{general factor problem} a $B$-matching. Given a $B$-matching $M$ we define $B_M(v) \coloneqq [\ell_{\lvert M(v) \rvert}(v), u_{\lvert M(v) \rvert }(v)]$. Using this we can finally define what a matching of neighboring type is.
\begin{definition}[Neighboring type, Dudycz and Paluch~\cite{DP18}]
Given two $B$-matchings $\Mp$ and $M$ let $D = \lbrace v \in V \mid \lvert \Mp(v) \rvert \notin B_M(v) \rbrace$. Now we say that $\Mp$ is of neighboring type of $M$ if one of 
\begin{enumerate}
    \item $\lvert D \rvert = 0$
    \item $\lvert D \rvert = 2$ and for $v \in D$, $B_{\Mp}(v)$ and $B_M(v)$ are \emph{adjacent}, meaning that either $\max(B_{\Mp}(v)) + 1 = \min(B_{\Mp}(v))$ or  $\max(B_M(v)) + 1 = \min(B_{\Mp}(v))$
    \item $\lvert D \rvert = 1$ and for $v \in D$, either $\min(B_M(v)) - \max(B_{\Mp}(v)) = 2$ or $\min(B_{\Mp}(v)) - \max(B_M(v)) = 2$
\end{enumerate}
holds.
\end{definition}
This now brings us to one of the key theorems (\cite[Theorem~2]{DP18}) of their paper, namely that while trying to find a matching of larger weight it is sufficient to look at matchings of neighboring types.
\begin{proposition}[Dudycz and Paluch~\cite{DP18}]
    Given a $B$-matching $M$ there exists a $B$-matching of greater weight than $M$ if and only if there exists a $B$-matching of greater weight and neighboring type of $M$.
\end{proposition}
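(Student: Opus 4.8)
The ``if'' direction is immediate, since a $B$-matching of neighboring type and greater weight is in particular a $B$-matching of greater weight. The whole content lies in the ``only if'' direction: starting from an arbitrary heavier $B$-matching $M'$, I must manufacture a heavier one that is close to $M$.

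The plan is to work with the symmetric difference $H = M \triangle M'$, viewed as a two-coloured subgraph whose edges are coloured according to membership in $M$ or in $M'$. First I would decompose $H$ into alternating (colour-changing) paths and cycles by the standard Eulerian pairing of oppositely-coloured edges at each vertex; here every vertex $v$ is the endpoint of exactly $\lvert\,|M'(v)|-|M(v)|\,\rvert$ paths, and the unpaired path-edges meeting $v$ all share a single colour. Applying a cycle swaps its $M$- and $M'$-edges without changing any degree, while applying a path changes the degrees of its two endpoints by $\pm1$ each and fixes all other degrees. Since the total weight change $w(M')-w(M)$ is the sum of the per-component weight changes and is positive, at least one component (or a small combination of components) has positive weight change, and this is the augmenting object I would apply to $M$.

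The two easy cases come first. If some cycle has positive weight change, applying it yields a $B$-matching $M''$ with no degree changes at all, so $D=\emptyset$ and $M''$ is of neighboring type by clause~(1), with feasibility automatic. If instead some path has positive weight change \emph{and} applying it keeps both endpoint degrees inside the demand sets, then---because a $\pm1$ change flips parity---each endpoint necessarily leaves its same-parity block $B_M(v)$ and lands in the immediately adjacent block, so $|D|=2$ with adjacent blocks and $M''$ is of neighboring type by clause~(2). In both situations I am done.

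The remaining, genuinely hard, case is when every positive-weight augmenting path is infeasible on its own: pushing an endpoint degree by $\pm1$ lands it on a value missing from its demand set. This is exactly where the gap$\,\le1$ hypothesis does the work. Since at most one adjacent value is ever missing, the nearest feasible value in that direction is exactly $2$ away and of the same parity; and an offending endpoint $v$ with $|M'(v)|>|M(v)|$ must, by the degree count above, be the endpoint of at least two same-coloured paths. I would therefore combine two such paths at $v$ into a single augmenting object that shifts $v$'s degree by $\pm2$, moving it either within its block (clause~(1) at $v$) or across a single intervening opposite-parity value to the next same-parity block at distance~$2$ (clause~(3) at $v$), while each far endpoint moves by the usual $\pm1$. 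To turn this local manoeuvre into a clean statement I would run a minimality argument: among all $B$-matchings heavier than $M$, choose one minimising $|M\triangle M'|$, decompose as above, and show that if it failed to be of neighboring type then some sub-collection of its components could be peeled off to produce a still-heavier $B$-matching with strictly smaller symmetric difference, contradicting minimality. The main obstacle throughout is preserving feasibility of the partial augmentation---keeping every intermediate degree inside its demand set---and the crux is that gap$\,\le1$ forces the only feasible repairs to be precisely the $\pm2$ in-block / across-one-gap moves and the $\pm1$ adjacent-block moves that the three clauses of the neighboring-type definition were built to capture.
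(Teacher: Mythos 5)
First, a point of reference: the paper does not prove this statement at all --- it is imported verbatim as Theorem~2 of Dudycz and Paluch~\cite{DP18} --- so your attempt can only be measured against their proof. Your skeleton is the right one and matches theirs in outline: decompose $M \triangle M'$ into alternating paths and cycles with exactly $\bigl|\,|M(v)|-|M'(v)|\,\bigr|$ path-ends (all of one colour) at each vertex, dispatch the easy cases, and handle the rest by a minimal-counterexample/peeling argument. Your easy cases are also correct: a positive cycle gives $|D|=0$, and a positive path whose $\pm 1$ endpoint moves stay inside the demand sets gives $|D|=2$ with adjacent intervals, since a parity flip that stays in $B_v$ necessarily lands in the adjacent block.

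The genuine gap is in the hard case, and it stems from reading the neighboring-type definition as a per-vertex condition when it is a global condition on $D$. When you merge two paths sharing the endpoint $v$ (both unpaired edges at $v$ have one colour, so $v$'s degree shifts by $\pm 2$), the two far endpoints $w_1, w_2$ each shift by $\pm 1$, hence each flips parity and necessarily leaves its block, so $w_1, w_2 \in D$ no matter what. If $v$ stays inside $B_M(v)$ you get $|D|=2$ and clause~(2) applies; but in exactly the situation where you invoke clause~(3) --- $v$ jumping across a gap to the next same-parity block at distance $2$ --- you get $D = \lbrace v, w_1, w_2\rbrace$ with $|D| = 3$, which satisfies none of the three clauses. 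Clause~(3) demands $|D|=1$, which essentially forces a single component with both ends at $v$, not a combination of two paths with distinct far endpoints. Repairing this requires showing that a difference-minimal heavier matching admits a sub-collection of components whose application is feasible and realizes one of the three global patterns; moreover, your peeling step needs that one of $M \triangle S$ or $M' \triangle S$ remains a $B$-matching, which is not automatic. That feasibility-preserving selection, where the gap-$\le 1$ hypothesis does its real work, is precisely the multi-case technical core of Dudycz and Paluch's proof, and your sketch asserts it (``some sub-collection can be peeled off'') rather than carries it out.
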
%
This theorem now has consequences for our matching problem. First we take a look at what kind of matchings have a neighboring type.
\begin{restatable}{obs}{obs:neighboring}
Given an instance with $\lmax = 2$ and a matching $M$, a matching $\Mp$ is of neighboring type to $M$ if there are 
\begin{enumerate}
    \item arbitrarily many projects $p \in P$ with $\ell_p = 2$, such that $\lvert M(p) \rvert = 2$ and $\lvert \Mp(p) \rvert = 0$ or $\lvert M(p) \rvert = 0$ and $\lvert \Mp(p) \rvert = 2$ and
    \item at most one of 
    \begin{enumerate}
        \item exactly one project $p \in P$ with $\lvert M(p)\rvert \neq 0$ and $\lvert \Mp(p) \rvert = \lvert M(p)\rvert + 2 $ or  $\lvert \Mp(p)\rvert \neq 0$ and $\lvert \Mp(p) \rvert = \lvert M(p)\rvert - 2 $;
        \item two of projects $p \in P$, such that $\lvert \Mp(p) \rvert = \lvert M(p)\rvert \pm 1 $ or applicants $a \in A$ that are matched in $M$ and unmatched in $\Mp$ or matched in $\Mp$ and unmatched in $M$.
    \end{enumerate}
\end{enumerate}
\label{obs:neigh_type}
\end{restatable}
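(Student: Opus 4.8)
The plan is to prove the claim by first determining, for every vertex $v$, the interval $B_M(v)$ that governs membership in the set $D = \{v \in V : |\Mp(v)| \notin B_M(v)\}$, and then translating the three cases in the definition of neighboring type into the stated conditions on $|M(\cdot)|$ and $|\Mp(\cdot)|$. The starting point is that $|M(v)| \in B_M(v)$ always holds by construction, so every vertex with $|\Mp(v)| = |M(v)|$ lies outside $D$; hence only the vertices whose load actually changes between $M$ and $\Mp$ can enter $D$, and it suffices to analyse those.

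The heart of the argument is the computation of $B_M(v)$ for the three relevant vertex types. For an applicant $a$ we have $B_a = \{0,1\}$, and since $0$ and $1$ have different parity, both $\ell_{|M(a)|}(a)$ and $u_{|M(a)|}(a)$ equal $|M(a)|$, so $B_M(a) = \{|M(a)|\}$. For a project $p$ with $\ell_p \le 1$ the set $B_p = \{0,1,\dots,u_p\}$ is a contiguous interval, so again consecutive values alternate in parity and $B_M(p) = \{|M(p)|\}$. The interesting case is a project $p$ with $\ell_p = 2$, where $B_p = \{0\} \cup \{2,3,\dots,u_p\}$ has a single parity gap at $1$. Here one checks directly from the definitions that if $|M(p)| \in \{0,2\}$ then $u_{|M(p)|}(p) = 2$ and $\ell_{|M(p)|}(p) = 0$, so $B_M(p) = \{0,1,2\}$, whereas if $|M(p)| \ge 3$ the values above $2$ form a contiguous block and $B_M(p) = \{|M(p)|\}$ as before. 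Establishing this parity-driven widening of $B_M(p)$ to $\{0,1,2\}$ for the lower-quota-two projects is the main obstacle; everything else is bookkeeping on top of it.

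With these intervals in hand I would verify the three cases. The \emph{free} changes of condition~1 --- a project $p$ with $\ell_p = 2$ flipping between $|M(p)|=2$ and $|\Mp(p)|=0$ (or vice versa) --- satisfy $|\Mp(p)| \in \{0,2\} \subseteq \{0,1,2\} = B_M(p)$, hence $p \notin D$ and arbitrarily many of them may occur without affecting the neighboring type. A project covered by condition~2(a) enters $D$ and, in each subcase ($\ell_p \le 1$; $\ell_p = 2$ with $|M(p)|=2$; $\ell_p=2$ with $|M(p)|\ge 3$, together with the symmetric $-2$ variant), a short computation of $\max B_M(p)$ and $\min B_{\Mp}(p)$ (or the reversed pair) yields a gap of exactly $2$, which is precisely case~3 with $|D|=1$; the restrictions $|M(p)| \neq 0$ and $|\Mp(p)| \neq 0$ are exactly what prevents this case from overlapping the free flips of condition~1. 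A vertex covered by condition~2(b) --- a project with $|\Mp(p)| = |M(p)| \pm 1$ or an applicant switching between matched and unmatched --- enters $D$ with $B_{\Mp}(v)$ and $B_M(v)$ adjacent singletons, so having exactly two such vertices realises case~2 with $|D|=2$; and if neither (a) nor (b) occurs then $D = \emptyset$, which is case~1. Since conditions~1 and 2 force $D$ into exactly one of these three shapes, $\Mp$ is of neighboring type. The converse direction then follows by reading the same case analysis backwards, using that $B_M(v)$ is a singleton except for the $\ell_p=2$ projects identified above, so that the only way to obtain $|D| \le 2$ with the required adjacency or gap-two structure is through the enumerated configurations.
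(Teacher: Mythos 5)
Your proposal is correct and follows essentially the same route as the paper: a direct unpacking of the definition of neighboring type for the three vertex classes (applicants, projects with $\ell_p \le 1$, and projects with $\ell_p = 2$), matching each listed configuration to one of the three cases $\lvert D\rvert = 0, 1, 2$. You simply carry out explicitly the computation of $B_M(v)$ — in particular the widening to $\{0,1,2\}$ for lower-quota-$2$ projects with $\lvert M(p)\rvert \in \{0,2\}$ — that the paper's two-sentence argument leaves implicit.
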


\begin{proof}
    This mostly follows from the definition of neighboring type, the only way for the second case of the neighboring type definition to be feasible, is for an applicant to be matched/unmatched or for a project to get one less/more applicant matched to it. 
    
    Similarly the only way for the third case to happen, is for a project to gain two applicants, since the applicants only have a choice between $1$ and $0$ projects to be matched to them.
\qed\end{proof}
Now we want to turn this observation into an algorithm and for this we use the gadgets defined by~\cite{Cor88} to reduce our problem to the maximum weight matching problem without lower quotas.
For simplicity's sake we assume that each applicant has a unique last-resort with lower and upper quota $1$ project to which only she is connected with weight $0$, thus allowing us to assume that all applicants are matched in any matching. 
The structure of our algorithm is simple: we check all three cases induced by Observation~\ref{obs:neigh_type} and in cases 2a) and 2b) we iterate over all possible projects and applicants to change.
\begin{restatable}{lem}{lemalgneigh}
Given a weighted matching $M$ in an instance $\mathcal{I}$ of \wmatch with weight function $w$, there is an $\mathcal{O}(\lvert V \rvert\lvert E\rvert)$ time algorithm to check if there is a matching of neighboring type to $M$ fulfilling neither of 2a) and 2b) in Observation~\ref{obs:neigh_type}.
\end{restatable}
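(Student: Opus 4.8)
The plan is to read the phrase ``of neighboring type, fulfilling neither 2a) nor 2b)'' as the case $\lvert D\rvert = 0$ of the neighboring-type definition. By Observation~\ref{obs:neigh_type} such an $\Mp$ differs from $M$ only by toggling some projects with $\ell_p = 2$ between empty and full, together with reassignments of applicants that leave every other occupancy unchanged; a short check of the intervals $B_M(v)$ confirms that exactly these moves keep $D$ empty, whereas a $\pm 1$ change (an applicant leaving/entering a matching, or a lower-quota-$1$ project changing occupancy) is a 2b)-move and a $\pm 2$ change across the single gap of $B_p = \lbrace 0,2,\dots,u_p\rbrace$ is a 2a)-move. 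Since $M$ itself is trivially of this restricted form, the real content is to decide whether some such $\Mp$ is \emph{strictly heavier} than $M$, which is the local-search step justified by the structural result of Dudycz and Paluch~\cite{DP18}. First I would recast every admissible $\Mp$ as $M$ modified along an $M$-alternating structure and show these structures are precisely alternating \emph{cycles}, so that a heavier $\Mp$ exists if and only if there is a weight-augmenting $M$-alternating cycle.

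The technical device is the gadget construction of Cornu\'ejols~\cite{Cor88}: replace each project by a gadget so that feasible matchings of the \wmatch instance correspond to ordinary matchings of a graph $G'$ of only constant-factor larger size, and let $\tilde M$ be the matching of $G'$ corresponding to $M$. The gadget is built so that the single missing value $1$ in $B_p$ is handled internally; consequently closing or opening a lower-quota-$2$ project (the two-at-a-time move between $0$ and $2$ applicants) is realised by a single $\tilde M$-alternating cycle through the gadget, rather than by a coordinated rerouting of two applicants. Crucially, all $\lvert D\rvert = 0$ matchings correspond to matchings of $G'$ that are perfect on the same mandatory gadget vertices as $\tilde M$, so their symmetric differences with $\tilde M$ are unions of alternating cycles; an alternating path would change which vertices are covered and hence realise exactly a forbidden 2a) or 2b) move. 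I would then verify the correspondence in both directions: a lighter-to-heavier $\lvert D\rvert = 0$ matching yields a positive-weight $\tilde M$-alternating cycle in $G'$, and any such cycle yields a feasible $\Mp$ of this restricted type with $w(\Mp) > w(M)$.

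With the correspondence in hand the algorithm is standard: orient $G'$ into a directed graph in which matched edges are reversed and edge costs are negated, so that weight-augmenting $\tilde M$-alternating cycles become negative-cost directed cycles, and run Bellman--Ford to detect a negative cycle. This runs in $\mathcal{O}(\lvert V'\rvert\,\lvert E'\rvert) = \mathcal{O}(\lvert V\rvert\,\lvert E\rvert)$ time, since the gadgets enlarge the graph by only a constant factor.

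The main obstacle is the correspondence step, and in particular confirming that the two-at-a-time opening/closing of a lower-quota-$2$ project is faithfully captured by a single alternating cycle in $G'$; this is exactly the property the Cornu\'ejols gadget is designed to guarantee, and combining it with the structural lemma of Dudycz and Paluch~\cite{DP18} ensures that restricting the search to alternating cycles, rather than to arbitrary re-matchings respecting the occupancy constraints, loses no improving matching. A secondary point requiring care is to argue, via the ``same mandatory set'' observation above, that the negative-cycle search cannot produce an alternating path and therefore certifies only case-$1$ improvements, keeping the treatment of 2a) and 2b) cleanly separated for the other lemmas.
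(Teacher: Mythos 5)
Your overall strategy is the same as the paper's: interpret ``neither 2a) nor 2b)'' as the $\lvert D\rvert=0$ case, observe that such matchings only toggle lower-quota-$2$ projects between occupancy $0$ and $2$ while fixing every other occupancy and the set of matched applicants, encode exactly these degrees of freedom with Cornu\'ejols-style project gadgets, and then decide whether the image of $M$ is a maximum-weight perfect matching of the gadget graph. The correspondence argument you sketch (symmetric differences with $\tilde M$ are unions of alternating circuits, and an alternating path would realise a forbidden 2a)/2b) move) is essentially the paper's, which additionally attaches a weight-$0$ last-resort project to each applicant so that ``same set of matched applicants'' becomes ``perfect matching'' cleanly.

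The gap is in the final algorithmic step and the resulting running time. First, the gadget graph is \emph{not} a constant factor larger than $G$: for a project of degree $k$ the gadget joins the $k$ copies $p_1,\dots,p_k$ to the slack vertices $s^p_1,\dots,s^p_k$ by a complete bipartite graph, contributing $\Theta(k^2)$ edges, so the gadget graph has $\mathcal{O}(\lvert V\rvert+\lvert E\rvert)$ vertices but $\mathcal{O}\bigl(\sum_{p}\deg_p^2\bigr)=\mathcal{O}(\lvert V\rvert\,\lvert E\rvert)$ edges. Bellman--Ford on that graph costs $\mathcal{O}(\lvert V'\rvert\,\lvert E'\rvert)=\mathcal{O}\bigl((\lvert V\rvert+\lvert E\rvert)\lvert V\rvert\,\lvert E\rvert\bigr)$, not the claimed $\mathcal{O}(\lvert V\rvert\,\lvert E\rvert)$. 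Second, the gadget graph is non-bipartite (the edge $s^p_1s^p_2$ lies inside one side of the bipartition), so the device of orienting matched edges backwards with negated costs and detecting a negative directed cycle does not directly find weight-augmenting alternating \emph{circuits}; that reduction is sound only for bipartite graphs. The paper instead invokes the general-graph fact that a given perfect matching can be tested for maximum weight in time linear in the size of the (gadget) graph, which is what yields the stated $\mathcal{O}(\lvert V\rvert\,\lvert E\rvert)$ bound. To repair your argument you would either need to cite that general-graph verification result in place of Bellman--Ford, or replace the complete bipartite interior of each gadget by a sparser equivalent and justify a bipartite-style cycle search.
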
%

\begin{proof}
    Based on Observation~\ref{obs:neigh_type}, the only changes we are allowed to make to the matching is to have arbitrarily many projects going from being matched to $2$ applicants to being matched to $0$ and vice versa. We now iteratively go over all projects with $\lvert M(p) \rvert \in \lbrace 0,2 \rbrace$ and replace them by a gadget, see Cornuéjols~\cite{Cor88} for this approach. In the end this will give us an instance $\mathcal{I}'$ of maximum weight perfect matching.
    
    Assume we are given a project $p \in P$ such that $p$ is adjacent to the applicants $a_1, \dots a_k$ with $k \ge 2$ and $\ell_p \ge 2$. Then we introduce the new vertices $p_1, \dots p_k$ and $s^p_1, \dots s^p_k$. We add an edge between $a_i$ and $p_i$ with weight $w(a_i,p)$  for all $i \in [k]$ and we connect $p_i$ to $s^p_j$ with weight $0$ for $i,j\in [k]$ ,\ie they are a complete bipartite graph. Finally we add an edge from $s^p_1$ to $s^p_2$ also with weight $0$.
    We now apply this transformation to all projects $p \in P$ with $\lvert M(p) \rvert \in \lbrace 0,2 \rbrace$. This gadget will ensure that $p$ can be either matched to $0$ or to $2$ applicants.

    Further for each project $p \in P$ with $\lvert M(p) \rvert \notin \lbrace 0,2 \rbrace$ such that $p$ is adjacent to the applicants $a_1, \dots a_k$ we replace $p$ by the new vertices $p_1, \dots p_{k}$ and $s^p_1, \dots s^p_{k - \lvert M(p) \rvert}$. We again connect $a_i$ to $p_i$ with weight $w(a_i,p)$ for $i \in [k]$ and connect all $p_i$ and $s^p_j$ in a fully bipartite manner with every edge having a weight of $0$. This gadget will ensure that $p$ can only be matched to $\lvert M(p) \rvert$ applicants.
    
    We now argue that every maximum weight perfect matching corresponds to a matching of neighboring type to $M$ with maximum weight, such that neither of 2a) and 2b) in Observation~\ref{obs:neigh_type} are fulfilled. 
    
    First assume that we are given a matching $N$ of neighboring type to $M$ and we want to create a corresponding perfect matching $\Mp$ in $\mathcal I'$. Now for each $p \in P$ with $\lvert M(p) \rvert \in \lbrace 0,2\rbrace$ and $\lvert N(p) \rvert = 0$ we match $p_i$ to $s^p_i$. For each $p \in P$ with $\lvert M(p) \rvert \in \lbrace 0,2\rbrace$ and $\lvert N(p) \rvert = 2$, such that $N(p) = \lbrace a_i, a_j \rbrace$, we match $s^p_1$ to $s^p_2$, $p_i$ to $a_i$ and $p_j$ to $a_j$ and all remaining $s^p_k$ and $p_k$ vertices in an arbitrary manner.
    
    For each $p \in P$ with $\lvert M(p) \rvert \notin \lbrace 0,2\rbrace$ it must hold that $\lvert N(p) \rvert = \lvert M(p) \rvert$. We can thus match any $a_i \in N(p)$ to $p_i$ and match the remaining $p_j$ arbitrarily to the $s^p_j$.
    
    It is easy to see that this matching is perfect and has the same weight as $N$. Of course we can also revert this process and given any perfect matching $\Mp$ in $\mathcal{I}'$ we can create a matching $N$ of the desired neighboring type of the same weight, by matching $a_i$ to the project $p$ in $N$ if $a_i$ is matched to $p_i$ in $\Mp$. The matching $N$ has the same weight as $\Mp$ and due to the structure of our gadget it is of the desired neighboring type. 
    
    Thus in order to verify whether $M$ is a maximum weight matching it is sufficient to test whether $\Mp$ is a maximum weight perfect matching, which is possible in linear time in the size of the graph, see~\cite[Section 26.2]{Schri03}.
    It is easy to see that the graph created in $\mathcal{I}'$ has $\mathcal{O}(\lvert V \rvert + \lvert E \rvert)$ vertices and $\mathcal{O}(\lvert E \rvert + \sum_{v \in V} \deg_v^2) \in \mathcal{O}(\lvert E \rvert \, \lvert V \rvert)$, see~\cite{deC98}.
\qed\end{proof}
The same construction also allows us to find a matching of neighboring type fulfilling either 2a) or 2b). 
\begin{corollary}
    Given a weighted matching $M$ in an instance $\mathcal{I}$ of \wmatch with weight function $w$, there is an $\mathcal{O}(\lvert V \rvert^3\lvert E\rvert)$ time algorithm to check if there is a matching of neighboring type to $M$ fulfilling either 2a) or 2b) in Observation~\ref{obs:neigh_type}.
\end{corollary}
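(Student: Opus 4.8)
The plan is to reduce the search to a small number of maximum-weight perfect matching computations, reusing the gadget machinery of the preceding lemma and of Cornuéjols~\cite{Cor88}. The guiding idea is that a neighboring-type matching of $M$ fulfilling 2a) in Observation~\ref{obs:neigh_type} differs from the freely allowed $0$-or-$2$ backbone by exactly one project whose load changes by $\pm 2$, whereas one fulfilling 2b) differs by exactly two \emph{local units}, where a unit is either a project whose load changes by $\pm 1$ or an applicant that toggles between matched and unmatched. Since there are only $\mathcal{O}(\lvert V\rvert)$ candidate locations for a unit, I would simply enumerate them, and for each guess build a gadget graph in which the guessed change is \emph{forced} while every other project is restricted to its $M$-behaviour exactly as in the previous lemma, namely to the $0$-or-$2$ gadget when $\lvert M(p')\rvert\in\{0,2\}$ and to the fixed-degree gadget otherwise.

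For case 2a) I first install the last-resort vertices so that all applicants are matched in every perfect matching, and then iterate over every project $p$ and both directions $d\in\{-2,+2\}$ that are feasible with respect to $\ell_p,u_p$ and the openness conditions of Observation~\ref{obs:neigh_type}. For a fixed pair $(p,d)$ I replace the gadget of $p$ by the fixed-degree gadget that uses exactly $\deg_p-(\lvert M(p)\rvert+d)$ slack vertices, forcing $p$ to receive precisely $\lvert M(p)\rvert+d$ applicants, while all remaining projects keep their backbone gadgets. A single maximum-weight perfect matching then returns the heaviest neighboring-type matching realizing that one $\pm 2$ change. With $\mathcal{O}(\lvert V\rvert)$ guesses, each costing $\mathcal{O}(\lvert V\rvert\,\lvert E\rvert)$ as in the previous lemma, this part runs in $\mathcal{O}(\lvert V\rvert^2\lvert E\rvert)$.

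For case 2b) I enumerate all unordered pairs of units. A project-unit is forced by the same fixed-degree gadget tuned to $\lvert M(p)\rvert\pm 1$ applicants, and an applicant-unit is forced through its last-resort edge: to send $a$ from matched to unmatched I keep only its last-resort edge, and to send it from unmatched to matched I delete that edge. Each project not carrying one of the two guessed units again keeps its backbone gadget, and for each of the $\mathcal{O}(\lvert V\rvert^2)$ pairs I solve one maximum-weight perfect matching in $\mathcal{O}(\lvert V\rvert\,\lvert E\rvert)$, giving the claimed $\mathcal{O}(\lvert V\rvert^3\lvert E\rvert)$ bound, which dominates case 2a). Finally I take the maximum weight over all guesses and compare it against $w(M)$; by a weight-preserving bijection between the perfect matchings of each gadget graph and the matchings of the intended neighboring type, argued exactly as in the previous lemma, this decides whether a heavier neighboring-type matching fulfilling 2a) or 2b) exists.

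The main obstacle I expect is establishing that the forcing is \emph{exact}: I must guarantee that in each gadget graph the only deviation from $M$, beyond the balanced $0$-or-$2$ reshuffling on the backbone, is precisely the guessed unit(s), so that every perfect matching corresponds to a matching of the intended neighboring type and no valid candidate is overlooked. This requires checking that each fixed-degree gadget admits a perfect matching if and only if the project can legally attain the target load, that an applicant-toggle edge deletion never silently moves a second vertex's degree outside its $M$-interval, and that every enumerated target respects the lower and upper quotas together with the openness conditions. These are exactly the points where the parity structure underlying Observation~\ref{obs:neigh_type} must be invoked carefully, but each is a routine extension of the construction already carried out in the previous lemma.
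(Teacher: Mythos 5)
Your proposal is correct and follows essentially the same route as the paper: enumerate the single project for case 2a) and all pairs of projects/applicants for case 2b), force each guessed degree change with the fixed-degree gadget from the preceding lemma while keeping the $0$-or-$2$ backbone gadgets elsewhere, and solve one maximum-weight perfect matching per guess, giving $\mathcal{O}(\lvert V\rvert^2)\cdot\mathcal{O}(\lvert V\rvert\,\lvert E\rvert)=\mathcal{O}(\lvert V\rvert^3\lvert E\rvert)$. You in fact spell out more detail than the paper's one-sentence argument, in particular the handling of applicant-units via the last-resort edges.
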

\begin{proof}
    This can be done, by iterating over all possible projects in 2a) or all possible pairs of projects in 2b) and fixing their degree using the same gadget as in the previous theorem.
\qed\end{proof}
Using this we now have a faster algorithm for our polynomial-time solvable lower quota $2$ problems.

\fastalg*
\begin{proof}
    For \popver and \parver we can use Lemma~\ref{lem:pop_ver_maxm} and our previous algorithm to test whether the given matching is of maximum weight. For \maxpareto we can start with any perfect matching which we can find using $n$ iterations of finding a larger matching, followed by $\lvert E \rvert$ iterations of finding a matching that dominates our given matching. 
\qed\end{proof}

\popnpc*
\begin{proof}
    Note that the membership in $\NP$ follows from Corollary~\ref{cor:max_size_max_match}.
	We reduce from the \NP-complete problem of finding a popular matching in a non-bipartite graph (\cite{FKPZ19,GMSZ21}). 
	Given an instance $\mathcal{I}$ of the popular matching problem in the non-bipartite graph $G = (V,E)$ and corresponding preference lists, we create an instance $\mathcal{I}'$ of \popular by creating an applicant $a_v$ for each vertex $v \in V$ and a project $p_e$ with upper and lower quota 2 for each edge $e \in E$. If a vertex $v \in V$ has a preference list $v_1 \succ_v, \dots, \succ_v v_k$ in $\mathcal{I}$, then in $\mathcal{I}'$ we assign the applicant $a_v$ the preference list $p_{\{v_1, v\}} \succ_{a_v}, \dots, \succ_{a_v} p_{\{v_k, v\}}$, \ie applicant $a_v$ ranks all the edges in the same order that the vertex $v$ ranked them. 
	We now argue that $\mathcal{I}$ has a popular matching if and only if $\mathcal{I}'$ admits a popular matching. 	
	
	We define a straightforward bijection between feasible matchings in $\mathcal{I}$ and~$\mathcal{I}'$. To a given matching $\Mp$ in $\mathcal{I}'$ we create a matching $M$ 
	in $\mathcal{I}$ by including the edge $(u,v)$ if and only if $a_u$ and $a_v$ are matched to the same project in~$M'$. Conversely, for a given matching $M$ in $\mathcal{I}$ we create the matching $\Mp$ in $\mathcal{I}'$ with  $\Mp(a_u) = p_{\lbrace u,v\rbrace} = \Mp(a_v)$ if and only if $M(u) = v$. Notice that the construction guarantees that the vertex $v$ is matched to its $i$th choice in the matching $M$ in $\mathcal{I}$ if and only if the corresponding applicant $a_v$ in $\Mp$ in $\mathcal{I}'$ is matched to her $i$th choice. Thus if $M$ is popular in $\mathcal{I}$, then $\Mp$ was also popular in $\mathcal{I}'$ and vice versa. 
\qed\end{proof}

\subsection{Missing proofs from Section~\ref{sec:parcomp}}

\kerneln*
\begin{proof}
    To properly have a decision problem, we treat the maximum weight matching problem as the problem of deciding whether there is a matching with weight at least $x$.
    First if $\lvert P \rvert \le W2^n$, we can trivially return our current instance as a kernel of size $\mathcal{O}(W2^n)$.
    
    \subsubsection{Construction }Let $A' \subseteq A$ be any subset of applicants and for $w' \le W\lvert A' \rvert$ let \[P^{w'}_{A'} \subseteq P \coloneqq\lbrace p \in P \mid A' \subseteq N_p, \ell_p \le \lvert A' \rvert \le u_p, \sum_{a \in A'} w(a,p) = w' \rbrace,\]
    \ie the subset of projects, such that we could match all applicants of $A'$ to this project to get a weight of exactly $w'$. 
    Now for any $A'\subseteq A$ and  $w' \le W\lvert A' \rvert$ we put a marker on $n$ of the projects in $H^{w'}_{A'}$. If there are less than $n$ projects in $H^{w'}_{A'}$ we put a marker on all of them. 
    After this, we delete all projects without a marker. 
    
    \subsubsection{Correctness}
    If there was no matching with a weight of at least  $x$ in the original instance, then our reduced instance also does not admit a matching of weight at least $x$ since we only deleted projects.
    Now assume there was a matching $M$ with weight at least $x$ in the original instance. If all the opened projects in $M$ have a marker on them, then the matching is of course still feasible. If there is some project $p$ without a marker, then there has to be a different project that is unopened and has a marker on it, such that we can match all the applicants in $M(p)$ to this project with the same weight, since we guaranteed by our construction that there are at least $n$ of these projects. Thus we can iteratively transform $M$ into a feasible matching in our reduced instance.
    Finally we notice that each applicant can be adjacent to at most $2^n n W$ projects and thus we have at most $\mathcal{O}(2^n n^2 W)$ projects.
 \qed\end{proof}
\popnwone*
\begin{proof}
     We reduce from the W[1]-complete problem \textsc{multicolored independent set}. Here we are given a graph $G = (V, E)$ together with a partition of the vertices into $k$ color classes $V_1, \dots, V_k$ and we are searching for an independent set that includes exactly one vertex from each color class. To further simplify notation, we also assume that the graph is $p$-regular, each color class contains exactly $q$ vertices, and no two vertices in the same color class are adjacent (see~\cite{cygan2015parameterized} for this construction).
    \subsubsection{Construction}
    Our \popular instance consists of the following projects:
    \begin{itemize}
        \item For each $v \in V$ we include a \emph{vertex project} $p_v$ with lower and upper quota $4$.
        \item For each edge $e \in E$ we include an \emph{edge project} $p_e$ with lower and upper quota $4$.
        \item We create three \emph{dummy projects} $p_1, p_2, p_3$, each with lower and upper quota~$1$.
        \item  Finally for each $(c, d) \in [k]\times [k]$ we create a \emph{two-color project} $p_{c, d}$ with upper and lower quota $5$.
    \end{itemize} 
    Now we turn to the applicants. Let $c \in [k]$ be any color and let $v_1, \dots v_q$ be the vertices of~$V_c$. Further for any $v_i \in V_c$ let $e^i_1, \dots e^i_p$ be a list of the edges $v_i$ is adjacent to. We introduce two \emph{selection applicants} $a_c^1$ and $a_c^2$ with preferences 
    \begin{align*}
        a_c^1 \colon p_{e_1^1} \succ \dots\succ p_{e^1_p} \succ p_{v_1} \succ \dots \succ p_{e^q_1} \succ \dots\succ p_{e^q_p} \succ p_{v_q}, \\
        a_c^2 \colon p_{e_1^q} \succ \dots\succ p_{e^q_p} \succ p_{v_q} \succ \dots \succ p_{e^1_1} \succ \dots\succ p_{e^1_p} \succ p_{v_1}.
    \end{align*} 
Note that the preference lists of these selection applicants list the vertex projects in reverse order of each other and list all the edge projects belonging to the vertex projects before these vertex projects. 
Further for any color $c\in [k]$ we also add two \emph{enforcing applicants} $\hat a_c^1$ and $\hat a_c^2$ with preference lists 
\begin{align*}
    \hat a_c^1 \colon p_{v_1} \succ \dots \succ p_{v_q}\succ p_1 \succ p_2 \succ p_3 \succ p_{c, 1} \succ \dots   \succ p_{c, q},\\
    \hat a_c^2 \colon p_{v_q} \succ \dots \succ p_{v_1}\succ p_1 \succ p_2 \succ p_3 \succ p_{c, 1} \succ \dots  \succ p_{c, q}, 
\end{align*}
and we add two \emph{dummy applicants} $a_1$ and $a_2$ who both have the preference list $p_1 \succ p_2 \succ p_3$. 
Finally for each combination of colors $(c,d) \in [k] \times [k]$ we also add one last \emph{dummy applicant} $a_{c,d}$ whose preference list only consists of $p_{c, d}$. Thus the number of applicants is in $\mathcal{O}(k^2)$.

\subsubsection{$\Rightarrow$} First assume that $v_1, \dots, v_k$ form a multicolored independent set with $v_c \in V_c$ for $c \in [k]$. We now show that the matching $M$ with $M(p_{v_c})=  \lbrace a^1_c, a^2_c,\hat a_c^1,\hat a_c^2 \rbrace$ for  $c\in[k], M(p_1) = a_1$ and $M(p_2) = a_2$ is popular. 
To do so, we prove that any applicant improving in a matching has a corresponding applicant that gets worse in this matching.
\begin{itemize}
    \item First we notice that if $a_2$ would improve in any matching, then $a_1$ would get worse, thus canceling out their votes. 
    \item Similarly if any $\hat a_c^1 $ would improve, then $\hat a_c^2$ would get worse, due to the front part of their preference lists being reversed. Naturally, if $\hat a_c^2 $ would improve, $\hat a_c^1$ would get worse and thus their votes also cancel out in this case.
    \item If $a_c^1$ would improve by matching a vertex project, $a_c^2$ would get worse, thus canceling out their vote.
    \item Finally assume that $a_c^1$ could get better by matching an edge project $p_{v,u}$ with $v \in V_c$ and $u \in V_d$. Then, since we have an independent set, without loss of generality $a_d^1$ and $a_d^2$ cannot have been matched to $p_u$ in our original matching. Thus by the structure of our preference lists, at least one of $a_d^1$ and $a_d^2$ must have gotten worse. Further all of the applicants of $\hat a_c^1, \hat a_c^2, \hat a_d^1, \hat a_d^2$ must now be matched to a project they rank worse than their respective vertex project, while at most $a_{c,d}$ could have gotten better if all $5$ applicants got matched to $p_{c,d}$. Thus the total sum of their votes is at most $-1$.
\end{itemize} 
  Therefore since for any improving applicant, there has to be a unique applicant who gets worse, the matching $M$ is popular. 

 \subsubsection{$\Leftarrow$}
Next we assume that we have a popular matching $M$. 

\subsubsection{$a_{c,d}$ and $p_{c,d}$ must be unmatched} Firstly we show that for any $(c,d) \in [k]\times[k]$ the applicant $a_{c,d}$ must be unmatched. If $a_{c, d}$ is matched to $p_{c,d}$ then due to the lower quota of $5$, $\hat a_c^1, \hat a_c^2, \hat a_d^1, \hat a_d^2$ must all also be matched to $p_{c,d}$. Now if any of $a_c^1, a_c^2, a_d^1,$ and $a_d^2$ are unmatched, (without loss of generality $a_c^1$ and $a_c^2$) we can simply choose some $v \in V_c$ and match $a_c^1, a_c^2,\hat a_c^1,$ and $\hat a_c^2$ to $p_v$ thus improving the matching for $4$ applicants, while only making it worse for $3$, which implies that the matching was not popular. Similarly if $a_c^1, a_c^2, a_d^1,$ and $a_d^2$ are all matched, they must be matched to some edge projects. First, if they are matched to the same edge project $p_{\lbrace v,u \rbrace}$ with $ v\in V_c$ and $u \in V_d$, then we can take two vertices $v^\prime \neq v \in V_c$ and $u^\prime \neq u \in V_d$ and match $a_c^1, a_c^2,\hat a_c^1,$ and $\hat a_c^2$ to $p_{v^\prime}$ and  $a_d^1, a_d^2,\hat a_d^1,$ and $\hat a_d^2$ to $p_{u^\prime}$ thus improving the matching for exactly two of the four selection applicants $a_c^1, a_c^2,a_d^1, a_d^2,$ ,while the matching improves for $\hat a_c^1,\hat a_c^2,\hat a_d^1,$ and $\hat a_d^2$ and gets worse for $a_{c,d}$, thus implying that $M$ was unpopular. If the applicants are matched to different edge projects, we can do the same trick as earlier but incorporate the four applicants of the other colors as well. 

\subsubsection{No $\hat a_{c}$ can be matched to a dummy} We notice that the projects $p_1, p_2, p_3$ with their corresponding applicants induce an instance of Observation~\ref{obs:condorcet}. Thus in a popular matching only two of the dummy projects can be open, which in turn implies that the only applicants matched to the dummy projects can be the dummy applicants (otherwise the matching is not even Pareto optimal), therefore for any color $c \in [k]$ there must be exactly one vertex project $p_v^c$ open, to which $a_c^1, a_c^2,\hat a_c^1,$ and $\hat a_c^2$ are matched. 

\subsubsection{Constructing the independent set}
Let us now take the set $I \coloneqq \lbrace v \in V \mid p_v \text{ is open in } M \rbrace$ and assume that there is an edge between two vertices $v,u \in I$, with $v \in V_c$ and $u \in V_d$. Then if there is any edge between $u$ and $v$, we can match $a_c^1, a_c^2, a_d^1$ and $a_d^2$ to  $p_{\lbrace u,v \rbrace}$ and $\hat a_c^1,\hat a_c^2,\hat a_d^1,\hat a_d^2, $ and $a_{c,d}$ to $p_{c,d}$, thus improving the matching for $5$ applicants and making it worse for $4$ which in turn would imply that the matching was not popular. Therefore there cannot be edges between vertices in $I$ and thus $I$ must be a multicolored independent set.  
\qed\end{proof}

\popfptm*
\begin{proof}
    Our goal is to encode the matching instance in such a way that the resulting \textsc{parametric integer program} is feasible if and only if no popular matching exists. For this we choose the matrix $B$ and vector $d$ in such a way that all possible matchings can be represented by a vector $b$ satisfying $Cb \le d$. Further the matrix $B$ should be chosen in such a way that the vector $x$ represents a matching that is more popular than the matching represented by $b$. 
    
    \subsubsection{Notation}
    First, since we parameterize by $m$, each applicant $a \in A$ can be uniquely identified by her preference structure over $P$. There are at most $\mathcal{O}((m+1)!)$ different preference structures, hence 
    we can partition $A$ into $t \in \mathcal{O}((m+1)!)$ types. Let $A_1, \dots, A_t$ be this partition, such that any two applicants in the same set have the same preference list. Further we also refer to the projects that appear in the preference lists of applicants in $A_i$ as~$N_i$.
    Furthermore we slightly alter the notation of the previous sections to follow~\cite{BIM10} and define the vote of a vertex as follows.
    \[\vote_a(p_1, p_2) = 
    \begin{cases*}1, &if $p_1 \succ_a p_2$ \\
    0, &if  $p_1 = p_2$ \\
    -1, &if  $p_2 \succ_a p_1$
    \end{cases*} \] Following the definition of popularity, matching $\Mp$ is more popular than matching $M$ if and only if $\sum_{a \in A} \vote_a(\Mp(a), M(a)) \ge 1$ holds. For easier notation for any type $i \in  [t]$ we define $\vote_i(p_1, p_2) =\vote_a(p_1, p_2)$ where $a \in A_i$ is an applicant of type $i$. 
    
    \subsubsection{Construction of $C$}
    As noted earlier, our goal is to construct the linear program represented by $B$ and $d$ in such a way that every feasible matching is a solution to this linear program. 
    To reach this, for each $i \in [t]$ and $p \in N_i$ we create a variable $x_i^p$ that should indicate how many applicants of type $A_i$ are matched to project $p$. Moreover we add one variable $x_i^i$ indicating the number of unmatched applicants of type $i$. Furthermore for each project $p \in P$ we create a variable $o_p$ that should indicate whether project $p$ is open or closed. 
    This now leads us to the following linear program
\begin{alignat}{4}
        \displaystyle\sum\limits_{p\in N_i \cup \lbrace i \rbrace}   x_i^p &=& &\lvert A_i \rvert, \quad &\mbox{\ for each $i \in [t]$}\label{eq1ap}\\
        \displaystyle\sum\limits_{i \in [t] \colon p\in N_i} x_i^p - o_p u_p &\le& &0, \quad &\mbox{\ for each $p\in P$} \label{eq2ap}\\
        \displaystyle\sum\limits_{i \in [t] \colon p\in N_i} x_i^p - o_p \ell_p &\ge& &0,  \quad &\mbox{\ for each $p\in P$} \label{eq3ap}\\
        x_i^p &\ge& &0, \quad &\mbox{\ for each $ i \in [t]$ and $p\in N_i$}  \label{eq4ap}\\
        o_p &\in& &[0,1], \quad &\mbox{\ for each $p\in P$} \label{eq5ap}
\end{alignat}%
Here Constraint~(\ref{eq1ap}) enforces that all applicants are either matched or unmatched.
   Note that each feasible matching is a solution to this ILP. Currently, any solution to this ILP is of the form $(x_1^{p_1}, \dots, x_t^{p_m}, o_{p_1}, \dots,  o_{p_m})$. This, however, is not enough to fully model the ILP we need for~$B$. As a first step, for each variable of the form $x_i^p$ we add a second copy. Furthermore we add $2t$ variables $b_1, \dots, b_{2t}$ with $b_{2i} = \lvert A_i\rvert = b_{2i+1}$, then we add $2m$ variables that are forced to be $0$, and one variable that is forced to be $-1$. After this, each solution $b$ is of the form $(\lvert A_1\rvert, \lvert A_1 \rvert, \dots, \lvert A_t \rvert, \lvert A_t \rvert, x_1^{p_1},x_1^{p_1}, \dots, x_t^{p_m},x_t^{p_m}, o_{p_1}, \dots,  o_{p_m}, \underbrace{0, \dots, 0}_{2m \text{ times}}, -1)$.
   
   \subsubsection{Construction of $B$}
   We design $B$ to ensure that there is a matching $M'$ that is more popular than the matching $M$ induced by~$b$. For any type $i \in [t]$ and projects $p,p' \in N_i\cup\lbrace i \rbrace$, we create a variable $x_i^{p \rightarrow p'}$ that should indicate the number of applicants of type $i$ who were matched to project $p$ in $M$ and are matched to project $p'$ in $M'$. Furthermore, for each project $p \in P$, we again create a variable $o'_p$ indicating whether project $p$ is open or closed in $M'$.
   This now allows us to construct the final ILP.
   \begin{alignat}{4}
        \displaystyle\sum\limits_{p, p'\in N_i \cup \lbrace i \rbrace}   x_i^{p \rightarrow p'} &=& &\lvert A_i \rvert,  \quad &\mbox{ for each $i \in [t]$} \label{eq6ap}\\
         \displaystyle\sum\limits_{p' \in P\cup \lbrace i \rbrace}   x_i^{p \rightarrow p'} &=& &x_i^p , \quad &\mbox{ for each $i \in [t]$ and $p \in N_i$} \label{eq7ap}\\
        \displaystyle\sum\limits_{i \in [t], p \in P} x_i^{p \rightarrow p'} - o'_{p'} u_{p'} &\le& &0, \quad &\mbox{ for each $p' \in P$} \label{eq8ap} \\
       \displaystyle\sum\limits_{i \in [t], p \in P} x_i^{p \rightarrow p'} - o'_{p'} \ell_{p'} &\ge& &0, \quad &\mbox{ for each $p' \in P$} \label{eq9ap}\\
       \displaystyle\sum\limits_{i \in [t] p, p'\in N_i \cup \lbrace i \rbrace}  -\vote_i(p',p)x_i^{p \rightarrow p'} &\le& &-1 ,& \quad \label{eq10ap}\\
       x_i^{p \rightarrow p'} &\ge& &0, \quad &\mbox{ for each $i \in [t]$ and $p,p' \in N_i$} \label{eq11ap}\\
        o'_p &\in& &[0,1], \quad &\mbox{ for each $p \in P$} 
   \end{alignat}
   Here Constraint~(\ref{eq6ap}) ensures that all applicants have a new partner in $M'$, while Constraint~(\ref{eq7ap}) ensures that every applicant matched to some $p$ in $M$ now has a new partner. Constraints~(\ref{eq8ap}) and (\ref{eq9ap}) again ensure that the lower and upper quotas are met and finally Constraint~(\ref{eq10ap}) ensures that $M'$ is more popular than~$M$.
   
   \subsubsection{Correctness}
   For the correctness of our construction we first observe that our is indeed a \textsc{parametric integer program}. This follows from the fact that we can model any constraint of the type $\sum_i \alpha_i x_i = y$ as constraints $\sum_i \alpha_i x_i \le y$ and $\sum_i -\alpha_i x_i \le -y$. Thus the right hand-side of the second ILP is exactly $b$, except that we simply ignore the $o_p$ variables from~$b$. Now all that is left to show is that the \textsc{parametric integer program} is feasible if and only if no popular matching exists. First assume that the \textsc{parametric integer program} is feasible and let $M$ be any feasible matching in our instance. Then if we set $o_p = 1$ for any project $p \in P$ if and only if the project $p$ is open in $M$ and for any type $i \in [t]$ and projects $p \in N_i \cup \lbrace i  \rbrace$, we set $x_i^p$ to be the number of applicants of type $i$ matched to $p$ in $M$, the first ILP is feasible, since the first constraint is fulfilled by every applicant being matched, while the second and third constraints are fulfilled because $M$ adheres to the quotas.
      
   Now due to us having a feasible \textsc{parametric integer program} instance, there is an assignment of the variables to fulfill the second ILP. Using this assignment we can get a feasible matching $\Mp$, by opening all projects $p \in P$ with $o'_p = 1$ and matching $\sum_{p' \in P}x_i^{p' \rightarrow p}$ applicants of type $i$ to project $p$. Due to the Constraints~(\ref{eq6}), (\ref{eq8}), and~(\ref{eq9}) this implies that $\Mp$ is a feasible matching. Furthermore Constraint~(\ref{eq7}) gives us that we can create this matching from $M$ by matching exactly $x_i^{p \rightarrow p'}$ applicants of type $i$ that were matched to $p$ in $M$ to $p'$ in $\Mp$.
   And thus since $$\displaystyle\sum\limits_{i \in [t] p, p'\in N_i \cup \lbrace i \rbrace}  -\vote_i(p',p)x_i^{p \rightarrow p'} \le -1,$$ it holds that $$\displaystyle\sum\limits_{i \in [t] p, p'\in N_i \cup \lbrace i \rbrace}  \vote_i(p',p)x_i^{p \rightarrow p'} \ge 1,$$ 
    which implies that $\Mp$ is more popular, since $x_i^{p \rightarrow p'}$ is exactly the number of applicants who switched from $M$ to $\Mp$.
   
   Now assume that no popular matching exists and let $b$ be any feasible variable assignment to the first ILP. Then we can construct a feasible matching $M$, by opening all projects $p \in P$ with $o_p = 1$ and matching $x_i^p$ applicants of type $i$ to~$p$. Then there is a matching $\Mp$ that is more popular than $M$. By letting $x_i^{p \rightarrow p'}$ be the number of applicants of type $i$ who were matched to $p$ in $M$ and to $p'$ in $\Mp$ and by letting $o'_p = 1$ if and only if the project $p$ is open in $\Mp$, we get a feasible assignment to the second ILP, since we know that $$1 \le \sum_{a \in A} \vote_a(\Mp(a), M(a)) = \sum\limits_{i \in [t] p, p'\in N_i \cup \lbrace i \rbrace}  \vote_i(p',p)x_i^{p \rightarrow p'}.$$
   
   \subsubsection{Running time}
   It is easy to see that $k,\lvert B \rvert \in \mathcal{O}((tm)^2)$ and since the maximum entry of any matrix or vector is at most $n$, we get $\lvert\lvert B,C,d\rvert \rvert_\infty \in \mathcal{O}(n+m)$ and thus we can solve the \textsc{parametric integer program} in $\mathcal{O}(f(m)poly(n + m))$ for some computable function $f$, which shows that \popular is in FPT when parameterized by $m$.
\qed\end{proof}
\povhopen*
\begin{proof} 
	We show how to solve this problem via the \textsc{feasible flow with demands} problem. %
	First we assume that some $a \in A$ is given who 
	will be the designated applicant to receive a better partner in $\Mp$. 
	Now we create a directed bipartite graph 
	$G_a$ with vertices $A \cup \hopen \cup \{p_\perp,s,t\}$, where $p_\perp$ will represent the unmatched applicants.  We connect $s$ to all vertices in $A$ with demand and capacity of $1$ on the edge. 
	Next we connect $a$ to all projects in $\hopen$ that $a$ prefers to $M(a)$ and all applicants in $A \setminus \lbrace a \rbrace$ get connected to all projects in $\hopen$ which they prefer to $M(a)$ as well as to $M(a)$ if $M(a)$ is in $\hopen$, further if they are unmatched in $M$, we also connect them to $p_\perp$. 
	All of these edges have no demand and a capacity of $1$. Finally we connect every project $p \in  \hopen$ to $t$, each with a demand of $\ell_p$ and with a capacity of $u_p$ and we connect $p_\perp$ to $t$ with no demand and infinite capacity. 
	We solve the \textsc{feasible flow with demands} problem for all $a \in A$ in $G_a$ and if some $a \in A$ exists for which the demands are satisfiable, we return the respective induced matching as $\Mp$. That is, if a flow of $1$ goes from $a$ to $p$, we match $a$ to $p$ in $\Mp$ and if a flow of $1$ goes from $a$ to $p_\perp$ we leave $a$ unmatched in $\Mp$.
	Otherwise we return that our matching is Pareto optimal.

	First assume that there is a $G_a$ such that there is a feasible flow in $G_a$. Since each edge from $s$ to $A$ has a demand of $1$, this implies that each applicant in $A$ is either matched to a project in $\hopen$ 
	that she prefers to her project in $M$ or that she is matched to the same project in both matchings. Furthermore since $a$ is only connected to projects she prefers 
	to $M(a)$, this implies that $a$ is better off in $\Mp$. Finally since each project in $\hopen$ has a demand equal to its lower quota, each of the projects must be open and since its capacity is equal to its upper quota, the matching must be feasible. Thus $\Mp$ is a feasible matching and dominates $M$. 	
	
	Similarly if there is a matching $\Mp$ that only opens projects in $\hopen$ and dominates $M$, there has to be an applicant $a \in A$ who strictly prefers $\Mp$ to $M$. We can thus take the network $G_a$ and the flow that connects every $\hat a \in A$ to $\Mp(\hat a)$ if $\hat a$ is matched in $\Mp$ and connects it to $p_\perp$ if $\hat a$ is unmatched. Since the matching only opens projects in $\hopen$ and respects the quotas, this also leads to a flow respecting the demands incoming into $t$. Thus there is a feasible flow in~$G_a$. 
\qed\end{proof}
\begin{figure}[h]
\begin{minipage}{0.6 \textwidth}    \includegraphics[scale = 0.85]{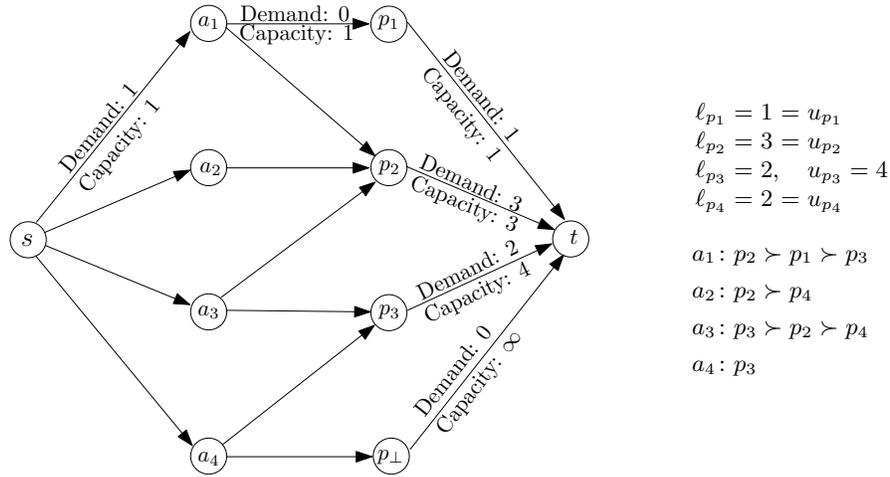}
\end{minipage}\hfill
\begin{minipage}{0.3 \textwidth}
    \begin{tabular}{rll}
       $\phantom{1111}\ell_{p_1} =$& 1 = &$u_{p_1}$\\
       $\ell_{p_2} =$& 3 = &$u_{p_2}$\\
       $\ell_{p_3} =$& 2, &$u_{p_3} = 4$\\
       $\ell_{p_4} =$& 2 = &$u_{p_4}$
    \end{tabular}
    \begin{align*}
       a_1 \colon& p_2 \succ p_1 \succ p_3 \\
       a_2 \colon& p_2 \succ p_4\\
       a_3 \colon& p_3 \succ p_2 \succ p_4 \\
       a_4 \colon& p_3
    \end{align*}
\end{minipage}
    \caption{Consider an instance of \parver with four applicants $a_1, \dots, a_4$ and four projects $p_1, \dots, p_4$ 
    with quotas and preference lists as shown on the right side of the figure.
    If we consider the matching $M$ with $M(a_1) = p_1, M(a_2) = p_4, M(a_3) = p_4$, and $M(a_4) = a_4$ with $\hopen = \lbrace p_1, p_2, p_3 \rbrace$, the above instance is the \textsc{feasible flow with demands} instance created in Theorem~\ref{thrm:pov_hopen}.} 
\end{figure} %
\maxwhopen*
\begin{proof}
	For our reduction to \textsc{max-cost circulation}  we again create a network with vertices $A \cup \hopen \cup \{p_\perp,s\}$, next we connect each applicant $a \in A$ to all projects she is adjacent to in the original graph, each with demand $0$, capacity $1$ and the same cost as in the original network and we connect $a$ to $p_\perp$, again with demand $0$, capacity $1$ and cost $0$, next we connect all projects $p \in  \hopen$ to $s$, each with demand $\ell_p$ and capacity $u_p$ and we connect $p_\perp$ to $s$ with demand $0$ and capacity $n$. Finally we connect $s$ to every applicant $a \in A$, each with demand and capacity $1$.

	It is obvious that matchings in our original graph in which exactly the projects in $\hopen$ are open, correspond to circulations in our new graph, since we enforce every project to have a circulation of at least the lower quota. Thus finding a maximum weight circulation is equivalent to finding a maximum weight matching that opens exactly the projects in $\hopen$. 
	
\qed\end{proof}
\fptmquota*
\begin{proof}
	In this proof we generalize the approach of Theorem~\ref{thrm:maxw_hopen}. We refer to $P_\text{quota}$ as the set of all projects with a lower quota greater than $1$.
	For this we create a network with vertices $\lbrace s,p_\perp \rbrace \cup A \cup P$. Now we iterate over all subsets $\hat P \subseteq P_\text{quota}$ and construct the same instance as in Theorem~\ref{thrm:maxw_hopen} with $\hopen = \hat P \cup P \setminus  P_\text{quota}$, but instead of connecting each project in  $P \setminus  P_\text{quota}$ with a demand of $1$ to $s$, we connect it with a demand of $0$. This guarantees that the projects in $P \setminus  P_\text{quota}$ can be either open or closed and a \textsc{max-cost circulation} in this instance corresponds to a maximum weight matching that opens all projects in $\hat P$ and can choose to open any number of projects in $P \setminus  P_\text{quota}$. Therefore by iterating over all possible projects with a non-unit lower quota, we can find the maximum weight matching. 
\qed\end{proof}

\woneparmopen*
\begin{proof}
	We reduce from the problem \textsc{exact unique hitting set} as defined in Section
	\ref{sec:prelims}. 
	 As our input we are given a set of elements 
	$X = \lbrace x_1, \dots, x_n\rbrace$, a set of subsets $\mathcal T \subseteq 2^X$ 
	
	For each element $x_i \in X$ let $T_i \subseteq \mathcal{T}$ be the set of sets in which the element $x_i$ appears. For each $x_i \in X$ we create a project $p_{x_i}$ with upper and lower quota $\lvert T_i \rvert$. 
	Further we create a \emph{last-resort project} $p^-$ with upper and lower quota $\lvert \mathcal T \rvert$. 
	Next for each $T = \lbrace \hat x_1, \dots, \hat x_\ell \rbrace \in \mathcal T$ we  create an applicant $a_T$ with preference list $p_{\hat x_1} \succ \ldots \succ p_{\hat x_\ell} \succ p^-$. Note that the ordering of the non-last-resort projects is arbitrary. 
	Now let $M$ be the matching that matches all applicants to $p^-$. 
	
	First we assume that there is an exact unique hitting set $H$ of size $k$ and construct a matching $M'$ that dominates~$M$. For each $T \in \mathcal T$ there is exactly one $x_i \in T \cap H$. We add $(a_T, p_{x_i})$ to our new matching~$M'$. Since the hitting set is unique, for every $x_i \in H$ the project $p_{x_i}$ meets their quota exactly. Thus $M'$ is feasible. Furthermore, $M'$ dominates $M$ because in $M'$ every applicant is matched to a project she prefers to~$p^-$. From this follows that $M$ was not Pareto optimal. 
		
	Next assume that there is a matching that opens exactly $k$ projects and dominates $M$. Let $H \subseteq X$ be the set of elements whose projects are open. Since the lower quota of each element is exactly the number of sets this element appears in, each set must be covered by exactly one element of $H$. Thus $H$ must be an exact unique hitting set. 
	
	For \popver we can mostly follow the reduction, except that we set the upper and lower quota of $p^-$ to $2n-1$ and we add $n-1$ dummy applicants, each of whom only has $p^-$ in her preference list. Following the same argumentation as in the previous proof it is easy to see that the matching that matches all applicants to $p^-$ is popular if and only if no exact unique hitting set exists, since else we can construct a matching 
	for which there is one more applicant who prefers the new matching to $M$, than there are applicants who prefer $M$ to the new matching, by matching all element projects to their respective sets as in \parver case. %
\qed\end{proof}
\wonepopmopen*
\begin{proof}
	We modify the previous reduction and reduce from \textsc{x3c} instead of \textsc{exact unique hitting set}.
	We are again given a set of items $X = \lbrace x_1, \dots, x_{3m} \rbrace$ and a set $\mathcal T = \lbrace T_1, \dots, T_k  \rbrace\subseteq 2^X$ 
	of size $3$ subsets and we want to decide whether there is a partition $T \subseteq \mathcal T$ of $X$. First for each $T_i \in \mathcal{T}$ we check if there is another $T_j \in \mathcal{T}$ such that $T_i \cap T_j = \emptyset$. If there is no such set $T_j$ and $\lvert X \rvert > 3$, we can simply remove $T_i$ from $\mathcal T$ as it cannot be part of any partition, since no other set could be included. 
	
	After this we follow the previous reduction and add one project $p_i$ for each $T_i \in \mathcal T$ with upper and lower quota $3$ and a last-resort project $p^-$ with lower and upper quota $6m-1$. 
	We again add one applicant $a_x$ per element $ x \in X$ with preference list $p_{i_1} \succ_{a_x} p_{i_2} \succ_{a_x} p_{i_3} \succ_{a_x} p^-$, where the projects correspond to the three sets $T_{i_1}, T_{i_2}$, and $T_{i_3}$ in which $x$ appears. 
	We also add $3m-1$ dummy applicants, each of whom only has $p^-$ on their preference list. Exactly as in the previous proofs, the matching that assigns every applicant to $p^-$ is popular if and only if an exact cover exists. Furthermore any matching opening only one of the projects corresponding to sets cannot be popular, since we can always open a second project with different applicants, since we required that any set has another that is disjoint from it. 
	Thus there is a popular matching with exactly one open project if and only if there is no exact cover.  
	 The result for \maxpareto simply follows by removing the dummy agents.
\qed\end{proof}

\mapomclosed*
We again split this theorem into four separate statements in the appendix for better readability. 

\begin{restatable}{thm}{mapom_closed_appendix}
	 Given an instance $\mathcal{I}$ of \maxpareto and parameter $\mclosed$, it is $\wone$-hard to decide whether a perfect Pareto optimal matchings exists that closes exactly $\mclosed$ projects.
    \label{thrm:mapom_closed_appendix}
\end{restatable}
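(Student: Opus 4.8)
The plan is to reduce from \textsc{Multicolored independent set}, whose parameter is the number $k$ of color classes. The key leverage for \maxpareto is the monotonicity fact used in the proof of Theorem~\ref{thrm:po_lq3}: whenever a perfect matching exists, a perfect Pareto optimal matching exists as well. Hence it suffices to build an instance in which (i) a perfect matching exists if and only if the given graph has a multicolored independent set, and (ii) \emph{every} perfect matching closes the same number of projects, this number being a function of $k$ only. Granting (i) and (ii), a perfect Pareto optimal matching closing exactly $\mclosed$ projects exists if and only if some perfect matching exists if and only if the \textsc{Multicolored independent set} instance is a yes-instance, which yields $\wone$-hardness for the parameter $\mclosed$.

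For the construction I would assume, as usual, that each color class $V_c$ has exactly $q$ vertices and induces no edges. For every color $c$ I would install a \emph{selection gadget}: a vertex project $p_v$ with $\ell_{p_v}=u_{p_v}=2$ for each $v\in V_c$, together with $2(q-1)$ filler applicants, each adjacent to all $q$ vertex projects of color $c$ and to no other project. Because every vertex project has the tight quota $2$, in any perfect matching the $2(q-1)$ fillers must occupy exactly $q-1$ of these projects, forcing \emph{exactly one} vertex project of color $c$ to be closed; the closed project encodes the vertex selected from $V_c$. To enforce independence, for every edge $e=\{u,w\}$ I would add an edge gadget that admits a feasible completion in a perfect matching if and only if at least one of $p_u,p_w$ is open, so that no two selected vertices are adjacent. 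The edge gadgets are to be designed so that all of their own projects remain open in every perfect matching, contributing nothing to the closed count; thus every perfect matching closes exactly the $k$ selected vertex projects, and I would set $\mclosed=k$ (up to a constant number of bookkeeping projects, still a function of $k$).

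Correctness then splits into the two standard directions. If $\{v_1,\dots,v_k\}$ is a multicolored independent set, then closing each $p_{v_c}$ and distributing the fillers and edge applicants over the remaining open projects yields a perfect matching closing exactly $k$ projects; applying the monotonicity fact produces a perfect Pareto optimal matching, which, being perfect, still closes exactly $k$ projects by~(ii). Conversely, a perfect Pareto optimal matching closing exactly $\mclosed$ projects is in particular a perfect matching; the selection gadget forces its closed vertex projects to pick one vertex per color, and the edge gadgets guarantee that this selection induces no edge, i.e.\ is a multicolored independent set.

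The main obstacle is property~(ii): making the number of closed projects identical across \emph{all} perfect matchings and equal to a function of $k$ alone. The naive encoding that closes a project for every unselected vertex fails, since it produces $\Theta(kq)$ closures, which scales with the graph rather than with the parameter. The reduction must therefore keep almost all projects open, letting ``non-selection'' absorb applicants rather than close projects, and must confine all closures to the $k$ selected vertices. The delicate point is to decouple the selection count, driven by the tight quota-$2$ fillers, from the edge gadgets that route applicants to enforce independence: an edge gadget must read off the open/closed status of $p_u$ and $p_w$ without competing for their capacity, and it must never open or close an extra project in any perfect matching, irrespective of which vertices are selected. Verifying that this decoupling holds for every perfect matching, and that the edge applicants can always be absorbed exactly when the selected set is independent, is the crux of the argument.
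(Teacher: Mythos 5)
Your logical skeleton is sound: domination preserves the set of matched applicants (since every listed project is preferred to being unmatched), so a perfect matching exists iff a perfect Pareto optimal one does, and if additionally \emph{every} perfect matching closed exactly $f(k)$ projects, the reduction would go through with $\mclosed=f(k)$. The gap is that the entire reduction hinges on the edge gadget, which you do not construct, and the requirements you impose on it are in tension with your selection gadget. Any gadget that ``reads off'' whether $p_u$ or $p_w$ is open must share applicants with those projects, but then it consumes their capacity: with tight quota $2$ and $2(q-1)$ fillers per color, an open vertex project holds exactly two applicants, so the moment even one edge applicant lands on a color-$c$ vertex project the filler arithmetic $2(q-1)=2\cdot(\text{number of open projects})$ breaks, and the number of closed vertex projects per color is no longer $1$ in every perfect matching --- property (ii) fails. (The simplest candidate, an applicant $b_e$ with $N_{b_e}=\{p_u,p_w\}$, already exhibits this: it must occupy a slot, yet your invariant requires that no such slot exist.) Relaxing the quotas to make room reopens the possibility of perfect matchings that close $0$ or $2$ vertex projects in some color, so you would then need the Pareto optimality condition after all, which your plan deliberately discards.

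The paper's proof resolves exactly this tension by \emph{not} reducing everything to perfection. It uses vertex projects of quota $q$, edge projects of quota $2q+1$, and $\binom{k}{2}$ two-color projects of quota $1$, and sets $\mclosed=k+\binom{k}{2}$. Perfection plus the exact closed count forces the structure ``all edge projects open, exactly one vertex project closed per color, all two-color projects closed'' --- but such perfect matchings exist for \emph{any} choice of one vertex per color, independent or not. Independence is then enforced by Pareto optimality: if two closed vertex projects $p_{v_c},p_{v_d}$ were joined by an edge $e$, rerouting $a_e$ to $p_{c,d}$ and the $2q$ edge applicants of $e$ onto $p_{v_c}$ and $p_{v_d}$ would dominate the matching. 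So the domination condition does real work in the paper, whereas your plan makes it vacuous; to salvage your approach you would have to exhibit the ``read-only'' edge gadget, and the obstruction above suggests none exists in the form you describe.
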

\begin{proof}
    We reduce from the \textsc{multicolored independent set} Problem. Here we are again given a graph $G = (V,E)$ together with a partition of the vertices into $k$ color classes $V_1, \dots, V_k$ and our goal is to find an independent set with exactly one vertex per class. We again assume that each color class has exactly $q$ vertices.  
    
    \subsubsection{Construction} 
    First we introduce the projects we add to our construction.
    \begin{itemize}
        \item For each $v \in V$ we create one \emph{vertex project} $p_v$ with lower and upper quota $q$.
        \item For each edge $e\in E$ we create one \emph{edge project} $p_e$ with lower and upper quota $2q+1$.
        \item For each combination of colors $(c,d) \in [k]\times[k]$ with $c \neq d$ we create a \emph{two-color project} $p_{c,d}$ with lower and upper quota $1$.
    \end{itemize}
     Now we turn to the applicants.
     \begin{itemize}
         \item For each color $c \in [k]$ and pair of vertices $(v,u) \in V_k^2$ with $v \neq u$ we create an applicant $a_{v,u}$ with preference list $p_v \succ p_u$.
         \item For each $e = \lbrace u,v \rbrace \in E$ with $v \in V_c$ and $u \in V_d$ we add $2q$ \emph{edge applicants} $a_e^1, \dots, a_e^{2q}$ each with preference list $p_{u} \succ p_v \succ p_e$ and one \emph{dummy applicant} $a_e$ with preference list $p_{c,d} \succ p_e$.
     \end{itemize} 
    Finally we set $\mclosed \coloneqq k + {k \choose 2}.$
    
    \subsubsection{$\Rightarrow$}    
    \subsubsection{Matching construction}
    First we assume that we have a multicolored independent set $v_1, \dots, v_k$ with $v_c \in V_c$ for $c \in [k]$. 
    We now construct a perfect Pareto optimal matching $M$ by matching 
    \begin{itemize}
        \item for any edge $e \in E$ all edge applicants to $p_e$;
        \item for any color $c \in [k]$ and $v \in V_c \setminus \lbrace v_c \rbrace$  all applicants in $\lbrace a_{v, u} \mid u \in V_c\setminus \lbrace v \rbrace\rbrace$ and  $a_{v_c, v}$ to $p_v$.
    \end{itemize}
    It is easy to see that $M$ closes exactly the projects $p_{v_1}, \dots, p_{v_k}$ and all two-color projects, thus closing exactly $\mclosed = k + {k \choose 2}$ projects.
    Also since all applicants are matched, $M$ is perfect. 
    
    \subsubsection{Pareto optimality}
    In order to show that $M$ is Pareto optimal, we go over all applicants and show that they cannot improve without any other applicant getting worse. 
    \begin{itemize}
        \item  For any color $c \in [k]$ and any vertices $v,u \in V_c$ with $v \neq v_c$ the applicant $a_{v,u}$ cannot improve since she is matched to her top choice in $M$.
        \item For the applicant $a_{v_c, v}$ the only possibility to improve is to open the project $p_{v_c}$, but the only applicants who prefer this project to their current project are all $q$ projects in $\lbrace a_{v_c, v} \mid v \in V_c \setminus \lbrace v_c \rbrace\rbrace$ and all edge projects corresponding to edges adjacent to~$v_c$.
        Therefore if we match $a_{v_c, v}$ to $p_{v_c}$, then at least $1$ but at most $q-1$ applicants corresponding to any edge can be matched to this project. This leaves at least one edge, such that $q+1$ of their edge applicants need a new project. But since their only possibility for this would be the other vertex project corresponding to the other vertex of the edge which has an upper quota of $q$, this would imply that at least one applicant would get worse compared to $M$. Therefore the applicant $a_{v_c}$ cannot improve. 
        \item Let $e  = \lbrace u,v \rbrace \in E$, and without loss of generality assume that $a_e^1$ should improve. Then they and all other $2q-1$ applicants should be either matched to $p_v$ or $p_u$. Since $e$ is an edge, either $u$ or $v$ is not in the independent set and therefore there is at least one top choice applicant matched to the corresponding project that would be unmatched in case of all $2q$ edge applicants being matched. 
    \end{itemize}
    Thus no applicant can improve and $M$ is Pareto optimal. 
    
    \subsubsection{$\Leftarrow$}
    Now assume that we have a perfect Pareto optimal matching $M$ with exactly $k + {k \choose 2}$ closed projects. First, since $M$ is perfect (so every applicant needs to be matched) at most $k$ vertex projects can be closed. Next we notice that there are exactly ${k \choose 2}$ two color projects. 
    \begin{itemize}
        \item Since for any closed edge project $p_e$, the applicant $a_e$ would need to be matched to the corresponding two color project and the applicants $a_e^1, \dots, a_e^{2q}$ would need to be matched to a vertex project with no vertex applicants. This implies that in any perfect matching with $k + {k \choose 2}$ closed projects, all edge projects must be open since otherwise two of the $k + {k \choose 2}$ possible closed projects would be open.
        \item Due to the perfect requirement there can be at most one vertex project $p_{v_c}$ closed per color $c \in [k]$.
        \item Because we have a Pareto optimal matching there cannot be any two closed projects $p_{v_c}$ and $p_{v_d}$ with $c, d\in [k]$ with an edge $e$ between them, since otherwise matching $a_e$ to $p_{c,d}$, $a_e^1, \dots, a_e^q$ to $p_{v_c}$ and $a_e^q, \dots, a_e^{2q}$ to $p_{v_d}$ would dominate $M$.
    \end{itemize} Therefore the set of vertices corresponding to the closed projects in $M$ is a multicolored independent set.
\qed\end{proof} 
\begin{restatable}{thm}{domclosed}
	Given an instance $\mathcal{I}$ of \parvernsp, parameter $\mclosed$, and matching $M$, it is $\wone$-hard to decide whether matching exists that closes exactly $\mclosed$ projects and dominates~$M$.
    \label{thrm:dom_closed}
\end{restatable}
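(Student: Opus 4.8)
The plan is to reduce from \textsc{multicolored independent set}, reusing the gadget skeleton of Theorem~\ref{thrm:mapom_closed_appendix}. For a graph $G=(V,E)$ with color classes $V_1,\dots,V_k$ of size $q$ each, I would keep the vertex projects $p_v$ (lower and upper quota $q$), the edge projects $p_e$ (quota $2q+1$), the two-color projects $p_{\{c,d\}}$ (quota $1$), together with the three families of applicants $a_{v,u}$, the edge applicants $a_e^1,\dots,a_e^{2q}$, and the dummy applicants $a_e$, and I set $\mclosed = k + \binom{k}{2}$. The only structural difference is that for \parver I must additionally exhibit an input matching $M$ and reason about matchings that \emph{dominate} $M$, rather than about perfect Pareto optimal matchings.

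I would choose $M$ to be a feasible matching in which every occupant of a vertex project sits at its \emph{worst} acceptable project, so that emptying a vertex project can never directly worsen anyone: concretely $p_v$ holds the applicants $a_{w,v}$ (each at its worst choice), padded to quota $q$ by auxiliary filler applicants that have a private outside option, and the dummies $a_e$ are placed so that all two-color projects start closed. The intended dominating matching $M^{\ast}$ is the independent-set matching of Theorem~\ref{thrm:mapom_closed_appendix}: it closes the $k$ vertex projects of the independent set and all $\binom{k}{2}$ two-color projects, moves each $a_{w,v_c}$ to its top choice $p_w$ and frees the fillers, while leaving every other applicant exactly where $M$ placed it. Verifying that $M^{\ast}$ is feasible, closes exactly $\mclosed$ projects, and dominates $M$ (strictly improving the freed applicants while worsening none) would give the forward direction.

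For the converse, given any $M'$ that dominates $M$ and closes exactly $\mclosed$ projects, I would first replay the counting argument of Theorem~\ref{thrm:mapom_closed_appendix} to show that no edge project can be closed and that the closed projects must consist of exactly one vertex project per color together with all two-color projects. The independence of the chosen vertices is then enforced not by Pareto optimality (as in the previous proof) but by the exact-count requirement: the edge and two-color plumbing is arranged so that closing both endpoints $p_{v_c}$ and $p_{v_d}$ of an edge $e=\{v_c,v_d\}$ leaves the applicants $a_e^i$ with no weakly-better home unless the shared two-color project $p_{\{c,d\}}$ is opened in order to relocate the dummy $a_e$; opening it drops the number of closed projects below $\mclosed$, a contradiction. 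Hence the closed vertex projects form a multicolored independent set.

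The main obstacle is precisely this plumbing. Because a dominating matching may leave any worst-placed applicant untouched, closures cannot be forced by making someone strictly worse than in $M$, so the independence constraint must be carried entirely by the exact-count condition. I therefore have to tune the quotas and the placement of the edge applicants and dummies in $M$ so that simultaneously (i) $M$ is feasible even though the pure worst-choice assignment under-fills the vertex projects, (ii) the independent-set matching $M^{\ast}$ still dominates $M$ with nobody worse off, and (iii) closing two adjacent vertex projects provably forces a two-color project open. Reconciling (ii) and (iii)---the first wanting vertex occupants to be harmlessly relocatable, the second wanting adjacency to create an unavoidable overflow---is the delicate point, and it is where the filler applicants and the precise edge-project capacity $2q+1$ must be balanced.
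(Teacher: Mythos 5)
There is a genuine gap, and you have in fact identified it yourself without closing it: the mechanism that is supposed to enforce independence of the selected vertices does not exist in the construction you describe. In Theorem~\ref{thrm:mapom_closed_appendix} independence is enforced by Pareto optimality of the \emph{output} matching: if two adjacent vertex projects are closed, one can further improve by opening $p_{c,d}$ and splitting the edge applicants, contradicting Pareto optimality. In the \parver setting that argument is unavailable, because a matching $M'$ dominating $M$ need not itself be Pareto optimal. Your replacement mechanism --- that closing both $p_{v_c}$ and $p_{v_d}$ for an edge $e=\{v_c,v_d\}$ forces $p_{\{c,d\}}$ open and thus breaks the count --- does not follow from the gadget you inherit: the applicants $a_e^1,\dots,a_e^{2q}$ and the dummy $a_e$ all sit at $p_e$ in $M$ and may simply remain there in $M'$, so closing the two adjacent vertex projects imposes no constraint on them whatsoever. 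The filler applicants create a second unresolved problem: since preferences are strict and being unmatched is worse than any acceptable project, a filler occupying $p_v$ in $M$ can only be evicted to a \emph{strictly better} project; if such a project is independently openable then $M$ is trivially dominated and the instance degenerates to a yes-instance, and if it is not, you have not shown the eviction is ever possible. These are not presentational details --- they are the entire content of the reduction, and your proposal explicitly leaves them open.

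The paper avoids all of this by reducing from \textsc{multicolored clique} with a different, self-contained gadget: vertex and color projects of quota exactly $k-1$, edge projects of quota $2$, vertex applicants with list $p_c \succ p_v$, edge applicants with list $p_v \succ p_e$, the input matching $M$ placing everyone at their last choice, and $\mclosed = \binom{k}{2}$. There the clique structure is forced purely by capacity arithmetic: closing an edge project pushes its two applicants up into the two incident vertex projects, which pushes vertex applicants up into the color project; a color project absorbs exactly one vertex's worth of applicants, and a vertex project absorbs exactly $k-1$ incoming edge applicants, so each selected vertex must have $k-1$ selected neighbours. If you want to keep your independent-set route you would need to redesign the edge/two-color plumbing so that adjacency provably collides with the exact closure count; as written, the reduction does not go through.
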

\begin{proof}
    We reduce from the multicolored clique problem. For this we are again given a graph $G= (V,E)$ with a partition into color classes $V_1, \dots, V_k$ and our goal is to find a clique adhering to this partition. For simpler notation we assume that the graph induced by each color class is an independent set. This allows us to use clique and multicolored clique interchangeably and we can ensure that edges are always between two different colors in our construction. 
    
    \subsubsection{Construction}
    First as our projects we include 
    \begin{itemize}
        \item  for each $v \in V$, a \emph{vertex project} $p_v$ with lower and upper quota $k-1$;
        \item  for each edge $e \in E$, an \emph{edge project} $p_e$ with lower and upper quota $2$;
        \item  for each color $c \in [k]$, a \emph{color project} $p_c$ with lower and upper quota $k-1$. 
    \end{itemize}
    Our applicants will be the following.
    \begin{itemize}
        \item For each color $c \in [k]$ and vertex $v \in V_c$, we add $k-1$ \emph{vertex applicants} $a_v^1, \dots, a_v^{k-1}$, each with preference list $p_c \succ p_v$.
        \item  For each edge $e = \lbrace u,v \rbrace \in E$, we include two \emph{edge applicants}, $a_e^u$ with preference list $p_u \succ p_e$ and $a_e^v$ with preference list $p_v \succ p_e$.
    \end{itemize} 
    Our matching $M$ now matches all vertex applicants to their corresponding vertex project, \ie it matches all applicants in $\lbrace a_v^1, \dots, a_v^k\rbrace$ to $p_v$, and it matches all edge applicants to their edge project.
    Finally we set $\mclosed \coloneqq {k \choose 2}$ 
    
    \subsubsection{$\Rightarrow$}
    Let us assume we have a clique $C = \lbrace v_1, \dots, v_k\rbrace$ with $v_c \in V_c$ for $c \in [k]$.
    Then we take the matching $\Mp$ such that
    \begin{itemize}
        \item For any color $c\in[k]$ the vertex applicants $a_{v_c}^1, \dots, a_{v_c}^{k-1}$ are matched to $p_c$, thus improving over $M$ for all of them. 
        \item For any vertex $v \in V \setminus C$, \ie $v$ is not in the clique, we match the vertex applicants $a_{v}^1, \dots, a_{v}^{k-1}$ are matched to $p_v$.
        \item For any edge $e = (v,u)$ such that $v$ and $u$ are in the clique, we match $a_e^v$ to $p_v$ and $a_e^u$ to $p_u$, thus improving their matching.
        \item For any edge $e = (v,u)$ such that at least one of $v$ and $u$ is not in the clique, we match $a_e^v$ and $a_e^u$ to $p_e$.
    \end{itemize}
    The matching $\Mp$ does not make any applicant worse, adheres to the lower and upper quotas, due to each vertex in the clique having exactly $k-1$ neighbors in the clique and closes exactly the ${k \choose 2}$ projects corresponding to the edges between the vertices in the clique. 
    
    \subsubsection{$\Leftarrow$}
    Assume that we have a matching dominating $M$, which closes exactly ${k \choose 2}$ projects. It is easy to see that the only way to close ${k \choose 2}$ projects while simultaneously matching all applicants is to close ${k \choose 2}$ edge projects, matching the edge applicants to the corresponding vertex projects and the vertex applicants to the color projects. This however implies that all vertices corresponding to promoted vertex projects must have an edge to all the other vertices, thus forming a multicolored clique.  
\qed\end{proof}
\begin{restatable}{cor}{popvmclosed}
Given an instance $\mathcal{I}$ of \popvernsp, parameter $\mclosed$, and matching $M$, it is $\wone$-hard to decide whether matching exists that closes exactly $\mclosed$ projects and is more popular than~$M$.
    \label{corr:popv_mclosed}
\end{restatable}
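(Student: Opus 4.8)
The plan is to reuse verbatim the construction of Theorem~\ref{thrm:dom_closed}: the multicolored clique instance, the vertex, edge, and color projects with the same quotas, the vertex and edge applicants with their length-two preference lists, the matching $M$ that assigns every vertex applicant to its vertex project and every edge applicant to its edge project, and the parameter $\mclosed = {k \choose 2}$. The only thing that changes is the optimality notion, from domination to popularity. The forward direction is then free: domination always implies more-popularity, so the matching $\Mp$ built from a multicolored clique in Theorem~\ref{thrm:dom_closed} already closes exactly ${k \choose 2}$ projects and, being a dominating matching, is in particular more popular than $M$. Hence a clique yields a yes-instance of \popvernsp.

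For the reverse direction I would exploit a structural feature of $M$: every applicant has a preference list of length two and is matched in $M$ to her \emph{second} (and last) acceptable project. Consequently the only way an applicant can be worse off in some $\Mp$ than in $M$ is to be left unmatched, and ``$\Mp$ more popular than $M$'' reduces to the clean inequality (number of applicants reaching their first choice in $\Mp$) $>$ (number of applicants left unmatched by $\Mp$). To force this notion to coincide with domination I would pad the instance with dummy applicants tied to the status quo, exactly as in the \popvernsp adaptation of Theorem~\ref{thrm:w1_par_mopen}, so that unmatching even a single real applicant can no longer be compensated by improvements elsewhere. Once no more-popular matching can leave a real applicant unmatched, any matching that is more popular than $M$ makes nobody worse off and somebody better off, \ie it dominates $M$; applying the reverse direction of Theorem~\ref{thrm:dom_closed} to such a matching that in addition closes exactly ${k \choose 2}$ projects then produces the desired multicolored clique.

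The main obstacle is verifying that this dummy padding really eliminates every spurious more-popular matching, \ie that no matching closing exactly ${k \choose 2}$ projects can gain a positive vote margin by unmatching some applicants while opening extra projects elsewhere. Here I would lean on the all-or-nothing quota structure of the construction: every project is either empty or exactly filled to its (equal) lower and upper quota, so a color project can be opened only by moving one full vertex's worth of vertex applicants onto it, and an edge project can be closed only by promoting both of its edge applicants to their first choices. A careful vote count, together with the requirement of exactly ${k \choose 2}$ closed projects, should then pin the admissible more-popular matchings down to precisely those that select one vertex per color and close exactly the ${k \choose 2}$ edge projects between the selected vertices, reproducing the clique argument of Theorem~\ref{thrm:dom_closed} and completing the reduction.
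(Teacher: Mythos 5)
Your forward direction is fine: the dominating matching built from a clique in Theorem~\ref{thrm:dom_closed} is in particular more popular than $M$ and closes exactly ${k \choose 2}$ projects. The gap is in the reverse direction, and it is real. On the unmodified construction there are spurious more-popular matchings that close exactly ${k \choose 2}$ projects even when no multicolored clique exists: pick an arbitrary vertex $v_c \in V_c$ for each color $c$, move its $k-1$ vertex applicants up to $p_c$ (gaining $k(k-1)$ votes and closing the $k$ projects $p_{v_1},\dots,p_{v_k}$), and then close any ${k \choose 2}-k$ edge projects outright, leaving their $2\bigl({k \choose 2}-k\bigr)=k(k-1)-2k$ edge applicants unmatched. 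The vote margin is $k(k-1)-\bigl(k(k-1)-2k\bigr)=2k>0$ and exactly ${k\choose 2}$ projects are closed, with no clique anywhere in sight. So ``more popular'' does not reduce to ``dominating'' here, and the padding you invoke is not a routine transplant: in Theorem~\ref{thrm:w1_par_mopen} the dummies all sit on a single tight last-resort project, so any defection forces that one project closed and unmatches all dummies at once, whereas here the applicants you need to protect are scattered over many edge and vertex projects; attaching dummies to each of them changes the closed-project count and the vote arithmetic, and you have not shown it blocks the attack above (which closes whole edge projects, something no local dummy gadget prevents).

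The paper takes a genuinely different route that you would need to discover or reinvent: it replaces the $k$ color projects by a \emph{single} selection project $p_s$ with lower quota $k(k-1)$, appended as the \emph{last} choice of every vertex applicant, together with one selection applicant $a_s$ whose only acceptable project is $p_s$. In the intended more-popular matching the $k(k-1)$ vertex applicants of the clique vertices move \emph{down} to $p_s$ (each $-1$), the $k(k-1)$ edge applicants of the clique edges move up to the vacated vertex projects (each $+1$), and $a_s$ becomes matched ($+1$), for a net margin of exactly $+1$; this matching is not a dominating matching at all. Because the vertex and edge votes cancel, the only way to achieve a positive margin is to match $a_s$, which forces $p_s$ open with $k(k-1)$ vertex applicants, which in turn forces $k(k-1)$ edge applicants to be promoted, and together with the requirement of exactly ${k \choose 2}$ closed projects this pins down a clique. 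That single ``$+1$'' gadget is the missing idea in your proposal.
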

\begin{proof}
Instead of having $k$ color projects, we just have one selection project $p_s$ with lower quota $k(k-1)$ which is ranked lowest by all vertex projects and we add one selection applicant $a_s$ who only ranks $p_s$. Then the matching $M$ from the last proof 
is obviously less popular than the matching $\Mp$ assigning 
all the edge applicants as in the last proof, while all the vertex applicants and the selection applicant to the selection project. 
Furthermore since the votes of vertex and edge projects cancel each other out, the only way to get a more popular matching closing exactly ${k \choose 2}$ projects is to match the selection gadget and also $k(k-1)$ edge applicants, which just as in the last proof implies the existence of a clique.
\qed\end{proof}
\begin{restatable}{thm}{popmclosed}
	 Given an instance $\mathcal{I}$ of \popular and parameter $\mclosed$, it is $\wone$-hard to decide whether a popular matching exists that closes exactly $\mclosed$ projects, even if $\mathcal{I}$ does 
    not admit any other popular matching.
    \label{thrm:pop_mclosed}
\end{restatable}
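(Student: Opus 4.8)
The plan is to reduce from \textsc{multicolored clique}, reusing the vertex- and edge-gadgets from the proofs of Theorem~\ref{thrm:dom_closed} and Corollary~\ref{corr:popv_mclosed} and coupling them with a copy of the Condorcet-cycle gadget of Observation~\ref{obs:cycl_low3}. As before, I would introduce a vertex project $p_v$ of quota $k-1$ for each $v \in V$, an edge project $p_e$ of quota $2$ for each $e \in E$, together with the corresponding vertex and edge applicants whose preferences encode that the applicants of a chosen clique can be promoted off the edge projects lying inside the clique, closing exactly those $\binom{k}{2}$ edge projects. The new difficulty relative to the verification statements is that here we must arrange the instance so that a \emph{popular} matching exists precisely when a clique exists, rather than merely comparing one matching against the all-to-base matching.

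To achieve this I would attach Condorcet-cycle gadgets (three projects of lower and upper quota $3$ together with three cyclically shifted applicants, exactly as in Observation~\ref{obs:cycl_low3}) to the selection part of the construction. Their role is the same as in the \coNP-hardness reduction behind Theorem~\ref{thrm:popv_lq3}: a cyclic gadget admits no popular completion, so its presence forces every popular matching into a rigid, ``locked'' configuration. I would wire the cycles so that the only matchings escaping this instability are those that open exactly one vertex project per color class and match each edge applicant either to one of its two endpoint projects or to its own edge project; any matching violating this pattern is beaten by a cyclic shift inside the gadget and hence cannot be popular.

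With the structure pinned down I would argue both directions. For the forward direction, given a multicolored clique $v_1,\dots,v_k$ the matching $\Mp$ that promotes the clique's vertex applicants to the selection/color projects, reassigns the clique-edge applicants to their endpoint vertex projects, and leaves all other applicants untouched is feasible and closes exactly the $\binom{k}{2}$ clique-edge projects; using the vote-cancellation bookkeeping already developed for Corollary~\ref{corr:popv_mclosed}, I would check that any attempt to promote a further applicant makes a strictly larger set of applicants worse, so $\Mp$ is popular, and that the locked configuration admits no other feasible completion, giving uniqueness. For the backward direction, I would show that if the vertices selected by \emph{any} candidate popular matching fail to be pairwise adjacent, then a missing edge lets us reassign its two edge applicants to beat the matching, while every configuration inconsistent with a one-vertex-per-color selection is already beaten by the Condorcet gadget; thus no popular matching exists at all. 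This simultaneously establishes $\wone$-hardness with $\mclosed = \binom{k}{2}$ and the strengthening that the produced instances never admit another popular matching.

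The main obstacle is precisely this backward direction: since popularity quantifies over \emph{all} matchings, I must verify that in a no-instance every matching—not only the clique-shaped candidates—is beaten by some competitor. The delicate coordination is ensuring that both kinds of deviation (a wrong selection pattern, refuted by the Condorcet cycle, and a correct pattern whose selected vertices are non-adjacent, refuted by the edge gadget) are each handled, without the two gadget families interfering so as to accidentally create a popular matching in the no-case or to destroy the popularity of $\Mp$ in the yes-case.
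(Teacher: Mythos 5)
Your high-level plan matches the paper's: reduce from \textsc{multicolored clique}, keep the vertex/edge selection gadgets from the $\mclosed$-parameterized verification results, and add ``no popular matching exists here'' gadgets whose only escape routes force every popular matching into the clique-encoding shape. However, the proposal leaves the central technical device unbuilt, and the gadget you name is the wrong one for the job.

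First, you invoke Observation~\ref{obs:cycl_low3} (three projects of lower and upper quota $3$ with cyclically shifted applicants). In that gadget all three applicants must be assigned to the \emph{same} project, so there is no natural way to give one of them an outside option that defuses the instability; a standalone copy simply makes the whole instance a no-instance. The paper instead uses Observation~\ref{obs:condorcet} (three quota-$1$ projects, three applicants with \emph{identical} lists): two ``common'' dummy applicants list only the three dummy projects, while a third applicant lists a real project first and the dummy projects as fallback. If the third applicant fails to get her real project, the three of them form the no-popular-matching configuration, so every popular matching must match her to the real project. Replicating this for vertex projects, color projects, vertex applicants, and edge applicants is exactly the ``wiring'' you describe as the main obstacle but do not supply, and it is where essentially all of the proof lives: it forces every vertex and color project to be open, every vertex applicant and edge applicant to be matched, and it changes the quotas (the paper needs vertex and color projects of quota $k$, not $k-1$, to absorb the dummy applicants). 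Your parameter $\mclosed = \binom{k}{2}$ also omits the dummy projects that remain closed in the intended popular matching (the paper uses $\binom{k}{2}+3$), and both directions of the equivalence --- the vote-cancellation argument showing the clique matching is popular, and the argument that a non-adjacent selection or a mis-shaped matching is always beaten --- are asserted rather than carried out. As it stands the proposal is a correct outline of the paper's strategy, but the forcing mechanism, which is the genuinely new content of this theorem relative to Theorem~\ref{thrm:dom_closed} and Corollary~\ref{corr:popv_mclosed}, is missing.
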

\begin{proof}
    We again reduce from the multicolored clique problem, where we are given a graph $G = (V,E)$ with partition $V_1, \dots, V_k$ of the vertices of~$G$.
    
    \subsubsection{Construction}
    As our main projects we include
    \begin{itemize}
        \item for each color  $c \in [k]$  a \emph{color project} $p_c$ with lower and upper quota $k$;
        \item for each vertex $v \in V$ a \emph{vertex project} $p_v$ with lower and upper quota $k$;
        \item for each vertex $v \in V$ a \emph{vertex project} $p_v$ with lower and upper quota $k$;
        \item for each edge $e\in E$ an \emph{edge project} $p_e$ with lower and upper quota $2$.
    \end{itemize}
    Further as our \emph{dummy projects} we include
    \begin{itemize}
        \item three \emph{dummy color projects} $p_{C,1}, p_{C,2}, p_{C,3}$;
        \item six \emph{dummy vertex projects}  $p_{V,1}, p_{V,2}, p_{V,3}, p_{V,4}, p_{V,5}, p_{V,6}$;
        \item three \emph{dummy edge projects} $p_{E,1}, p_{E,2}, p_{E,3}$,
    \end{itemize}
     with all dummy projects having a lower and upper quota of $1$. 
     
     Our non-dummy applicants are as follows.
     \begin{itemize}
         \item For each color $c\in[k]$ and vertex $v \in V_c$ we add $k-1$ \emph{vertex applicants} $a_v^1, \dots, a_v^{k-1}$ with preference list $p_c \succ p_v \succ p_{V,4} \succ p_{V,5} \succ p_{V,6}$.
         \item For each edge $e = \lbrace u,v \rbrace$ we add two \emph{edge applicants} $a_e^u$ with preference list $p_u \succ p_e \succ p_{E,1} \succ p_{E,2} \succ p_{E,3} $ and $a_e^v$ with preference list $p_v \succ p_e \succ p_{E,1} \succ p_{E,2} \succ p_{E,3} $.
     \end{itemize}
     Finally our dummy applicants will be as follows.
     \begin{itemize}
         \item For each color $c\in[k]$ and vertex $v \in V_c$ we add one \emph{vertex dummy  applicant} $a_v^D$ with preference list $p_v \succ p_{V,1} \succ p_{V,2} \succ p_{V,3}$.
         \item For each color $c \in [k]$ we add one \emph{color dummy applicant} $a_c^D$ with preference list $p_c \succ p_{C,1} \succ p_{C,2} \succ p_{C,3}$.
         \item We add two \emph{common color dummy applicants} $a_{C,1}, a_{C,2}$ with preference list $p_{C,1} \succ p_{C,2} \succ p_{C,3}$.
         \item We add two \emph{common vertex dummy applicants} $a_{V,1}, a_{V,2}$ with preference list $p_{V,1} \succ p_{V,2} \succ p_{V,3}$.
         \item We add two \emph{common enforcing vertex dummy applicants} $a_{V,3}, a_{V,4}$ with preference list $p_{V,4} \succ p_{V,5} \succ p_{V,6}$. 
         \item Finally we add two \emph{common edge dummy applicants} $a_{E,1}, a_{E,2}$ with preference list $p_{E,1} \succ p_{E,2} \succ p_{E,3}$.
     \end{itemize}
     This construction heavily relies on Observation~\ref{obs:condorcet}. The vertex dummy applicants guarantee that in each popular matching each vertex project must be open and the dummy color applicants guarantee that each color project must be open. Similarly the common enforcing vertex dummy applicants force every vertex applicant to be matched and the common edge dummy applicants force every dummy applicant to be matched.
    Now we set $\mclosed \coloneqq {k \choose 2} + 3.$
    
    \subsubsection{$\Rightarrow$}
    First assume that we have a clique $ C = \lbrace v_1, \dots, v_k \rbrace$, with $v_c \in V_c$ for $c \in [k]$.
    Now we construct a matching in the following way.
    \begin{itemize}
        \item For each $v_c \in C$ we match all $k-1$ vertex applicants of $v_c$ as well as $a_c^D$ to $p_c$.
        \item Furthermore for each $v_c \in C$ and any edge $e = \lbrace v_c, v_i \rbrace$ with $v_i \in C$, we match $a_e^{v_c}$ to $p_v$. Adding on to this we also match $a_{v_c}^D$ to $p_v$.
        \item For each $v \in V \setminus C$ we match all vertex applicants as well as the vertex dummy applicant belonging to $v$ to $p_v$.
        \item For each edge, where at least of the two end vertices is not in the clique we match the edge to the corresponding edge project.
        \item Finally turning to the common dummy applicants for $* \in \lbrace C, V, E\rbrace$ we match $a_{*,1}$ to $p_{*,1}$ and $a_{*,2}$ to $p_{*,2}$.
    \end{itemize}
    
    \subsubsection{Popularity}
    Note that the feasibility of the matching immediately follows from the fact that we have a clique of size $k$. If any vertex applicant improves by being matched to a color project, then there has to be at least one other vertex applicant being unmatched from the color project, thus canceling out their votes. Similarly if any edge applicant matched to an edge project wants to improve, either a vertex applicant needs to be unmatched or an edge applicant previously matched to a vertex applicant now needs to be matched to an edge applicant, also canceling out their vote. The last type of applicant not matched to their top choice are three of the common dummy applicants, but for them to improve the other common dummy applicant would need to get worse, thus also canceling out their vote. Therefore the matching is popular. 
    
    \subsubsection{$\Leftarrow$}
    Next assume that we have a popular matching. Then the common dummy applicants enforce that each color and vertex project must be filled, each vertex project must be matched and each edge project must be matched to either a vertex or an edge project. Thus there must be $k$ vertices $v_1, \dots, v_k$ matched to a color project. Let us assume that vertices $v_c$ and $v_d$ for $c \neq d \in [k]$ do not have an edge between each other. As observed earlier the dummy applicants enforce that the vertex projects must be open, and thus there must be at least one edge applicant $a_e^v$ matched to $p_{v_c}$ such that other the end vertex of $e$ is not in $v_1, \dots, v_k$. But since all dummy applicants and vertex applicants must be matched, and therefore all vertex projects must already be full, the edge applicant must be unmatched and thus the matching would be unpopular.
    Therefore $v_1, \dots, v_k$ must be a clique.
\qed\end{proof}

\end{document}